\newtheorem{theorem}{Theorem}[section]
\numberwithin{theorem}{section}
\newtheorem{corollary}[theorem]{Corollary}
\newtheorem{lemma}[theorem]{Lemma}
\newtheorem{definition}[theorem]{Definition}
\theoremstyle{definition}
\numberwithin{equation}{section}
\newcommand{\E}{\mathbb{E}}
\newcommand{\G}{\mathbb{G}}
\newcommand{\PP}{{\bf P}}
\newcommand{\R}{\mathbb{R}}
\newcommand{\Z}{\mathbb{Z}}
\newcommand{\N}{\mathbb{N}}
\newcommand{\HH}{\mathcal{H}}
\newcommand{\EEE}{\mathcal{E}}
\newcommand{\WW}{{\bf W}}
\newcommand{\GG}{{\bf G}}
\newcommand{\TT}{{\bf T}}
\newcommand{\SSS}{{\bf S}}
\newcommand{\KK}{\mathcal{K}}
\newcommand{\I}{\mathbb{I}}
\newcommand{\Ss} {{\mathcal S}}
\newcommand{\PPP} {{\mathbb P}}
\newcommand{\RRR} {{\mathcal R}}
\newcommand{\XX}{{\bf X}}
\newcommand{\YY}{{\bf Y}}
\newcommand{\WWW}{\mathcal{W}}
\newcommand{\QQ} {{\mathcal Q}}
\newcommand{\uu}{{\bf u}}
\newcommand{\ww}{{\bf w}}
\newcommand{\UU}{{\bf U}}
\newcommand{\VV}{{\bf V}}
\newcommand{\FFF}{{\bf F}}
\newcommand{\ZZ}{{\bf Z}}
\newcommand{\HHH}{{\bf H}}
\def\FF{\mathcal F}
\def\KK{\mathcal K}
\def\LL{\mathcal L}
\def\GGG{\mathcal G}
\def\diag{\mathrm{diag}}
\def\hhh{\mathrm{h}}
\def\reg{\mathrm{reg}}
\def\cut{\mathrm{cut}}
\def\tw{\mathrm{tw}}
\def \P {{\mathcal P}}
\def \_reg {\rightarrow_{\bf reg}}
\def\maxdeg/{\Delta}
\newcommand{\du}{\mathrm{d}}
\begin{document}

\title{\textbf{On the Complexity of Nondeterministically Testable Hypergraph Parameters}}
\author{Marek Karpinski\thanks{Dept. of Computer Science and the Hausdroff Center for Mathematics, University of Bonn. Supported in part by DFG grants, the Hausdorff grant EXC59-1/2. Research partly supported by Microsoft Research New England. E-mail: \textrm{marek@cs.uni-bonn.de}}
\and 
Roland Mark\'o\thanks{Hausdorff Center for Mathematics, University of Bonn. Supported in part by a Hausdorff scholarship. E-mail: \textrm{roland.marko@hcm.uni-bonn.de}}}

\date{}
\maketitle
\begin{abstract}

{\normalsize The paper proves the equivalence of the notions of nondeterministic and deterministic parameter testing for uniform dense hypergraphs of arbitrary order. It generalizes the result previously  known only for the case of simple graphs. By a similar method we establish also the equivalence between nondeterministic and deterministic hypergraph property testing, answering the open problem in the area. We introduce a new notion of a cut norm for hypergraphs of higher order, and employ regularity techniques combined with the ultralimit method.}
\end{abstract}

\section{Introduction}

Hypergraph parameters are real-valued functions defined on the space of uniform hypergraphs of some given order invariant under relabeling the vertex set. Testing a parameter value associated to an instance in the dense model means to produce an estimation by only having access to a small portion of the data that describes it. The test data is selected by choosing a uniform random subset of the vertex set and exposing the induced substructure of the hypergraph on this subset. A certain parameter is said to be testable if for every given tolerated error the estimation is within the error range of the parameter value with high probability, and the size of the selected random subset does only depend on the size of this permitted error and not on the size of the instance, precise definitions are provided below. Similar notions apply to testing graph properties, in that situation one also uses uniform sampling in order to separate the cases where an instance has the property or is far from having it, where the distance is measured by the number of edge  modifications required. For the related notions of approximation theory and limits see \cite{AVKK2}, \cite{AKK2}, and \cite{BCL}. The general reader is referred to \cite{KM1}, \cite{L}, and \cite{LSzreg} for some related developments.

The notion nondeterministic testability was introduced by \citet{LV} in the framework of graph property testing, and encompasses an a priori weaker characteristic than the original testability. They defined that a certain property is nondeterministically testable if there exists another property of colored (edge or node) graphs that is testable in the normal sense and serves as a certificate for the original. It was shown by the authors of \cite{LV} that for graph properties the two notions are equivalent, demonstrating that if a property is nondeterministically testable, then it is also testable. Their proof used the machinery of graph limits and for this reason it was of non-effective nature. Subsequently, an explicit construction of a tester was given by \citet{GS} for nondeterministically testable graph properties containing the tester of the colored witness property as a subroutine. They used Szemer\'edi's Regularity Lemma combined with developments by \citet{AFNS}, and provided a tower-type dependence between the sample complexity of the investigated property the sample complexity of the witness property. 

In \cite{LV}, additionally the study nondeterministic testing for parameters was initiated, the definition is similar to the property testing situation. A different  approach by \citet{KM2} relying on weaker regularity methods led to an effective upper bound on the sample complexity that is a $3$-fold iteration of the exponential function applied to the sample size required by the witness parameter. 

The previous works mentioned above dealt with graphs, it was asked in \cite{LV} if the concept can be employed for hypergraphs. The notion of an $r$-uniform hypergraph (in short, $r$-graph) parameter and its testability can be defined completely analogously to the graph case, the same applies for nondeterministic testability. Naturally, first the question arises whether or not the deterministic and the nondeterministic testability are equivalent for higher order hypergraphs, and secondly, if the answer to the first question is positive, then what can be said about the relationship of the sample complexity of the parameter and that of its witness parameter. The statements that are analogous to the main results of \cite{GS}, \cite{KM2}, and \cite{LV} do not follow immediately for uniform hypergraphs of higher order from the proof for graphs, like-wise to the generalizations of the Regularity Lemma new tools and notions are required to handle these cases. In the current paper we prove the equivalence of the two testability notions for uniform hypergraphs of higher order and settle the first question posed above. Unfortunately, we were not able to obtain an explicit upper bound for the sample complexity, this is the consequence of us applying of the limit theory for hypergraphs developed by \citet{ESz} using methods of non-standard analysis, therefore the second problem still remains open. We also show that testing nondeterministically testable properties is as hard as parameter testing with our method in the sense that the same complexity bounds apply.

The paper is organized as follows. In \Cref{ch6:sec.prel} we give the preliminaries required and formulate the precise definitions followed by our main result, \Cref{ch6:main}. \Cref{ch6:sec.cut} contains the testability results for $r$-cut norms together with a brief summary of the notions and results regarding the ultralimit method that are needed for our purposes, \Cref{ch6:sec.aux} comprises some auxiliary results required. \Cref{ch6:sec.main} describes the proof of our main result. In \Cref{ch6:sec.prop} we give an application for property testing of hypergraphs, and in \Cref{ch6:sec.further} we pose some questions related to possible further research.

\section{Preliminaries and main result}\label{ch6:sec.prel}


A simple $r$-uniform hypergraph on $n$ vertices is a subset $G$ of ${[n] \choose r}$, the size of $G$ is $n$, and the elements of ${[n] \choose r}$ are $r$-edges. Let $A_G$ denote the symmetric $\{0,1\}$-valued $r$-array or symmetric subset of $[n]^r$ that represents $G$, we will sometimes use also only the term $G$ to refer to a symmetric subset of $[n]^r \setminus \diag([n]^r)$ corresponding to the array representation. Let $k$ be a positive integer, and let $\GGG_n^{r,k}$ denote the set of {\it $k$-colored} $r$-uniform hypergraphs of size $n$, that are partitions $\GG=(G^\alpha)_{\alpha \in [k]}$ of ${[n] \choose r}$ into $k$ classes, so in all what follows here colored $r$-graph means a complete $r$-graph where to each edge $e$ we assign exactly one color $\GG(e)$ from the set $[k]$. In this sense simple $r$-graphs are regarded as $2$-colored. In the $k$-colored case is also possible to speak about the array representation, $A_{G^\alpha}$ stands for the symmetric $\{0,1\}$-valued $r$-array that represents the color class of $\alpha$, again with slight abuse of notation we will use $G^\alpha$ for $A_{G^\alpha}$. Additionally we have to introduce the color $\iota$ and the corresponding array $A_\iota$ that always is the indicator array of the set of diagonal elements of $[n]^r$ (those having repetitions in their coordinates, denoted by $\diag([n]^r)$). For any finite set $C$ the term $C$-colored graph is defined analogously.

A {\it $k$-coloring} of a  $t$-colored $r$-graph $\GG=(G^\alpha)_{\alpha \in [t]}$ is a $tk$-colored $r$-graph $\hat \GG=(G^{(\alpha,\beta)})_{\alpha \in [t], \beta\in [k] }$ with colors from the set $[t] \times [k]$, where each of the original color classes indexed by $\alpha \in [t]$ is retrieved by taking the union of the new classes corresponding to $(\alpha,\beta)$ over all $\beta \in [k]$, that is $G^\alpha=\cup_{\beta \in [k]}G^{(\alpha,\beta)}$. This last operation is called {\it $k$-discoloring} of a $[t] \times [k]$-colored graph, we denote it by $[\hat \GG,k]=\GG$.
We will sometimes write $tk$-colored for $[t] \times [k]$-colored graphs when it is clear from the context what we mean. 

Further, for  a finite set $S$, let $\hhh(S)$ denote the set of nonempty subsets of $S$, 
 and $\hhh(S,m)$ the set of nonempty subsets of $S$ of cardinality at most $m$. A real $2^r-1$-dimensional vector $x_{\hhh(S)}$ denotes $(x_{T_1}, \dots, x_{{T_{2^r-1}}})$, where $T_1, \dots, T_{2^r-1}$ is a fixed ordering of the nonempty subsets of $S$ with $T_{2^r-1}=S$, for a permutation $\pi$ of the elements of $S$ the vector  $x_{\pi(\hhh(S))}$ means $(x_{\pi'(T_1)}, \dots, x_{\pi'({T_{2^r-1}})})$, where $\pi'$ is the action on the subsets of $S$ induced by $\pi$. 

We will require some basic notation from graph limit theory, and we summarize their relevance outlined in previous works, \citet{L} is a comprehensive reference for the area.

Let $q \geq 1$ and $\GG \in \GGG_n^{r,k}$, then  $\G(q,\GG)$ denotes
the  random $r$-graph on $q$ vertices that is obtained by uniformly picking a random subset $S$ of $[n]$ of cardinality $q$ and taking the induced subgraph $\GG[S]$. For any $\FFF \in \GGG_q^{r,k}$ and $\GG \in \GGG^{r,k}$ the $\FFF$-density of $\GG$ is defined as $t(\FFF,\GG)=\PPP(\FFF = \G(q, \GG))$.

Let the $r$-kernel space $\mathcal{W}_0^r$ denote the space of the bounded measurable functions $W\colon [0,1]^{\hhh([r],r-1)} \to \R$, and the subspace $\mathcal{W}^r$ of $\mathcal{W}_0^r$ symmetric $r$-kernels that are invariant under coordinate permutations induced by $\pi \in S_r$, that is $W(x_{\hhh([r],r-1)})=W(x_{\pi(\hhh([r],r-1))})$ for each $\pi \in S_r$. We will refer to this invariance in the paper both for $r$-kernels and for measurable subsets of $[0,1]^{\hhh([r])}$ as {\it satisfying the usual symmetries}.
 Assume that the functions $W\in \mathcal{W}^r_{I}$ take their values in the interval $I$, for $I=[0,1]$ we call these special symmetric $r$-kernels  {\it{$r$-graphons}.} In what follows, $\lambda$ always denotes the usual Lebesgue measure in $\R^d$, where $d$ is everywhere clear from the context.

Analogously to the graph case we define the space of {\it $k$-colored $r$-graphons} $\mathcal W^{r,k}$ whose elements are referred to as $\WW=(W^\alpha)_{\alpha \in [k]}$ with each of the $W^\alpha$'s being an $r$-graphon. The special color $\iota$ that stands for the absence of any colors in the diagonal in some sense can be also employed in this setting, see below for the case when we represent a $k$-colored $r$-graph as a graphon. The corresponding $r$-graphon $W^\iota$ is $\{0,1\}$-valued. Furthermore $\sum_{\alpha \in [k]}W^\alpha(x)=1-W^{\iota}(x)$ everywhere on $[0,1]^{\hhh([r],r-1)}$. For  $x \in [0,1]^{\hhh([r])}$ the expression $\WW(x)$ denotes the color at $x$, we have $\WW(x)=\alpha$ whenever  $\sum_{i=1}^{\alpha-1} W^i(x_{\hhh([r],r-1)}) \leq x_{[r]} \leq \sum_{i=1}^{\alpha} W^i(x_{\hhh([r],r-1)})$.
  
Similar to the  finitary case, a {\it $k$-coloring} of a $\WW \in \mathcal W^{r,k}$ is a $tk$-colored $r$-graphon $\hat \WW=(W^{(\alpha,\beta)})_{\alpha \in [t], \beta\in [k] }$ with colors from the set $[t] \times [k]$ so that $\sum_{\alpha \in [t], \beta \in [k]} W^{(\alpha, \beta)}(x)=W^\alpha(x)$ for each $x \in [0,1]^{\hhh([r],r-1)}$ and $\alpha \in [t]$. The $k$-discoloring $[\hat \WW,k]$ of $\hat \WW$ and the term $C$-colored graphon is defined analogously, and simple $r$-graphons are treated as $2$-colored. 

For $q \geq 1$ and $\WW \in \mathcal W^{r,k}$ the random $[k] \cup \{\iota\}$-colored $r$-graph $\G(q, \WW)$ is generated as follows. The vertex set of $\G(q, \WW)$ is $[q]$, we have to pick uniformly a random point $(X_S)_{S \in \hhh([q], r-1)} \in [0,1]^{\hhh([q], r-1)}$, then conditioned on this choice we conduct independent trials to determine the color of each edge $e \in {[q] \choose r}$ with the distribution given by $\PPP_e(\G(q, \WW)(e) = \alpha) = W^\alpha(X_{\hhh(e,r-1)})$ corresponding to $e$. Recall that $\iota$ is a special color which we want to avoid in most cases, therefore we will highlight the conditions imposed on the above random variables so that $\G(q, \WW) \in \GGG^{r,k}$.

 For $\FFF \in \GGG_q^{r,k}$ the $\FFF$-density of $\WW$ is defined as $t(\FFF,\WW)=\PPP(\FFF = \G(q, \WW))$, which can be written following the above description of the random graph as 
\begin{align*}
t(\FFF,\WW)= \int_{ [0,1]^{\hhh([q], r-1)} } \prod_{e \in {[q] \choose r}} W^{\FFF(e)}(x_{\hhh(e,r-1)}) \du \lambda(x).
\end{align*}

The above notions were introduced in order to provide a concise representation for the limit space of $r$-graphs in \cite{ESz} and \cite{LSzlim}, in the current work we will not draw on this development explicitly but mention their relevance here. In a nutshell, a sequence of $r$-graphs converges if the corresponding numerical $\FFF$-density sequences converge for all $r$-graphs $\FFF$. One of the main results of \cite{LSzlim} for graphs and \cite{ESz} in the general case is that for every convergent sequence of $r$-graphs there exists an $r$-graphon they converge to in the sense that the $\FFF$-densities approach the $\FFF$-density of the limiting $r$-graphon. This was later reproved by \cite{Z} for general $r$ with purely combinatorial methods that are similar to concepts employed in the current paper.

We can associate to each $\GG \in \GGG_n^{r, k}$ an element $\WW_\GG \in \mathcal W^{r,k}$ by subdividing the unit cube $[0,1]^{\hhh([r],1)}$ into $n^r$ small cubes the natural way and defining the function $W' : [0,1]^{\hhh([r],1)} \to [k]$ that takes the value $\GG(\{i_1, \dots, i_r\})$ on $[\frac{i_1-1}{n}, \frac{i_1}{n}] \times \dots \times [\frac{i_r-1}{n}, \frac{i_r}{n}]$ for distinct $i_1, \dots, i_r$, and the value $\iota$ on the remaining diagonal cubes. Then set $(\WW_\GG)^\alpha(x_{\hhh([r],r-1)}) = \I(W'(p_{\hhh([r],1)}(x_{\hhh([r],r-1)}))=\alpha)$ for each $\alpha \in [k] \cup\{\iota\}$, where $p_{\hhh([r],1)}$ is the projection to the suitable coordinates. Note that 
\begin{align}\label{ch6:eq22}
|t(\FFF,\GG)-t(\FFF,\WW_\GG)|\leq \frac{{q \choose 2}}{n - {q \choose 2}}
\end{align}
 for each $\FFF \in \GGG_q^{r,k}$, hence the previous 
representation is compatible in the sense that $\lim_{n \to \infty} t(\FFF,\GG_n)=\lim_{n \to \infty} t(\FFF,\WW_{\GG_n})$ for any sequence $\{\GG_n\}_{n=1}^\infty$ with $|V(\GG_n)|$ tending to infinity.

We proceed by providing the necessary formal definitions of the parameter testability in the dense hypergraph model. 

\begin{definition}\label{ch6:def.partest}
An $r$-graph parameter $f$ is testable if for any $\varepsilon>0$ there exists a positive integer $q_f(\varepsilon)$ such that for any simple $r$-graph $G$ with at least $q_f(\varepsilon)$ nodes we have that
\begin{align*}
\PP(|f(G)-f(\G(q_f(\varepsilon),G)| >\varepsilon) < \varepsilon.
\end{align*}
The smallest function $q_f$ satisfying the previous inequality is called the sample complexity of $f$.
The testability of parameters of $k$-colored $r$-graphs is defined analogously.
\end{definition}

An a priori weaker characteristic than the one above, nondeterministic testability, is the second cornerstone of the current work, and was introduced in \cite{LV}.

\begin{definition}\label{ch6:def.ndpartest}
An $r$-graph parameter $f$ is non-deterministically testable if there exist an integer $k$ and a testable $2k$-colored directed $r$-graph parameter $g$ called witness such that for any simple graph $G$ the value $f(G)=\max_{\GG} g(\GG)$ where the maximum goes over the set of $k$-colorings of $G$ (regarded as an element of $\GGG^{r,2}$). 
\end{definition}

Originally in \cite{LV}, the witness parameter was a function of $k$-colored graphs, and the maximum was taken over the set of $(k,m)$-colorings of the original graph in order to determine the parameter value, meaning that the present edges are colored by elements of $[m]$, absent ones by the remaining colors from $[k] \setminus [m]$. Our modification is equivalent to that setting and is motivated by notational purposes.


In the current paper we only deal with undirected structures, but similar results can be obtained when the witness parameter is defined on the space of directed $r$-graphs. In this case, in order to obtain $G$ from $\GG$ as above after the discoloring we additionally have to undirect the edges and neglect multiplicities created by the former operation.

The maximization expression  in \Cref{ch6:def.ndpartest} is somewhat arbitrary and could be replaced for example by minimization, this would however not affect the testability characteristic of the parameter. Our main result extends the equivalence of the two testability notions for arbitrary $r$, this was first proved by \citet{LV} for $r=2$.

\begin{theorem}\label{ch6:main}
Every non-deterministically testable $r$-graph parameter $f$ is testable.
\end{theorem}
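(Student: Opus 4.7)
My plan is to reduce the theorem to showing that a natural extension of $f$ to the space of $r$-graphons is continuous in the $r$-cut metric, and then exploit compactness of that space obtained via the ultralimit machinery of Elek--Szegedy. First I would invoke the characterization that is the purpose of \Cref{ch6:sec.cut}: an $r$-graph parameter is testable if and only if it admits an extension to $\mathcal W^{r,2}$ that is continuous with respect to the $r$-cut metric. My task thus reduces to producing such a continuous extension $\tilde f$ of $f$.

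Since the witness parameter $g$ is testable, by the same characterization it has a continuous extension $\tilde g$ to $\mathcal W^{r,2k}$. I would set
\begin{equation*}
\tilde f(\WW) = \sup_{\hat \WW} \tilde g(\hat \WW),
\end{equation*}
where the supremum ranges over $2k$-colored $r$-graphons $\hat \WW$ with $[\hat \WW,k] = \WW$, i.e., the $k$-colorings of $\WW$. Restricting to $\WW_G$ for a simple $r$-graph $G$ and using (\ref{ch6:eq22}) recovers $f(G)$ in the limit, so $\tilde f$ does extend $f$ in the appropriate sense.

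The second step is to show both that the supremum is attained and that $\tilde f$ is continuous in the $r$-cut metric. Attainment follows from compactness of the set of $k$-colorings of a fixed $\WW$ in the $r$-cut topology, combined with continuity of $\tilde g$; the underlying compactness of the quotient space of $r$-graphons up to measure-preserving equivalence is the key topological input and comes directly from the ultralimit construction. For continuity of $\tilde f$, suppose $\WW_n \to \WW$ in the $r$-cut metric. Upper semicontinuity is the easier direction: picking near-optimal colorings $\hat \WW_n$ of $\WW_n$, extracting a cut-convergent subsequence with limit $\hat \WW$, and observing that the $k$-discoloring operation is continuous (so $[\hat \WW,k] = \WW$), one applies continuity of $\tilde g$ to deduce $\tilde g(\hat \WW_n) \to \tilde g(\hat \WW) \leq \tilde f(\WW)$ along the subsequence.

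The harder direction, and the main obstacle I anticipate, is lower semicontinuity: given an optimal $k$-coloring $\hat \WW$ of the limit $\WW$, one must produce $k$-colorings $\hat \WW_n$ of the approximants $\WW_n$ that are $r$-cut-close to $\hat \WW$. A naive ``cut and transfer'' does not obviously preserve the constraint $[\hat \WW_n,k] = \WW_n$ and may be too coarse to track fine color densities. My approach is to pass to the ultraproduct, where the sequence $\{\WW_n\}$ acquires a genuine internal limit which can be colored directly, and from which one extracts colorings of each $\WW_n$ through the measure-preserving correspondences implicit in the ultralimit. Together with the auxiliary approximation lemmas of \Cref{ch6:sec.aux} and the hypergraph regularity packed into the ultralimit formalism, this should supply the required $\hat \WW_n$. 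Once continuity of $\tilde f$ on $\mathcal W^{r,2}$ is established, the characterization from \Cref{ch6:sec.cut} translates it into the testability of $f$, completing the proof.
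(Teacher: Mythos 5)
Your reduction rests on a characterization that the paper does not establish and that is in fact the crux of the difficulty for $r\ge 3$: you assert that an $r$-graph parameter is testable if and only if it admits a graphon extension continuous in the $r$-cut metric, attributing this to \Cref{ch6:sec.cut}. That section proves testability of the ground state energies and of the cut-$\P$-norm suprema; it contains no such equivalence. The ``continuity implies testability'' direction of your claimed equivalence is exactly what fails here: it would need a sampling (inverse counting) lemma saying that $\G(q,W)$ is close to $W$ in $d_{\square,r}$ with high probability, and the paper points out immediately after \Cref{ch6:main} that for this hypergraph cut norm the sample can be far from the original in the induced distance — which is precisely why the $r=2$ argument of \cite{LV} and \cite{KM2} does not carry over. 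For the same reason, the compactness of the space of $r$-graphons in this cut metric, which you invoke both to attain the supremum defining $\tilde f$ and to extract cut-convergent subsequences of colorings, is not delivered by the ultralimit construction: the Elek--Szegedy machinery produces limit objects for subgraph-density convergence, not compactness in $d_{\square,r}$, and the counting lemma (\Cref{ch6:hypercount}) only goes one way.

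The paper's proof is organized around a different invariant precisely to sidestep this: instead of cut distance between a hypergraph and its sample, it compares distributions of bounded-size samples in total variation, $d_\tw(\mu(q_0,\cdot),\mu(q_0,\cdot))$, which is what testability of the witness $g$ actually responds to. The key ingredient is \Cref{ch6:mainhyperlemma}: with high probability every $k$-coloring of the sampled graph can be transferred back to a $k$-coloring of the original (graphon) so that the two colored objects are essentially indistinguishable by $q_0$-samples; it is proved by induction on $r$ using the regularity lemma \Cref{ch6:sregkhyper}, the testability of the cut-$\P$-norm supremum (\Cref{ch6:hyperregpres2}, which is where the ultralimit method actually enters via \Cref{ch5:samp}), and the coloring-transfer \Cref{ch6:hypercoloring}. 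Your ``lower semicontinuity'' step is the analogue of this transfer, but as written it is only a pointer to ``pass to the ultraproduct''; the step that would fail is exactly producing colorings of the approximants that respect the discoloring constraint while controlling a proximity notion strong enough to compare $g$-values — and that notion cannot be the cut metric, for the reason above. Without an argument replacing \Cref{ch6:mainhyperlemma} (or a proof of your claimed equivalence and of compactness for $d_{\square,r}$), the proposal does not close.
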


Our proof follows the proof skeleton of \cite{KM2}, but requires a more sophisticated approach.The reason for this is that the  analogous norm  for hypergraphs to the cut-norm that comes with a counting lemma has some shortcomings, for instance the sample is in some cases far away from the original in the natural distance notion induced by the norm. Therefore the corresponding regularity lemma cannot be applied directly as in \cite{KM2}.

The definitions of the relevant norms is given next.

\begin{definition} \label{ch6:def.cutnorm}

Let $r \geq 1$ and $A$ be a real $r$-array of size $n$. Then the cut norm of $A$ is
\begin{align*}
\|A\|_{\square,r}=\frac{1}{n^r} \max_{\substack{S_i \subset [n]^{r-1} \setminus \diag ([n]^{r-1}) \\ i \in [r]}}
|A(r;S_1, \dots, S_r)|,
\end{align*}
where $A(r;S_1, \dots, S_r)=\sum_{i_1, \dots, i_r =1}^n A(i_1, \dots, i_r) \prod_{j=1}^r \I_{S_j} (i_1, \dots, i_{j-1}, i_{j+1}, \dots, i_r)$, and the maximum goes over sets $S_i$ that are invariant under coordinate permutations. 

If $\P=(P_i)_{i=1}^t$ is a partition of $[n]^{r-1}\setminus \diag ([n]^{r-1})$ with symmetric classes, then the cut-$\P$-norm of $A$ is
\begin{align*}
\|A\|_{\square,r,\P}=\frac{1}{n^r} \max_{\substack{S_i \subset [n]^{r-1}}, i \in [r]}
\sum_{j_1, \dots, j_r=1}^t |A(r;S_1\cap P_{j_1} , \dots, S_r \cap P_{j_r})|.
\end{align*}

The cut norm of an $r$-kernel $W$ is
\begin{align*}
\|W\|_{\square,r}=\sup_{\substack{S_{i} \subset [0,1]^{\hhh([r-1])} \\ i \in [r]}}
|\int_{ \cap_{i \in [r]} p_{[r]\setminus \{i\}}^{-1}(S_i)}W(x_{\hhh([r],r-1)}) \du \lambda(x_{\hhh([r],r-1)}) |,
\end{align*}
where the supremum is taken over sets $S_i$ that satisfy the usual symmetries, and $p_e$ is the natural projection from $[0,1]^{\hhh([r],r-1)}$ onto $[0,1]^{\hhh(e)}$. 
Furthermore, for a symmetric partition $\P=(P_i)_{i=1}^t$ of $[0,1]^{\hhh([r-1])}$ the cut-$\P$-norm of an $r$-kernel is defined by
\begin{align*}
\|W\|_{\square,r,\P}=\sup_{\substack{S_{i} \subset [0,1]^{\hhh([r-1])} \\ i \in [r]}}
\sum_{j_1, \dots, j_r=1}^t |\int_{\cap_{i \in [r]} p_{[r]\setminus \{i\}}^{-1}(S_i \cap P_{j_i} ) } W(x_{\hhh([r],r-1)}) \du \lambda(x_{\hhh([r],r-1)}) |,
\end{align*}
where the supremum is taken over sets $S_i$ that satisfy the usual symmetries.
\end{definition}
We remark that it is also true that
\begin{align*}
\|W\|_{\square,r}=\sup_{f_1, \dots, f_r \in \mathcal [0,1]^{\hhh([r-1])}}
|\int\limits_{[0,1]^{\hhh([r],r-1)}} \prod_{i=1}^r f_i(x_{\hhh([r]\setminus \{i\})}) W(x_{\hhh([r],r-1)}) \du \lambda(x_{\hhh([r],r-1)}) |,
\end{align*}
where the supremum goes over functions $f_i$ that satisfy the usual symmetries, and similarly for any  symmetric partition $\P=(P_i)_{i=1}^t$ of $[0,1]^{\hhh([r-1])}$ we have with the same conditions for the $f_i$'s as above that
\begin{align*}
\|W\|_{\square,r}=\sup_{f_1, \dots, f_r \in \mathcal [0,1]^{\hhh([r-1])}}
\sum_{j_1, \dots, j_r=1}^t  |\int\limits_{[0,1]^{\hhh([r],r-1)}} \prod_{i=1}^r f_i(x_{\hhh([r]\setminus \{i\})})\I_{P_{j_i}} (x_{\hhh([r]\setminus \{i\})}) W(x_{\hhh([r],r-1)}) \du \lambda(x_{\hhh([r],r-1)}) |.
\end{align*}

In several previous works, see e.g. \cite{AVKK2}, the cut norm for $r$-arrays denotes a term that is significantly different from the one in \Cref{ch6:def.cutnorm} and is not suitable for our present purposes.
The above norms give rise to a distance between $r$-graphons, and analogously for $r$-graphs.

\begin{definition}
For two $k$-colored $r$-graphons $\UU=(U^\alpha)_{\alpha \in [k]}$ and $\WW=(W^\alpha)_{\alpha \in [k]}$ their cut distance is defined as 
\begin{equation*}
d_{\square,r}(\UU,\WW) = \sum_{\alpha=1}^k \|U^\alpha-W^\alpha\|_{\square,r},
\end{equation*}
and their cut-$\P$-distance as
\begin{equation*}
d_{\square,r,\P}(\UU,\WW) = \sum_{\alpha=1}^k \|U^\alpha-W^\alpha\|_{\square,r,\P}.
\end{equation*}
For two $k$-colored $r$-graphs $\GG=(G^\alpha)_{\alpha \in [k]}$ and $\HHH=(H^\alpha)_{\alpha \in [k]}$ their corresponding distances are defined as 
\begin{align*}
d_{\square,r}(\GG,\HHH)=d_{\square,r}(\WW_\GG,\WW_\HHH),
\end{align*}
and
\begin{align*}
d_{\square,r,\P}(\GG,\HHH)=d_{\square,r,\P}(\WW_\GG,\WW_\HHH).
\end{align*}
Distances between an $r$-graph and an $r$-graphon, as well as for $r$-kernels, is analogously defined.
\end{definition}

Note that the norms introduced above are in general smaller or equal than the $1$-norm of integrable functions, also $d_{\square,r}(\UU,\WW)  \leq
d_{\square,r,\P}(\UU,\WW)$ hods for every pair. 
Their relevance will be clearer in the context of the next counting lemma, we include the standard proof only for completeness' sake.

\begin{lemma}
\label{ch6:hypercount}
Let $\UU$ and $\WW$ be two $k$-colored $r$-graphons with $\|U\|_\infty, \|W\|_\infty \leq 1$. Then for every $\FFF \in \GGG_q^{r,k}$ it holds that
\begin{align*}
|t(\FFF,\WW)-t(\FFF,\UU)| \leq {q \choose r} d_{\square,r}(\UU,\WW). 
\end{align*}
\end{lemma}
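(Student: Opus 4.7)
The plan is to mimic the standard telescoping proof of the graphon counting lemma, iterating over the $m=\binom{q}{r}$ edges of $\FFF$. Fix an enumeration $e_1,\dots,e_m$ of $\binom{[q]}{r}$ and, for each $0\le j\le m$, let $t_j(\FFF)$ denote the hybrid density in which the factor on edge $e_l$ is $U^{\FFF(e_l)}$ for $l\le j$ and $W^{\FFF(e_l)}$ for $l>j$. The telescoping identity
\[
t(\FFF,\WW)-t(\FFF,\UU)=\sum_{j=1}^{m}\bigl(t_{j-1}(\FFF)-t_j(\FFF)\bigr)
\]
isolates, in each summand, a single ``difference'' factor $D_j:=W^{\FFF(e_j)}-U^{\FFF(e_j)}$ on the edge $e_j$, while all remaining factors are entries of $\UU$ or $\WW$ bounded in modulus by $1$.

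The core estimate is to show $|t_{j-1}(\FFF)-t_j(\FFF)|\le\|D_j\|_{\square,r}$. I would group the remaining factors into $r$ packets indexed by the vertices of $e_j=\{v_1,\dots,v_r\}$ as follows. Since any other edge $e'\ne e_j$ satisfies $e_j\setminus e'\ne\emptyset$, assign the factor on $e'$ to the packet of the smallest $v_i\in e_j\setminus e'$, and let $g_i$ be the product of the factors assigned to $v_i$. By construction $|g_i|\le 1$, and $g_i$ depends only on variables $x_T$ with $v_i\notin T$. Split the integration variables into an ``inner'' block $x_{\hhh(e_j,r-1)}$ and an ``extra'' block indexed by the remaining elements of $\hhh([q],r-1)$. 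For every fixed configuration of the extra variables, each frozen $g_i$ becomes a function on $[0,1]^{\hhh(e_j\setminus\{v_i\})}$ bounded by $1$, so the functional form of $\|\cdot\|_{\square,r}$ recorded right after \Cref{ch6:def.cutnorm} yields
\[
\left|\int D_j(x_{\hhh(e_j,r-1)})\prod_{i=1}^{r}g_i\,\du\lambda(x_{\hhh(e_j,r-1)})\right|\le\|D_j\|_{\square,r}.
\]
Integrating this uniform bound over the extra variables preserves it, and summing the resulting edge-wise inequality over $j$ together with $\|W^\alpha-U^\alpha\|_{\square,r}\le d_{\square,r}(\UU,\WW)$ for every colour $\alpha$ completes the proof.

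The main point that needs care is that the $f$-form of $\|\cdot\|_{\square,r}$ restricts its test functions to those satisfying the usual symmetries, whereas the packets $g_i$ are not in general invariant under permutations of $e_j\setminus\{v_i\}$. I would handle this by exploiting the full $S_r$-symmetry of $D_j$: either symmetrising the $r$-tuple $(g_1,\dots,g_r)$ jointly via the diagonal action of $S_r$ on $e_j$ to produce an admissible symmetric integrand whose integral against $D_j$ is unchanged, or passing to the equivalent set formulation of the cut norm and arguing through a layer-cake decomposition that symmetric test sets suffice for symmetric kernels. This is the only technical wrinkle; the rest is the classical telescoping.
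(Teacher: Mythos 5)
Your proposal is essentially the paper's own proof: the paper performs exactly this edge-by-edge telescoping with hybrid densities and then simply asserts the per-edge bound by $\|W^{\FFF(e)}-U^{\FFF(e)}\|_{\square,r}$, never spelling out the grouping of the remaining factors by a vertex of $e\setminus e'$ nor addressing the symmetry of the resulting test functions, so your write-up is if anything more detailed than the original. One caution on the wrinkle you flag: the first fix does not work verbatim, since averaging $(g_1,\dots,g_r)$ over the diagonal $S_r$-action on $e_j$ produces an average of products rather than a single product of symmetric test functions, so you would need to argue through the set/indicator formulation (or observe that the telescoping really only needs the cut-norm bound against not-necessarily-symmetric bounded test functions, which is the form the paper tacitly uses as well).
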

\begin{proof}
Fix $q$ and $\FFF \in \GGG_q^{r,k}$. Then
\begin{align*}
&|t(\FFF,\WW)-t(\FFF,\UU)| = |\int\limits_{ [0,1]^{\hhh([q], r-1)} } \prod_{e \in {[q] \choose r}} W^{\FFF(e)}(x_{\hhh(e,r-1)}) -\prod_{e \in {[q] \choose r}} U^{\FFF(e)}(x_{\hhh(e,r-1)}) \du \lambda(x)| \\
& \quad \leq \sum_{ e \in {[q] \choose r}} | \int\limits_{ [0,1]^{\hhh([q], r-1)} }  [W^{\FFF(e)}(x_{\hhh(e,r-1)})-U^{\FFF(e)}(x_{\hhh(e,r-1)}) ] \\ & \qquad \qquad \qquad \qquad \prod_{f \in {[q] \choose r}, f \prec e} W^{\FFF(f)}(x_{\hhh(f,r-1)})\prod_{g \in {[q] \choose r}, e \prec g} U^{\FFF(g)}(x_{\hhh(g,r-1)})\du \lambda(x)| \\
& \quad \leq  \sum_{ e \in {[q] \choose r}} \|W^{\FFF(e)}-U^{\FFF(e)}\|_{\square,r} \leq {q \choose r} d_{\square,r}(\UU,\WW),
\end{align*}
where $\prec$ is an arbitrary total ordering of the elements of ${q \choose r}$.
\end{proof}

Let $d_\tw$ denote the total variation distance between probability measures on $\GGG_n^{r,[k]*}$, where $[k]*=[k] \cup \{\iota\}$ for $k \geq 1$ (without highlighting the specific parameters in the notion $d_\tw$), that is $d_\tw(\mu, \nu)=\max_{\FF \subset \GGG_n^{r,[k]*}} |\mu(\FF)-\nu(\FF)|$,  and let the measure $\mu(q,\GG)$, respectively $\mu(q, \WW)$, denote the probability measure of the random $r$-graph $\G(q, \GG)$, respectively $\G(q,\WW)$, taking values in $\GGG_q^{r,[k]*}$. 
It is a standard observation then that 
\begin{equation}\label{ch6:eqtw}
d_\tw (\mu(q, \WW), \mu(q, \UU)) = \frac{1}{2} \sum_{\FFF \in \GGG_q^{r,[k]*}} |t(\FFF, \WW)-t(\FFF, \UU)|,
\end{equation}
and that $\G(q, \WW)$ and $\G(q, \UU)$ can be coupled in form of the random $r$-graphs $\GG_1$ and $\GG_2$, such that 
\begin{align}\label{ch6:eq101}
d_\tw (\mu(q, \WW), \mu(q, \UU)= \frac{1}{2} \PP(\GG_1 \neq \GG_2),
\end{align}
 and further, for any coupling $\GG'_1$ and $\GG'_2$  it hods that $d_\tw (\mu(q, \WW), \mu(q, \UU) \leq  \frac{1}{2} P(\GG'_1 \neq \GG'_2)$.

For $\GG \in \GGG_n^{r,k}$ note that 
\begin{align}\label{ch6:eq100}
d_\tw (\mu(q, \GG), \mu(q, \WW_\GG)) \leq q^2/n,
\end{align}
where the right hand side is a simple upper bound on the probability that if we uniformly choose $q$ elements of an $n$-element set, then we get at least two identical objects. The inequality (\ref{ch6:eq100}) follows from the fact that conditioned on the event that the independent and uniform $X_{\{i\}}$'s for $i \in [q]$ fall in different intervals $[\frac{j-1}{n},\frac{j}{n}]$ for $j \in [n]$ the distribution of $\G(q, \WW_\GG)$ is the same as the distribution of $\G(q, \GG)$.

The next corollary is a direct consequence of \Cref{ch6:hypercount}.

\begin{corollary} \label{ch6:hypercountcor}
If $\UU$ and $\WW$ are two $k$-colored $r$-graphons, then 
\begin{align*}
d_\tw (\mu(q, \WW), \mu(q, \UU) \leq \frac{k^{q^r} q^r}{2r!} d_{\square,r}(\UU,\WW),
\end{align*}
and there exists a coupling in form of $\GG_1$ and $\GG_2$ of the random $r$-graphs $\G(q, \WW)$ and $\G(q, \UU)$, such that 
\begin{align*}
P(\GG_1 \neq \GG_2) \leq \frac{k^{q^r} q^r}{2r!} d_{\square,r}(\UU,\WW). 
\end{align*}
\end{corollary}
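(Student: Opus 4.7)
The plan is to combine the counting lemma \Cref{ch6:hypercount} with the standard total variation expansion \eqref{ch6:eqtw} and the coupling identity \eqref{ch6:eq101}. There is no genuine technical obstacle here; the statement really is a one-line consequence of the counting lemma, and essentially all the work consists of bookkeeping around the constants and the auxiliary diagonal color $\iota$.

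First, I would invoke \eqref{ch6:eqtw} to write
\begin{align*}
d_\tw(\mu(q,\WW),\mu(q,\UU)) \;=\; \tfrac{1}{2} \sum_{\FFF \in \GGG_q^{r,[k]*}} |t(\FFF,\WW) - t(\FFF,\UU)|.
\end{align*}
For each summand I apply \Cref{ch6:hypercount}. Strictly speaking the lemma is stated for $k$-colored graphons, but since $W^\iota$ and $U^\iota$ both equal the indicator of the diagonal $A_\iota$, the $\iota$-factors appearing in the integrands for $t(\FFF,\WW)$ and $t(\FFF,\UU)$ are identical and the telescoping argument of \Cref{ch6:hypercount} carries over verbatim, yielding
\begin{align*}
|t(\FFF,\WW) - t(\FFF,\UU)| \;\le\; \binom{q}{r}\, d_{\square,r}(\UU,\WW) \;\le\; \tfrac{q^r}{r!}\, d_{\square,r}(\UU,\WW).
\end{align*}

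Next I would count the number of terms. Because $\iota$ is forced on every diagonal position, an element of $\GGG_q^{r,[k]*}$ is completely determined by its assignment of the $k$ nontrivial colors to the $\binom{q}{r}$ non-diagonal edges, so $|\GGG_q^{r,[k]*}| = k^{\binom{q}{r}} \le k^{q^r}$. Substituting the per-term bound and this count into the total variation expansion gives the first displayed inequality of the corollary.

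Finally, for the coupling statement I would appeal to \eqref{ch6:eq101}, which guarantees the existence of a coupling $(\GG_1,\GG_2)$ of $\G(q,\WW)$ and $\G(q,\UU)$ achieving $d_\tw(\mu(q,\WW),\mu(q,\UU)) = \tfrac{1}{2}\PP(\GG_1 \neq \GG_2)$; feeding the bound on $d_\tw$ just obtained into this identity produces the stated bound on $\PP(\GG_1 \neq \GG_2)$. The only mildly delicate checkpoint in the argument is the verification that the $\iota$-color contributes matching factors on the two sides and therefore drops out of the per-term estimate and enters only via the enumeration $k^{\binom{q}{r}}$.
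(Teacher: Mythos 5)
Your proposal is exactly the argument the paper intends: the corollary is stated there without proof as a direct consequence of \Cref{ch6:hypercount} combined with \eqref{ch6:eqtw} and \eqref{ch6:eq101}, which is precisely what you carry out. Two small inaccuracies, neither fatal: for general $k$-colored $r$-graphons $W^\iota$ and $U^\iota$ need not coincide with the diagonal indicator (that holds only for graphons of the form $\WW_\GG$), but the per-term bound survives because $W^\iota-U^\iota=\sum_{\alpha\in[k]}(U^\alpha-W^\alpha)$ has cut norm at most $d_{\square,r}(\UU,\WW)$; and since $\iota$ may genuinely occur as an edge color of $\G(q,\WW)$, the correct count is $|\GGG_q^{r,[k]*}|=(k+1)^{\binom{q}{r}}$ rather than $k^{\binom{q}{r}}$, which still fits under the stated $k^{q^r}$ (and the residual factor-of-$2$ tension in the coupling bound is inherited from the paper's own convention in \eqref{ch6:eq101}, not introduced by you).
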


A generalization of the notion of a step function in the case of graphs to the situation where we deal with $r$-graphs is given next. For a partition $\P$ the number of its classes is denoted by  $t_\P$.

\begin{definition}
We call an $k$-colored $r$-graphon $\WW$ with $r \geq l$ an $(r,l)$-step function if there exist  positive integers $t_l, t_{l+1}, \dots, t_r=k$, symmetric partitions $\P=(P_1, \dots, P_{t_l})$ of $[0,1]^{\hhh([l])}$, and real arrays $A^\alpha_s: [t_{s-1}]^{\hhh([s],s-1)} \to [0,1]$  with $\alpha \in [t_s]$ for $l \leq s \leq r$ such that $\sum_{\alpha \in [t_s]} A^\alpha_s(i_{\hhh([s],s-1)})=1$ for any choice of $i_{\hhh([s],s-1)}$ and for $s \leq r$ so that $W^\alpha$ for $\alpha \in [k]$ is of the following form for each $\alpha \in [k]$.
\begin{align*}
& W^\alpha(x_{\hhh([r])}) =  \sum_{\substack{i_S=1 \\ S \subset [r], l \leq |S|}}^{t_{|S|}} A^\alpha_r(i_{\hhh([r],r-1)}) \\ & \qquad\prod_{S \in {[r] \choose l}} \I_{P_{i_S}} (x_{\hhh(S)}) \prod_{\substack{S \subset [r] \\ l+1 \leq |S| < r}} \I( \sum_{j=1}^{i_{S}-1}A^j_{|S|}(i_{\hhh(S, |S|-1)})\leq x_S \leq \sum_{j=1}^{i_{S}}A^j_{|S|}(i_{\hhh(S, |S|-1)})).
\end{align*}
We refer to the partition $\P$ as the steps of $W$.
\end{definition}
The most simple example is the $(r,r-1)$ step function that can be written as 
\begin{equation*}
W^\alpha(x_{\hhh([r])}) =  \sum_{i_1, \dots , i_r=1}^{t_{r-1}} A^\alpha_r(i_1,\dots, i_r)\prod_{j =1}^r \I_{P_{i_j}} (x_{\hhh([r] \setminus \{j\})}).
\end{equation*}

\section{Testability of the $r$-cut norm}\label{ch6:sec.cut}
We define a parameter of $r$-uniform hypergraphs that is a generalization of the ground state energies of \cite{BCL2} in the case of graphs. This notion encompasses several important quantities, therefore its testability is central to many applications.  

\begin{definition}\label{ch6:def.gse1}
For a set $H \subset {[n] \choose r}$, a real $r$-array $J$ of size $q$, and a symmetric partition $\P=(P^1, \dots, P^q)$ of ${[n] \choose r-1}$ we define the energy
\begin{equation*}
\EEE_{\P,r-1} (H,J)=\frac{1}{n^r} \sum_{i_1, \dots, i_r =1}^q J(i_1, \dots, i_r) e_{H} (r; P_{i_1}, \dots, P_{i_r}),
\end{equation*}
where $e_{H}(r; S_1, \dots, S_r)= | \{(u_1, \dots, u_r) \in [n]^r | A_H(u_1, \dots, u_r)=1 \textrm{ and } A_{S_j}(u_1, \dots, u_{j-1}, u_{j+1}, \dots, u_r)=1  \textrm{ for all } j=1, \dots, r\} |$. 

Let $\HHH=(H^\alpha)_{\alpha \in [k]}$ be a $k$-colored $r$-uniform hypergraph on the vertex set $[n]$ and $J^{\alpha}$ a be real $q \times \dots \times q$ $r$-array with $\| J \|_\infty \leq 1$ for each $\alpha \in [k]$. Then the energy for a partition $\P$ as above is 
\begin{equation*}
\EEE_{\P,r-1} (\HHH,J)= \sum_{\alpha \in [k]} \EEE_{\P,r-1} (H^\alpha,J^\alpha).
\end{equation*}

The maximum of the energy over all partitions $\P$ of ${[n] \choose r-1}$ is called the ground state energy (GSE) of $H$ with respect to $J$, and is denoted by 
\begin{equation*}
\EEE_{r-1} (\HHH,J)=\max_\P \EEE_{\P,r-1} (\HHH,J).
\end{equation*}
\end{definition}

The GSE can also be defined for $r$-graphons.

\begin{definition}\label{ch6:def.gse2}
For an $r$-graphon $W$, a real $r$-array $J$ of size $q$, and a symmetric partition $\P=(P^1, \dots, P^q)$ of $[0,1]^{\hhh([r-1])}$ we define the energy
\begin{equation*}
\EEE_{\P,r-1} (W,J)= \sum_{i_1, \dots, i_r \in [q]}J(i_1, \dots, i_r) \int_{ \cap_{j \in [r]} p_{[r]\setminus \{j\}}^{-1}(S_{i_j})} W(x_{\hhh([r],r-1)}) \du \lambda(x_{\hhh([r],r-1)}).
\end{equation*}

Let $\WW=(W^\alpha)_{\alpha \in [k]}$ be a $k$-colored $r$-graphon and $J^{\alpha}$ a be real $q \times \dots \times q$ $r$-array with $\| J \|_\infty \leq 1$ for each $\alpha \in [k]$. Then the energy for a partition $\P$ as above is 
\begin{equation*}
\EEE_{\P,r-1} (\WW,J)= \sum_{\alpha \in [k]} \EEE_{\P,r-1} (W^\alpha,J^\alpha).
\end{equation*}
and the GSE of $\WW$ with respect to $J$, and is denoted by 
\begin{equation*}
\EEE_{r-1} (\WW,J)=\sup_\P \EEE_{\P,r-1} (\WW,J),
\end{equation*}
where the supremum runs over all symmetric partitions $\P=(P^1, \dots, P^q)$ of $[0,1]^{\hhh([r-1])}$.
\end{definition}

Definitions of the above energies are analogous in the directed, and the weighted case, and also for $r$-kernels. The next lemma tells us about the distribution of the GSE when taking a random sample $\G(n,\HHH)$ of an $\HHH \in \GGG^{r,k}$. 

\begin{lemma} \label{ch5:conc} 
The expression $\EEE_{r-1} (\G(n, \HHH),J)$ is highly concentrated around its mean, that is for every $\varepsilon >0$ it holds that 
$$
\PP(|\EEE_{r-1} (\G(n, \HHH),J)-\E \EEE_{r-1} (\G(n, \HHH),J)| \geq \varepsilon  \|J\|_\infty ) \leq 2 \exp(-\frac{\varepsilon^2 n}{8 r^2}).
$$
\end{lemma}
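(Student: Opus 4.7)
The plan is to apply the standard Azuma--Hoeffding inequality to the Doob (vertex-exposure) martingale obtained by revealing the sampled vertices one at a time. Sampling $v_1,\dots,v_n$ sequentially and without replacement from $V(\HHH)$, set $M_i=\E[\EEE_{r-1}(\G(n,\HHH),J)\mid v_1,\dots,v_i]$, so that $M_0=\E\EEE_{r-1}(\G(n,\HHH),J)$ and $M_n=\EEE_{r-1}(\G(n,\HHH),J)$. The heart of the argument is a deterministic Lipschitz estimate: swapping the identity of a single sampled vertex changes $\EEE_{r-1}(\G(n,\HHH),J)$ by at most $2r\|J\|_\infty/n$, and this uniformly bounds every martingale increment $|M_i-M_{i-1}|$ via the standard coupling argument for vertex-exposure martingales.

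To prove the Lipschitz estimate, fix a symmetric partition $\P$ of $\binom{[n]}{r-1}$. Expanding $e_{H^\alpha}(r;P_{i_1},\dots,P_{i_r})$ as a sum over ordered $r$-tuples and using that $\P$ is a partition, one rewrites
$$\EEE_{\P,r-1}(\HHH,J)=\frac{1}{n^r}\sum_{\alpha\in[k]}\sum_{u\in[n]^r} A_{H^\alpha}(u)\,\tilde J^\alpha(u),$$
where $\tilde J^\alpha(u)$ is the entry of $J^\alpha$ indexed by the unique parts of $\P$ that contain the $r$ coordinate-omitted projections of $u$; in particular $|\tilde J^\alpha(u)|\leq\|J\|_\infty$. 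If one sample vertex $v_i$ is replaced by a different vertex $v_i'$, yielding a modified colored sample $\HHH'$, only tuples $u\in[n]^r$ having $i$ among their coordinates are affected; there are at most $rn^{r-1}$ such tuples, and for each of them $\sum_\alpha|A_{H^\alpha}(u)-A_{H'^\alpha}(u)|\leq 2$, because on non-diagonal tuples the color sum $\sum_\alpha A_{H^\alpha}(u)$ equals $1$ both before and after the swap. Thus $|\EEE_{\P,r-1}(\HHH,J)-\EEE_{\P,r-1}(\HHH',J)|\leq 2r\|J\|_\infty/n$ for every admissible $\P$, and since the collection of symmetric partitions of $\binom{[n]}{r-1}$ is unaffected by the vertex swap, the elementary inequality $|\sup f-\sup g|\leq\sup|f-g|$ lifts the bound to $|\EEE_{r-1}(\HHH,J)-\EEE_{r-1}(\HHH',J)|\leq 2r\|J\|_\infty/n$.

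With $c_i:=2r\|J\|_\infty/n$ uniformly for $i\in[n]$, Azuma--Hoeffding then gives
$$\PP\bigl(|M_n-M_0|\geq\varepsilon\|J\|_\infty\bigr)\leq 2\exp\Bigl(-\frac{\varepsilon^2\|J\|_\infty^2}{2\sum_{i=1}^n c_i^2}\Bigr)=2\exp\Bigl(-\frac{\varepsilon^2 n}{8r^2}\Bigr),$$
which is exactly the stated concentration. I do not anticipate a substantive obstacle; the only mild subtlety is that sampling is without replacement, but this is handled transparently by vertex-exposure martingales since the Lipschitz estimate holds uniformly over the choice of replacement vertex, yielding the required bound on conditional martingale increments.
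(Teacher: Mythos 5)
Your proof is correct and follows essentially the same route as the paper: the same Doob martingale over the sequentially revealed sample vertices, the same Lipschitz bound of $2r\|J\|_\infty/n$ obtained by noting that a single vertex swap affects at most $rn^{r-1}$ tuples uniformly over all admissible partitions $\P$, and the same application of Azuma--Hoeffding yielding the exponent $\varepsilon^2 n/(8r^2)$. Your write-up merely spells out the Lipschitz estimate and the without-replacement coupling in more detail than the paper does.
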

\begin{proof}
We can assume that $\|J\|_\infty\leq 1$. The random $r$-graph $\G(n, \HHH)$ is generated by picking random nodes from $V(\HHH)$ without repetition, let $X_i$ denote the $i$th random element of $V(\HHH)$ that has been selected. Define the martingale $Y_i=\E [\EEE_{r-1} (\G(n, \HHH),J) | X_1, \dots, X_i]$ for $0 \leq i \leq n.$ It has the property  that $Y_0=\E [\EEE_{r-1} (\G(n, \HHH),J)]$ and $Y_n=\EEE_{r-1} (\G(n, \HHH),J)$, whereas the jumps $|Y_i-Y_{i-1}|$ are bounded above by $\frac{2r}{n}$ for each $i \in [n]$. The last observation is the consequence of the fact that for any partition $\P$ of ${[n] \choose r-1}$ only at most $r n^{r-1}$ terms in the sum constituting $\EEE_{\P,r-1} (\HHH,J)$ are affected by changing the placing of $X_{i+1}$ in the classes of $\P$. Applying the Azuma-Hoeffding inequality to the martingale verifies the statement of the lemma. 
\end{proof}

The same concentration result as above applies also to $\EEE_{r-1} (\G(n, \WW),J)$. 

We will show that these hypergraph parameters are testable via the ultralimit method and the machinery developed by \citet{ESz}. From the notational perspective and theoretical background this section slightly stands out from the rest of the paper. First we give a brief summary of the notions that were used in \cite{ESz} in order to produce a representation for the limit space of simple $r$-graphs. This representation led to a new analytical proof method for several results for simple $r$-graphs such as the Regularity Lemma, the Removal Lemma, or the testability assertion about hereditary $r$-graph properties. Subsequently, technical results proved in \cite{ESz} which are relevant here are mentioned, for more details and complete proofs we refer to the source paper \cite{ESz}. 

Recall that a sequence of $r$-graphs $(G_n)_{n\geq 1}$ is convergent if for every simple $F$ the numerical sequences $t(F,G_n)$ converge when $n$ tends to infinity.

We start by introducing the basic notations for ultraproduct measure spaces. Let us fix a non-principal ultrafilter $\omega$ on $\N$, and let $X_1, X_2, \dots$ be a sequence of finite sets of increasing size. We define the infinite product set $\hat X = \prod_{i=1}^\infty X_i$ and the equivalence relation $\sim$ between elements of $\hat X$, so that $p \sim q$ if and only if $\{i | \quad p_i=q_i\} \in \omega$. Set $\XX=\hat X / \sim$, this set is called the ultraproduct of the $X_i$'s, and it will serve as the base set of the ultraproduct probability space. Further, let $\P$ denote  the algebra of subsets of $\XX$ of the form $A=[\{A_i\}_{i=1}^\infty]$, where $A_i \subset X_i$ for each $i$, and $[.]$ denotes the equivalence class under $\sim$ (for convenience, $p=[\{p_i\}_{i=1}^\infty] \in [\{A_i\}_{i=1}^\infty]$ exactly in the case when $\{i | p_i \in A_i \} \in \omega$). 

Define a measure on the sets  belonging to $\P$ through the ultralimit of the counting measure on the sets $X_i$, that is, $\mu(A)=\lim_\omega \frac{|A_i|}{|X_i|}$, where the ultralimit of a bounded real numerical sequence $\{x_i\}_{i=1}^\infty$ is denoted by $x=\lim_\omega x_i$, and is defined by the property that for every $\varepsilon >0$ we have $\{i | \quad  |x-x_i| < \varepsilon \} \in \omega$. One can see that the limit exists for every bounded sequence and is unique, therefore well-defined, this is a consequence of basic properties of a non-principal ultrafilter. The set of $\mathcal N \subset 2^\XX$ of $\mu$-null sets is the family of sets $N$ for those there exists an infinite sequence of supersets $\{A^i\}_{i=1}^\infty \subset \P$ such that $\mu(A^i) \leq 1/i$. Finally define the $\sigma$-algebra $\mathcal B$ on $\XX$ by the $\sigma$-algebra generated by $\P$ and $\mathcal N$, and set the measure $\mu(B)=\mu(A)$ for each $B \in \mathcal B$,  where $A \triangle B \in \mathcal N$ and $A \in\P$. Again, everything is well-defined, see \cite{ESz}, so we arrive at the ultraproduct measure space $(\XX, \mathcal B, \mu)$. 

Let $X_1, X_2, \dots$ and $Y_1, Y_2, \dots$ be two increasing sequences of finite sets with ultraproducts $\XX$ and $\YY$ respectively, then it is true that the ultraproduct of the product sequence $X_1 \times Y_1, X_2 \times Y_2, \dots$ is the product $\XX \times \YY$, but the $\sigma$-algebra $\mathcal B_{\XX \times \YY}$ of the measure space can be strictly larger than the $\sigma$-algebra generated by $\mathcal B_\XX \times \mathcal B_\YY$, and this is a crucial point when the aim is to construct a separable representation of the ultraproduct measure space of product sets.

Let $r$ be some positive integer, and again $X_1, X_2, \dots$ a sequence of finite sets as above. For any $e \subset [r]$ we define the ultraproduct measure spaces $(\XX^e, \mathcal B_{\XX^e}, \mu_e )$, also let $P_e$ denote the natural projection from $\XX^{[r]}$ to $\XX^e$. Furthermore let $\sigma(e)$ denote the sub-$\sigma$-algebra of $\mathcal B_{\XX^{[r]}}$ given by $P_e^{-1}(\mathcal B_{\XX^e})$, and $\sigma(e)^*$ be the sub-$\sigma$-algebra $\langle P_f^{-1}(\mathcal B_{\XX^f}) | f \subset e, |f| < |e|  \rangle$. Note that in general  $\sigma(e)$ is strictly larger than $\sigma(e)^*$. We denote the measure $\mu_{\XX^{e}}$ simply by $\mu_{e}$ and the $\sigma$-algebra $\mathcal B_{\XX^e}$ by $\mathcal B_{e}$.

\begin{definition}
Let $r$ be a positive integer. We call a measure preserving map $\phi: \XX^{[r]} \to [0,1]^{\hhh([r])}$ a separable realization if 
\begin{enumerate}
\item for any permutation $\pi \in S_{[r]}$ of the coordinates  we have for all $x \in \XX^{[r]}$ that $\Pi(\phi(x))=\phi(\phi(x))$, where $\Pi$ is the permutation of the power set of $[r]$ induced by $\pi$, and
\item for any $e \subset \hhh([r])$ and any measurable $A \subset [0,1]$ we have that $\phi^{-1}_e(A) \in \sigma(e)$ and $\phi^{-1}_e(A)$ is independent of $\sigma(e)^*$.
\end{enumerate}
\end{definition}

We are interested in the limiting behavior of sequences of $k$-partitions (or edge-$k$-colored $r$-graphs on the vertex sets $X_1, X_2, \dots$) of the sequence $X^r_1, X^r_2, \dots$, where convergence is defined in the following general way.

 Let $G_i=(G^1_i, \dots, G^k_i)$ be a symmetric partition of $X_i^r$ for each $i \in \N$, then $(G_i)_{i=1}^\infty$ converges if for every $k$-colored $r$-graph $F$ the numerical sequences $t(F,G_i)$ converge, as in \Cref{ch6:sec.prel}. The ultralimit method enables us to handle the cases where the convergence does not hold without going to subsequences, we describe the method next. Let us denote the size of $F$ by $m$ and let $F(e)$ be the color of $e \in {[m] \choose r}$, then $t(F,G_i)$ can be written as the measure of a subset of $X_i^m$. We show this by explicitly presenting the set denoted by $T(F,G_i)$, so let
\begin{align}\label{ch6:eq1}
T(F,G_i)= \bigcap_{e \in {[m] \choose r}} P_e^{-1} (P_{s_e} (G_i^{F(e)})),
\end{align} 
where $P_e$ is the natural projection from $X_i^{[m]}$ to $X_i^e$, and $P_{s_e}$ is a bijection going from  $X_i^{[r]}$ to $X_i^{e}$ induced by an arbitrary but fixed bijection $s_e$ between $e$ and $[r]$.
We define the induced  subgraph density of the ultraproduct of $k$-colored $r$-graphs formally following \eqref{ch6:eq1}, if  $\GG=(\GG^1, \dots, \GG^k)$ is a $\mathcal B_{[r]}$-measurable $k$-partition of $\XX^{[r]}$ and $F$ is as above then let 
\begin{align}\label{ch6:eq2}
T(F,\GG)= \bigcap_{e \in {[m] \choose r}} P_e^{-1} (P_{s_e} (\GG^{F(e)})).
\end{align}

It is easy to see that $\lambda(T(F,G_i))=t(F,G_i)$. Forming the ultraproduct of a series of sets commutes with finite intersection, therefore $\lim_\omega T(F,G_i) = T(F,\lim_\omega G_i)$ and $\lim_\omega t(F,G_i) = t(F,\lim_\omega G_i)$. Observe that all of the above notation makes perfect sense and the identities hold true for directed colored $r$-graphs, that is, when the adjacency arrays of the $G^\alpha$'s are not necessarily symmetric.

We call a measurable subset of $[0,1]^{\hhh([r])}$ an $r$-set graphon satisfying the usual symmetries in the coordinates induced by $S_r$ permutations, we can turn it into a proper $r$-graphon in the sense of \Cref{ch6:sec.prel} by generating the marginal with respect to the coordinate corresponding to $[r]$. Analogously a $k$-colored $r$-set graphon is a measurable partition of $[0,1]^{\hhh([r])}$ into $k$ classes invariant under coordinate permutations induced by permuting $[r]$. These objects can serve as representations of the ultralimits of $r$-graph sequences in the sense that the numerical sequences of subgraph densities converge to densities defined for $r$-set graphons in accordance with the notation in \Cref{ch6:sec.prel}, we will provide the definition next.

\begin{definition}
Let $\FFF$ be a $k$-colored $r$-graph on $m$ vertices, and $\WW=(W^1, \dots, W^k)$ be a $k$-colored $r$-set graphon. Then $T(\FFF,\WW) \subset [0,1]^{\hhh([m],r)}$ denotes the set of the symmetric maps $g:\hhh([m],r) \to [0,1]$ that satisfy that for each $e \in {[m] \choose r}$ it holds that  $(g(f))_{f \in \hhh(e)} \in W^{\FFF(e)}$. For the Lebesgue measure of $T(\FFF,\WW)$ we write $t(\FFF,\WW)$, this expression is referred to as the density of $\FFF$ in $\WW$. 
\end{definition} 

The reader may easily verify that the above definition of density agrees with the content of \Cref{ch6:sec.prel}.
One of the main technical results of \cite{ESz} is the following.

\begin{theorem}\label{ch6:thm.sep}
\cite{ESz} Let $r$ be an arbitrary positive integer and let $\mathcal A$ be a separable sub-$\sigma$-algebra of $\mathcal B_{[r]}$. Then there exists a separable realization $\phi: \XX^{[r]} \to [0,1]^{\hhh([r])}$ such that for every $A \in \mathcal A$ there exists a measurable $B \subset [0,1]^{\hhh([r])}$ such that $\mu_{[r]}(A \triangle \phi^{-1}(B))=0$. 

\end{theorem}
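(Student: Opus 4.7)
The plan is to build $\phi$ coordinate by coordinate, indexed by $e\in\hhh([r])$, inductively on the rank $|e|$, forcing the symmetry and independence conditions at each level. First I would enlarge $\mathcal A$ to a separable sub-$\sigma$-algebra $\tilde{\mathcal A}\subset\mathcal B_{[r]}$ that is closed under coordinate permutations induced by $S_r$; any realization for $\tilde{\mathcal A}$ automatically serves $\mathcal A$. For each $e\in\hhh([r])$ I would then define $\mathcal A_e\subset\sigma(e)$ to be the separable $\sigma$-algebra generated by those elements of $\tilde{\mathcal A}$ that are $\sigma(e)$-measurable. By construction, the family $\{\mathcal A_e\}$ is compatible under the $S_r$-action.

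For the base case $|e|=1$, the ultraproduct of the finite sets $X_i$ with increasing size yields an atomless measure $\mu_{\{i\}}$ on $\XX^{\{i\}}$, and the separable $\mathcal A_{\{i\}}$ can be represented by a measure-preserving map $\psi_i:\XX^{\{i\}}\to[0,1]$ carrying $\mathcal A_{\{i\}}$ to Borel pre-images modulo null sets, by standard measure-algebra theory (Carath\'eodory/Maharam for separable atomless measure algebras). By the $S_r$-symmetry of $\tilde{\mathcal A}$ I can and do choose the same $\psi$ on every singleton, setting $\phi_{\{i\}}=\psi\circ P_{\{i\}}$. The coordinates obtained for distinct $i$ are independent because, by Keisler--Fubini for Loeb/ultraproduct spaces, $\sigma(\{i\})$ and $\sigma(\{j\})$ for $i\ne j$ are independent sub-$\sigma$-algebras of $\mathcal B_{[r]}$.

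For the inductive step, having built the $\phi_f$'s for $|f|<|e|$ satisfying measurability in $\sigma(f)$ and independence of $\sigma(f)^*$, I would exploit the key structural observation (proven in \cite{ESz}) that the quotient measure algebra $\sigma(e)/\sigma(e)^*$ is atomless. Then the separable part $\mathcal A_e$, viewed relatively to $\sigma(e)^*$, is represented by a $\sigma(e)$-measurable, uniformly distributed $\phi_e:\XX^{[r]}\to[0,1]$ that is independent of $\sigma(e)^*$. Concretely, one picks a generating sequence of $\mathcal A_e$ modulo $\sigma(e)^*$, orthogonalizes it against $\sigma(e)^*$ using conditional expectations, and applies the atomless representation theorem to the resulting separable subalgebra. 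The $S_r$-equivariance is secured by fixing a representative in each $S_r$-orbit of $\hhh([r])$, constructing $\phi_e$ there, and propagating along permutations.

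Assembling $\phi(x)=(\phi_e(x))_{e\in\hhh([r])}$, measure preservation with respect to the product Lebesgue measure on $[0,1]^{\hhh([r])}$ follows by induction from the uniformity of each $\phi_e$ together with its independence of $\sigma(e)^*$ (which contains all lower-rank $\phi_f$'s). The symmetry condition (1) of the definition follows directly from the equivariant choice, condition (2) is built in at each step, and every $A\in\mathcal A\subset\tilde{\mathcal A}$ lies (mod null) in the $\sigma$-algebra generated by the $\phi_e$'s because generators of $\mathcal A_e$ were incorporated at level $e$. The main obstacle, and the place where the theorem is genuinely nontrivial, is the atomlessness of the quotient $\sigma(e)/\sigma(e)^*$: the product $\sigma$-algebra $\mathcal B_{[r]}$ is strictly larger than the tensor product of lower-rank algebras, so one must verify that $\sigma(e)$ truly supports a uniform coordinate independent of everything below it, which is a genuine consequence of the ultraproduct construction and cannot be deduced from classical product measure theory alone.
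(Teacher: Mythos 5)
First, note that the paper does not prove this theorem at all: it is quoted from \cite{ESz}, so your proposal has to be measured against the Elek--Szegedy argument, whose general shape (coordinates $\phi_e$ built by induction on $|e|$, $S_r$-equivariance, separable measure-algebra representation) you do reproduce. The genuine gap is in the inductive step. What you need there is a uniform, $\sigma(e)$-measurable $\phi_e$, independent of $\sigma(e)^*$, such that moreover $\mathcal A_e \subset \sigma(e)^* \vee \sigma(\phi_e)$ modulo null sets. Atomlessness of $\sigma(e)$ relative to $\sigma(e)^*$ only yields \emph{some} uniform variable independent of $\sigma(e)^*$ (Maharam's lemma); it does not yield the containment, which is the whole point. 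Your concrete recipe --- orthogonalize a generating sequence of $\mathcal A_e$ against $\sigma(e)^*$ by subtracting conditional expectations and then apply an atomless representation theorem --- does not establish it: removing conditional expectations gives functions orthogonal to $L^2(\sigma(e)^*)$, and orthogonality is far weaker than independence, so the $\sigma$-algebra these functions generate need be neither independent of $\sigma(e)^*$ nor, together with $\sigma(e)^*$, large enough to contain $\mathcal A_e$. The step you are asserting is exactly the independent-complement property of the ultraproduct space (that $\sigma(e)$ splits mod null sets as $\sigma(e)^*$ joined with an algebra independent of it, or at least that every separable subalgebra of $\sigma(e)$ can be captured this way), which Elek and Szegedy prove using the specific saturation-type structure of ultraproduct measure spaces; it fails for general extensions of probability algebras with non-separable base even when the extension is relatively atomless, so it cannot be dispatched as ``standard measure-algebra theory.''

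A second, organizational, gap: your definition of $\mathcal A_e$ as the algebra generated by the $\sigma(e)$-measurable members of $\tilde{\mathcal A}$ does not make the induction close. A typical $A\in\tilde{\mathcal A}$ is only $\sigma([r])$-measurable, so it is handled at the top level, where it is approximated by Boolean combinations of sets from $\sigma([r])^*$ and from $\sigma(\phi_{[r]})$; but the $\sigma([r])^*$-measurable sets arising in this decomposition are in general \emph{not} elements of $\tilde{\mathcal A}$, hence were never fed into your lower-level construction, and nothing guarantees they lie in the algebra generated by the $\phi_f$ with $|f|<r$. To repair this one must either enlarge the lower-level separable algebras to absorb the new separable $\sigma(e)$-measurable data produced when decomposing each higher level and iterate this countably many times, or invoke the global total-independence decomposition of $\mathcal B_{[r]}$ from \cite{ESz}; as written, the claim ``every $A\in\mathcal A$ lies (mod null) in the $\sigma$-algebra generated by the $\phi_e$'s because generators of $\mathcal A_e$ were incorporated at level $e$'' is unsupported.
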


A lifting of a separable realization $\phi: \XX^{[r]} \to [0,1]^{\hhh([r])}$ of degree $n$ for $n \geq r$ is a measure preserving map $\psi:  \XX^{[n]} \to [0,1]^{\hhh([n],r)}$ that satisfies $p_{\hhh([r])} \circ \psi=\phi \circ P_{[r]}$, and it is equivariant under coordinate permutations in $S_n$, where $p_{\hhh([r])}$ and $P_{[r]}$ are the natural projections from $[0,1]^{\hhh([n],r)}$ to $[0,1]^{\hhh([r])}$, and from $\XX^{[n]}$ to $\XX^{[r]}$ respectively. The next lemma is central to relate the sub-$r$-graph densities of ultraproducts to the corresponding densities in $r$-set graphons.

%

\begin{lemma}\cite{ESz} \label{ch6:lem.lift}
For every separable realization $\phi$ and integer $n \geq r $ there exists a degree $n$ lifting $\psi$. 

\end{lemma}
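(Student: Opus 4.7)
The plan is to construct $\psi$ by transporting each coordinate of the separable realisation $\phi$ from $\hhh([r])$ to $\hhh([n],r)$. First, by condition~(2) together with the Doob–Dynkin lemma, each $\phi_f$ factors as $\phi_f = \tilde\phi_f \circ P_f$ for some measurable $\tilde\phi_f\colon \XX^f \to [0,1]$, and condition~(1) specialised to permutations $\pi \in S_r$ that fix the set $[s]$ forces $\tilde\phi_{[s]}$ to be $S_s$-invariant. For each $g \in \hhh([n],r)$ with $|g|=s$, I pick any bijection $\iota_g\colon g \to [s]$ and define
\[
\psi_g(x) \defeq \tilde\phi_{[s]}\bigl(\iota_g \cdot P_g(x)\bigr),
\]
where $\iota_g\cdot$ denotes the coordinate relabelling $\XX^g \to \XX^{[s]}$. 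The $S_s$-invariance guarantees independence of the choice of $\iota_g$, making $\psi_g$ well-defined.

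The lifting identity $p_{\hhh([r])} \circ \psi = \phi \circ P_{[r]}$ is then verified by, for each $f \in \hhh([r])$, using condition~(1) to produce a $\pi \in S_r$ sending $f$ to $[|f|]$; unpacking the definitions delivers $\psi_f(x) = \phi_f(P_{[r]}(x))$. Equivariance under $S_n$ likewise comes from the freedom in the $\iota_g$'s: for $\sigma \in S_n$, the admissible choice $\iota_{\sigma(g)} = \iota_g \circ \sigma^{-1}|_{\sigma(g)}$ gives $\iota_{\sigma(g)} \cdot P_{\sigma(g)}(\sigma x) = \iota_g \cdot P_g(x)$, hence $\psi_{\sigma(g)}(\sigma x) = \psi_g(x)$.

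The main obstacle is showing that $\psi$ is measure-preserving, equivalently that the joint law of $(\psi_g)_{g \in \hhh([n],r)}$ equals the product Lebesgue measure on $[0,1]^{\hhh([n],r)}$. I would reduce to finite product cylinders $\prod_{g \in S} A_g$: with $V = \bigcup_{g \in S} g$, measure-preservation of $P_V\colon \XX^{[n]} \to \XX^V$ moves the computation to $\XX^V$, and when $|V| \leq r$ a transport via any injection $V \hookrightarrow [r]$ realises $(\tilde\psi_g)_{g \in S}$ as a subset of the coordinates of $\phi$ itself, whose joint law is Lebesgue by measure-preservation of $\phi$. The harder case $|V|>r$ is handled inductively by peeling off an inclusion-maximal $g \in S$: the single-coordinate independence intrinsic to the ultraproduct measure on $\XX^V$ gives $\psi_g \perp \sigma(V \setminus g)$, while the transport of condition~(2) through $\iota_g$ yields $\psi_g \perp \sigma(g)^*|_{\XX^V}$, and combining these two independences over the relevant cylinder then lets one factor out $\int_{A_g}\!\du\lambda$ and conclude by induction on $|S|$. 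The technical crux of the proof is exactly this consolidation, i.e.\ checking that $\sigma(V \setminus g) \vee \sigma(g)^*|_{\XX^V}$ carries enough of the remaining $\psi_{g'}$ to support the factorisation, which is where the interplay between condition~(2) and the ultraproduct structure is essential.
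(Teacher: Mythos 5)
The paper offers no proof of this lemma at all --- it is imported from \cite{ESz} --- so there is no internal argument to compare with. Your construction of $\psi$ (factor each coordinate as $\phi_f=\tilde\phi_f\circ P_f$ by Doob--Dynkin, use condition (1) to make $\tilde\phi_{[s]}$ symmetric, transport it to every $g\in\hhh([n],r)$ through a bijection $g\to[|g|]$) is the natural one, and the verification of the lifting identity, of the $S_n$-equivariance, and of the case $|V|\le r$ of measure preservation are all fine.

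The gap is precisely at the step you yourself flag as the crux, and as you have set it up the step fails. After peeling off an inclusion-maximal $g\in S$, the remaining coordinates $\psi_{g'}$ with $g'\cap g\neq\emptyset$ and $g'\setminus g\neq\emptyset$ are $\sigma(g')$-measurable, and in the ultraproduct setting $\sigma(g')$ is \emph{not} contained in $\sigma(V\setminus g)\vee\sigma(g)^*$: the $\sigma$-algebra of an ultraproduct of product sets is in general strictly larger than the join of the coordinate $\sigma$-algebras, which is exactly the phenomenon this paper points out when it remarks that $\mathcal B_{\XX\times\YY}$ may strictly exceed the $\sigma$-algebra generated by $\mathcal B_\XX\times\mathcal B_\YY$. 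Consequently the product $\prod_{g'\in S\setminus\{g\}}\I(\psi_{g'}\in A_{g'})$ need not be measurable with respect to the $\sigma$-algebra from which you have shown $\psi_g$ to be independent, and "combining the two independences" does not allow you to factor out $\int_{A_g}\du\lambda$. What is actually required is the stronger structural fact, proved in \cite{ESz} by working at the level of the finite counting measures and passing to ultralimits, that a $\sigma(g)$-measurable set which is independent of $\sigma(g)^*$ is independent of (indeed relatively independent over $\sigma(g)^*$ from) the full join $\bigvee_{f\not\supseteq g}\sigma(f)$, which includes the straddling $\sigma(g')$'s. This total-independence property of the ultraproduct measure is not a formal consequence of the two independences you invoke, so the induction step, and with it the proof that $\psi$ is measure preserving, is incomplete as written.
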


The next statement is the colored version of the homomorphism correspondence in \cite{ESz} (Lemma 3.3. in that paper).

\begin{lemma}\label{ch6:lem.hom}
Let $\phi$ be a separable realization and $\WW=(W^1, \dots, W^k)$ be a $k$-colored $r$-graphon, and let $\HHH=(\HHH^1, \dots, \HHH^k)$ be a $k$-colored ultraproduct with $\mu_{[r]}(\HHH^\alpha \triangle \phi^{-1}(W^\alpha))=0$ for each $\alpha \in [k]$. Let $\psi$ be a degree $m$ lifting of $\phi$ and $\FFF$ be a $k$-colored $r$-graph on $m$ vertices. Then $\mu_{[m]}(\psi^{-1}(T(\FFF,\WW)) \triangle T(\FFF,\HHH))=0$, and consequently $t(\FFF,\WW)=t(\FFF,\HHH)$ for each $\FFF$.

\end{lemma}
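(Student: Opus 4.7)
My plan is to reduce the colored statement to the uncolored correspondence of \cite{ESz}, edge by edge and color by color, and then assemble the null symmetric differences using that the index set ${[m] \choose r}$ is finite. The density equality will then follow from the measure-preserving property of $\psi$ and the basic identity $t(\FFF,\WW)=\lambda(T(\FFF,\WW))$, $t(\FFF,\HHH)=\mu_{[m]}(T(\FFF,\HHH))$.

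First I would rewrite both target sets as a finite intersection over the $r$-edges of $[m]$. On the graphon side, unfolding the definition of $T(\FFF,\WW)$ gives
\begin{align*}
T(\FFF,\WW) = \bigcap_{e \in {[m] \choose r}} p_{\hhh(e)}^{-1}\bigl(W^{\FFF(e)}_{s_e}\bigr),
\end{align*}
where $p_{\hhh(e)}:[0,1]^{\hhh([m],r)}\to [0,1]^{\hhh(e)}$ is the natural projection and $W^{\FFF(e)}_{s_e}$ is the transport of $W^{\FFF(e)}\subset [0,1]^{\hhh([r])}$ by the bijection $s_e:[r]\to e$; the usual symmetries of the $W^\alpha$'s make this transport independent of the specific bijection modulo null sets. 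The ultraproduct side $T(\FFF,\HHH) = \bigcap_e P_e^{-1}(P_{s_e}(\HHH^{\FFF(e)}))$ is already in this form by \eqref{ch6:eq2}.

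Next I would establish the coordinate identity linking $\psi$ and $\phi$ along each edge. The lifting relation $p_{\hhh([r])}\circ \psi = \phi\circ P_{[r]}$ is the case $e=[r]$. For general $e$, pick a permutation $\pi_e \in S_m$ sending $[r]$ to $e$ via $s_e$; equivariance of $\psi$ under $S_m$ combined with the lifting relation yields
\begin{align*}
p_{\hhh(e)}\circ \psi = \phi_{s_e}\circ P_e,
\end{align*}
where $\phi_{s_e}$ is the transport of $\phi$ to a measure-preserving map $\XX^e\to [0,1]^{\hhh(e)}$. Pulling back gives $\psi^{-1}(p_{\hhh(e)}^{-1}(W^{\FFF(e)}_{s_e})) = P_e^{-1}(\phi_{s_e}^{-1}(W^{\FFF(e)}_{s_e}))$. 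Now the hypothesis $\mu_{[r]}(\HHH^\alpha \triangle \phi^{-1}(W^\alpha))=0$, transported via $s_e$, gives $\mu_e(P_{s_e}(\HHH^{\FFF(e)}) \triangle \phi_{s_e}^{-1}(W^{\FFF(e)}_{s_e}))=0$. Since $P_e$ is measure preserving, the pulled-back symmetric differences are $\mu_{[m]}$-null; taking a finite union over $e\in{[m]\choose r}$ and intersecting shows that $\mu_{[m]}(\psi^{-1}(T(\FFF,\WW))\triangle T(\FFF,\HHH))=0$, from which the density equality $t(\FFF,\WW)=t(\FFF,\HHH)$ follows since $\psi$ is measure preserving.

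The main obstacle is the passage from the distinguished edge $[r]$ to an arbitrary edge $e$: one must verify that equivariance of $\psi$ and the symmetries of $\phi$ yield a single well-defined identity $p_{\hhh(e)}\circ\psi = \phi_{s_e}\circ P_e$ modulo null sets, independent of the auxiliary $\pi_e$. This is precisely where the ``usual symmetries'' conditions built into both the separable realization and its lifting are essential. Once this coordinate identity is in place the coloring only introduces the index $\alpha\in[k]$ into an otherwise uncolored argument, so the rest of the proof runs parallel to the proof of the analogous lemma in \cite{ESz}.
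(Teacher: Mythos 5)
Your proposal is correct and follows essentially the same route as the paper, which merely sketches the argument: write $T(\FFF,\WW)$ and $T(\FFF,\HHH)$ as finite intersections over the edges $e\in{[m]\choose r}$, use the $S_m$-equivariance of $\psi$ together with the lifting relation to pull each factor back along $P_e$, and invoke the null symmetric differences $\mu_{[r]}(\HHH^\alpha\triangle\phi^{-1}(W^\alpha))=0$. Your write-up supplies the edge-by-edge coordinate identity and the finite-union bookkeeping that the paper leaves implicit, so it is a faithful, more detailed version of the same proof.
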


\begin{proof}
By definition we have that $$
T(F,\HHH)= \bigcap_{e \in {[m] \choose r}} P_e^{-1} (P_{s_e} (\HHH^{F(e)}))$$
and
$$
T(F,W)= \bigcap_{e \in {[m] \choose r}} p_{\hhh([r])}^{-1} (p_{s_e} (W^{F(e)})).
$$
Due to the fact that $\psi$ commutes with coordinate permutations from $S_n$ and the conditions we imposed on the symmetric difference of $\HHH^\alpha$ and $\phi^{-1}(W^\alpha)$ the statement follows.
\end{proof}

We turn to describe the relationship of two $r$-set graphons whose $\FFF$-densities coincide for each $\FFF$. For this purpose we have to introduce two types of transformations and clarify their connection. Let us define the $\sigma$-algebras $\mathcal A_S$, $\mathcal A^*_S$, and $\mathcal B_S \subset\LL_{[0,1]^{\hhh([r])} }$ for each $S \subset [r]$, the $\sigma$-algebra $\mathcal B_S = p^{-1}_S( \LL_{[0,1]^{\hhh(S)} } )$, $\mathcal A_S$ is $\langle \mathcal B_T | T \subset S \rangle$, and $\mathcal A^*_S$ is $\langle \mathcal B_T | T \subset S, T \neq S \rangle$, where $\LL_{[0,1]^t}$ denotes the Lebesgue measurable subsets of the unit cube with the dimension given by the index.

\begin{definition}\label{ch6:defstrpres}
We say that the measurable map $\phi:[0,1]^{\hhh([r])} \to [0,1]^{\hhh([r])}$ is structure preserving if it is measure preserving, for any $S \subset [r]$ we have $\phi^{-1}(\mathcal A_S) \subset \mathcal A_S$, for any measurable $I \subset [0,1]$ we have $\phi^{-1}(p_S^{-1} (I))$ is independent of $\mathcal A^*_S$, and for any $\pi \in S_r$ we have $\Pi \circ \phi= \phi \circ \Pi$, where $\Pi$ is the coordinate permutation action induced by $\pi$.
\end{definition}

Let $\mathcal L^{\hhh([r])}$ denote the measure algebra of $([0,1]^{\hhh([r])},\LL_{[0,1]^{\hhh([r])} }, \lambda )$. 

\begin{definition}
We call an injective homomorphism $\Phi:\mathcal L^{\hhh([r])} \to \mathcal L^{\hhh([r])}$ a structure preserving embedding if it is measure preserving, for any $S \subset [r]$ we have $\Phi(\mathcal B_S) \subset \mathcal A_S$, $\Phi(\mathcal B_S)$ is independent from $\mathcal A^*_S$, and for any $\pi \in S_r$ we have $\Pi \circ \Phi= \Phi \circ \Pi$.
\end{definition}

Another result from \cite{ESz} sheds light on the build-up of structure preserving embeddings. 

\begin{lemma}\cite{ESz}\label{ch6:lem.rep}
Suppose that $\Phi:\mathcal L^{\hhh([r])} \to \mathcal L^{\hhh([r])}$ is a structure preserving embedding of a measure algebra into itself. Then there exists a structure preserving map $\phi:[0,1]^{\hhh([r])} \to [0,1]^{\hhh([r])}$ that represents $\Phi$ in the sense that for each $[U] \in L^{\hhh([r])}$ it holds that $\Phi([U])=[\phi^{-1}(U)]$, where $U$ is a representative of $[U]$.
\end{lemma}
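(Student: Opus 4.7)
My plan is to construct $\phi=(\phi_T)_{T\in\hhh([r])}$ coordinate by coordinate, inducting on $|T|$. The clauses of \Cref{ch6:defstrpres} translate to the following requirements on each single-coordinate function $\phi_T\colon [0,1]^{\hhh([r])}\to[0,1]$: it must be $\mathcal A_T$-measurable (i.e., a function of the coordinates indexed by subsets of $T$), and its preimages under Borel subsets of $[0,1]$ must be independent of $\mathcal A^*_T$. For the base case $|T|=1$ the algebra $\mathcal A^*_{\{i\}}$ is trivial, so $\Phi$ restricted to the measure algebra of $\mathcal B_{\{i\}}$ is a measure-preserving embedding of a standard Lebesgue space into itself, which the classical Mackey--Rokhlin point-realization theorem realizes as some measurable $\phi_{\{i\}}\colon[0,1]\to[0,1]$.

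For the inductive step, fix $T$ with $|T|=k$ and assume the $\phi_U$ have been constructed for $U\subsetneq T$ in such a way that the joint map $(\phi_U)_{U\in\hhh(S)}$ represents $\Phi|_{\mathcal B_S}$ for every proper $S\subsetneq T$. Decompose $\mathcal B_T=\mathcal A^*_T\vee\sigma(x_T)$, where the two summands are independent because $x_T$ is a separate Lebesgue product coordinate. Since $\Phi|_{\mathcal A^*_T}$ is already realized, and since by hypothesis $\Phi(\mathcal B_T)$ is independent of $\mathcal A^*_T$, extending the realization to $\mathcal B_T$ reduces to representing the single new generator $[p_T^{-1}(I)]$ as a measurable subset of $\mathcal B_T$ whose preimage is independent of everything previously constructed. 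A relative version of the Mackey--Rokhlin theorem (equivalently, a disintegration of $\mathcal B_T$ over $\mathcal A^*_T$) supplies a measurable $\phi_T$ depending only on $x_T$ that does the job. Equivariance under $S_r$ is imposed by fixing one representative per $S_r$-orbit on $\hhh([r])$, performing the above construction only at the representatives, and transporting by $\Pi$; the hypothesis $\Pi\circ\Phi=\Phi\circ\Pi$ guarantees that the transported maps continue to realize $\Phi$ on the corresponding generators.

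Once every $\phi_T$ is assembled into $\phi$, measure preservation and the two structural clauses of \Cref{ch6:defstrpres} extend from the generating collection $\{\mathcal B_T\}_T$ to all of $\mathcal L^{\hhh([r])}$ because both $U\mapsto[\phi^{-1}(U)]$ and $\Phi$ are $\sigma$-algebra homomorphisms that coincide on a generating family. The hard part is the inductive step: it relies on a relative point-realization result of Mackey--Rokhlin type, that any measure-algebra extension whose image is independent of the base is realized by a point map into an independent Lebesgue coordinate. A secondary subtlety is synchronizing the induction with the $S_r$-symmetrization, since both have to respect the orbit structure of the $S_r$-action on $\hhh([r])$ for the final $\phi$ to be simultaneously structure-preserving and equivariant; this is handled by fixing a measurable system of orbit representatives before starting the induction.
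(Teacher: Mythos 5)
Your overall skeleton (realize $\Phi$ coordinate-by-coordinate via point realizations of measure-algebra homomorphisms, then symmetrize over $S_r$-orbits) is the standard route -- note the paper itself imports this lemma from \cite{ESz} without proof -- but your inductive step asserts something false, and it sits exactly where the content of the lemma lies: you claim the new coordinate function $\phi_T$ can be taken to depend only on $x_T$. The hypotheses give $\Phi(\mathcal B_T)\subset\mathcal A_T$ and independence of $\Phi(\mathcal B_T)$ from $\mathcal A^*_T$ (with the intended reading $\mathcal B_T=\sigma(x_T)$, $\mathcal A_T=\mathcal A^*_T\vee\mathcal B_T$, which your decomposition conflates); they do \emph{not} give $\Phi(\mathcal B_T)\subset\mathcal B_T$ modulo null sets, and without that no function of $x_T$ alone can represent $\Phi$ on $\mathcal B_T$. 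Concretely, for $r=2$ let $\Phi$ be induced by $\psi(x_{\{1\}},x_{\{2\}},x_{\{1,2\}})=(x_{\{1\}},\,x_{\{2\}},\,x_{\{1\}}+x_{\{2\}}+x_{\{1,2\}}\bmod 1)$: this is a structure preserving embedding, yet $\Phi(\mathcal B_{\{1,2\}})$ is generated by $x_{\{1\}}+x_{\{2\}}+x_{\{1,2\}}\bmod 1$, which is independent of $\mathcal A^*_{\{1,2\}}$ but not $\sigma(x_{\{1,2\}})$-measurable. The correct requirement is only that $\phi_T$ be $\mathcal A_T$-measurable; the independence of $\phi^{-1}(p_T^{-1}(I))$ from $\mathcal A^*_T$ is then \emph{inherited from the hypothesis on $\Phi$}, not from the product structure (which is presumably why you wanted $\phi_T$ to be a function of $x_T$). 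Once this is fixed, no induction and no relative realization theorem are needed: for each $T$ apply the ordinary point-realization theorem to the homomorphism $I\mapsto\Phi([p_T^{-1}(I)])$ from the Lebesgue algebra of $[0,1]$ into $([0,1]^{\hhh([r])},\mathcal A_T,\lambda)$ to get an $\mathcal A_T$-measurable $\phi_T$; measure preservation of the assembled $\phi$ follows because $\Phi$ preserves measure on the whole algebra (so the algebras $\Phi(\mathcal B_T)$, $T\in\hhh([r])$, are jointly independent, the image of a cylinder set having the correct measure), and both structural clauses of \Cref{ch6:defstrpres}, as well as the identity $\Phi([U])=[\phi^{-1}(U)]$, transfer from the generators to everything by the $\sigma$-completeness argument you give.

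The symmetrization also needs more than fixing orbit representatives: every $T$ with $|T|\geq 2$ has a nontrivial stabilizer in $S_r$, and for the transported family to be well defined and for $\Pi\circ\phi=\phi\circ\Pi$ to hold pointwise you must make the representative $\phi_T$ \emph{exactly} invariant under the coordinate permutations induced by the stabilizer of $T$. The hypothesis $\Pi\circ\Phi=\Phi\circ\Pi$ only yields $\phi_T\circ\Sigma=\phi_T$ almost everywhere, so one has to redefine $\phi_T$ on an invariant null set (e.g.\ where the orbit of values is not constant) to achieve exact invariance; this is routine, but it is the actual content of the symmetrization step, and choosing orbit representatives does not by itself address it.
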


A random coordinate system $\tau$ is the ultraproduct function on $\XX^{[r]}$ of the random symmetric functions $\tau_n: [n]^r \to [0,1]^{\hhh([n],r)}$ that are for each $n$ given by a uniform random point $Z_n$ in $[0,1]^{\hhh([n],r)}$ so that $(\tau_n(i_1, \dots, i_r))_e=(Z_n)_{p_e(i_1, \dots, i_r)}$. An important property of the random mapping $\tau_n$ is that for any $r$-set graphon and positive integer $n$ it holds that $(\tau_n)^{-1}(U)=\G(n,U)$, when the random sample $Z_n$ used to generate the two objects is the same.

\begin{lemma}\cite{ESz}
Let $U$ be an $r$-set graphon, and let $\HHH=[\{\G(n,U)\}_{n=1}^\infty]$. Then the random coordinate system $\tau=[\{\tau_n\}_{n =1}^\infty]$ is a separable realization such that with probability one we have $\mu_{[r]}(\HHH \triangle \tau^{-1}(U))=0$.
\end{lemma}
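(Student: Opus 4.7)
The proof has two parts: establishing that $\tau$ is almost surely a separable realization, and verifying the identity $\mu_{[r]}(\HHH \triangle \tau^{-1}(U))=0$. The second part follows almost directly from the finite-level equality $\tau_n^{-1}(U) = \G(n,U)$ noted just before the statement: the ultraproduct sets $\tau^{-1}(U) = [\{\tau_n^{-1}(U)\}_n]$ and $\HHH = [\{\G(n,U)\}_n]$ coincide at the level of the algebra $\P$, and the extension to arbitrary measurable $U$ follows by approximating $U$ with finite unions of cubes $\prod_e B_e \subset [0,1]^{\hhh([r])}$ once measure preservation is established. Equivariance (condition $1$ of a separable realization) is similarly immediate: the formula $(\tau_n(i_1,\dots,i_r))_e = (Z_n)_{\{i_j : j \in e\}}$ depends only on the set $\{i_j : j \in e\}$, so $\Pi \circ \tau_n = \tau_n \circ \pi$ holds pointwise on $[n]^r$ and transfers to $\tau$.

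The heart of the argument is to show, almost surely in $\{Z_n\}_n$, that $\tau$ is measure preserving and that condition $2$ holds. Both reduce by a monotone class argument to a countable family of product test sets $B = \prod_{e \in \hhh([r])} B_e$ with $B_e$ drawn from a countable generating family (rational subintervals, say). The crucial finite-level observation is that for $(i_1,\dots,i_r)$ off the diagonal $\diag([n]^r)$, the subsets $\{i_j : j \in e\}$ are pairwise distinct across $e \in \hhh([r])$, so the values $(Z_n)_{\{i_j : j \in e\}}$ are independent uniform $[0,1]$ random variables, whence the event $\tau_n(i_1,\dots,i_r) \in B$ has probability exactly $\prod_e \lambda(B_e)$. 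The diagonal occupies only an $O(1/n)$ fraction of $[n]^r$ and is thus asymptotically negligible. An Azuma--Hoeffding concentration analogous to \Cref{ch5:conc}, applied to the martingale obtained by successively exposing the coordinates of $Z_n$, shows that the density of $\tau_n^{-1}(B)$ in $[n]^r$ deviates from $\prod_e \lambda(B_e)$ by at most $O(n^{-1/2})$ with exponentially small failure probability. Borel--Cantelli over the countable family of test sets yields a single probability-one event on which the convergence holds for every $B$ simultaneously; taking the $\omega$-limit produces $\mu_{[r]}(\tau^{-1}(B)) = \lambda(B)$, hence measure preservation. The independence of $\tau_e^{-1}(A)$ from $\sigma(e)^*$ is obtained analogously by considering product test sets that mix $e$-constraints with $f$-constraints for $f \subsetneq e$: the involved coordinates of $Z_n$ lie in disjoint positions, giving factorization of joint probabilities at the finite level and hence independence in the limit.

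The main obstacle I expect is the bookkeeping required to lift the independence statement from the countable generating family to the full sub-$\sigma$-algebra $\sigma(e)^*$, which is generated by pull-backs from every $f \subsetneq e$ and is in general not countably generated in a way compatible with finitely many coordinates of $Z_n$. A Dynkin-class argument applied to a carefully enumerated countable collection, combined with a single probability-one event obtained via Borel--Cantelli, should carry the argument through, but one must take care never to quantify over an uncountable family while constructing the exceptional null set.
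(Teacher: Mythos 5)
First, note that the paper does not prove this lemma at all: it is quoted verbatim from \cite{ESz}, so the only meaningful comparison is with the argument in that source. Your outline of the easier ingredients is sound and matches what is needed there: the finite-level identity $\tau_n^{-1}(U)=\G(n,U)$, equivariance of $\tau_n$ under $S_r$, measure preservation checked on the countable $\pi$-system of rational product cubes (with the diagonal contributing an $O(1/n)$ error), Azuma/McDiarmid concentration plus Borel--Cantelli, and the open/closed approximation needed to pass from the ultraproduct of preimages to the preimage under the ultralimit map for a general measurable $U$.

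The genuine gap is in your treatment of condition $2$ of a separable realization, i.e.\ the independence of $\tau_e^{-1}(A)$ from $\sigma(e)^*=\langle P_f^{-1}(\mathcal B_{\XX^f}) \mid f\subsetneq e\rangle$. This $\sigma$-algebra is \emph{not} generated (even modulo null sets) by pull-backs of cubes under $\tau$ or by any countable family fixed in advance: its generators are arbitrary internal sets $[\{B_n\}_{n}]$ with $B_n\subset X_n^f$, the resulting measure algebra is non-separable (the paper itself stresses this non-separability when it restricts \Cref{ch6:thm.sep} to separable sub-$\sigma$-algebras), and the sequence $B_n$ may be chosen \emph{after} the randomness $Z_n$ is realized. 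Hence ``countable test family $+$ Dynkin $+$ one Borel--Cantelli event'', which you yourself flag as the delicate point, cannot close the argument: independence from a non-countably-generated $\sigma$-algebra is not implied by independence from a pre-selected countable collection. What actually makes the lemma true is an entropy-versus-concentration union bound at each finite level: for $|e|=s$ and a \emph{fixed} Boolean combination $B_n$ of cylinders over $X_n^f$, $f\subsetneq e$, the normalized count of tuples in $B_n$ whose level-$s$ coordinate of $Z_n$ lands in $A$ concentrates with failure probability at most $2\exp(-c_s\varepsilon^2 n^{s})$, while the number of such $B_n$ is at most $2^{O(n^{s-1})}$; the union bound over \emph{all} of them is therefore summable, Borel--Cantelli gives an almost-sure statement uniform over all internal sets, and arbitrary elements of $\sigma(e)^*$ are then handled by approximating them in measure by finite Boolean combinations of internal cylinder sets. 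Without this uniform-over-all-internal-sets step (or an equivalent device), the independence claim — the heart of the lemma — is not established by your proposal.
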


A direct consequence is the statement for $k$-colored $r$-set graphons.
\begin{corollary}\label{ch6:cor.sep}
Let $\UU=(U^1, \dots, U^k)$ be a $k$-colored $r$-set graphon, and let $\HHH=(\HHH^1, \dots, \HHH^k)$ be a $k$-colored ultraproduct in $\XX^{[r]}$, where $\HHH^\alpha=[\{\G(n,U^\alpha)\}_{n=1}^\infty]$ for each $\alpha \in [k]$. Then a random separable realization $\tau$ is such that with probability one we have $\mu_{[r]}(\HHH^\alpha \triangle \tau^{-1}(U^\alpha))=0$ for each $\alpha \in [k]$.
\end{corollary}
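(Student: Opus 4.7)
The plan is to reduce the $k$-colored corollary to the single-color lemma by using a common random coordinate system across all colors, and then combining $k$ probability-one events via a finite union bound.

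First, I would fix a single sequence of uniformly chosen random points $\{Z_n\}_{n=1}^\infty$ with $Z_n \in [0,1]^{\hhh([n],r)}$, and let $\tau_n$ be the random symmetric map determined by $Z_n$ as in the definition preceding the single-color lemma. Using the same $Z_n$, I would generate simultaneously the samples $\G(n,U^\alpha) := \tau_n^{-1}(U^\alpha)$ for every $\alpha \in [k]$. Since $(U^1,\dots,U^k)$ is a measurable partition of $[0,1]^{\hhh([r])}$ satisfying the usual symmetries, each $U^\alpha$ is itself an $r$-set graphon in the sense used in the lemma, and the sets $\G(n,U^\alpha)$ together form a measurable partition of $[n]^{r}$ with the correct joint distribution. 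The ultraproduct $\HHH^\alpha = [\{\G(n,U^\alpha)\}_{n=1}^\infty]$ is then exactly the $\alpha$-th coordinate of the $k$-colored ultraproduct in the hypothesis of the corollary, so nothing is lost by adopting this coupling.

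Second, for each fixed $\alpha \in [k]$ I would invoke the preceding lemma with the single $r$-set graphon $U^\alpha$ and the common random coordinate system $\tau = [\{\tau_n\}_{n=1}^\infty]$. This yields a measurable event $E_\alpha$ of probability one on which $\tau$ is a separable realization and $\mu_{[r]}(\HHH^\alpha \triangle \tau^{-1}(U^\alpha))=0$. Crucially, since $\tau$ depends only on $\{Z_n\}$ and not on $\alpha$, the ``$\tau$ is a separable realization'' assertion is the same statement across all $\alpha$; only the measure-theoretic coupling identity varies with the color.

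Third, because $k$ is finite, the intersection $\bigcap_{\alpha \in [k]} E_\alpha$ still has probability one. On this intersection, the single separable realization $\tau$ satisfies $\mu_{[r]}(\HHH^\alpha \triangle \tau^{-1}(U^\alpha))=0$ simultaneously for every $\alpha \in [k]$, which is precisely the conclusion. The only substantive point in the whole argument is arranging, at the outset, that one and the same $\tau$ be used uniformly across all color classes; once this coupling is in place, finiteness of $k$ makes the union-bound step automatic, so no new probabilistic or measure-theoretic input beyond the single-color lemma is required.
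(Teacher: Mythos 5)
Your proposal is correct and matches the paper's intent: the paper states the corollary as a direct consequence of the preceding lemma without written proof, and the intended argument is exactly your coupling of all color classes through one common random coordinate system (so $\G(n,U^\alpha)=\tau_n^{-1}(U^\alpha)$ for every $\alpha$) followed by intersecting the $k$ probability-one events.
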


The following result is a generalization of the uniqueness assertion of \cite{ESz}, and states that subgraph densities determine an $r$-set graphon up to structure preserving transformations.
\begin{theorem}\label{ch6:thm.uni}
Let $\UU=(U^1, \dots, U^k)$ and $\VV=(V^1, \dots, V^k)$ be two $k$-colored $r$-set graphons such that for each $k$-colored $r$-graph $\FFF$ it holds that $t(\FFF,\UU)=t(\FFF,\VV)$. Then there exist two structure preserving maps $\nu_1$ and $\nu_2$ from $[0,1]^{\hhh([r])}$ to $[0,1]^{\hhh([r])}$ such that $\mu_{[r]}(\nu_1^{-1}(U^\alpha)\triangle \nu_2^{-1}(V^\alpha))=0$ for each $\alpha \in [k]$.

\end{theorem}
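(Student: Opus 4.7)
The plan is to realize both set graphons on a common ultraproduct via coupled sampling so that the resulting $k$-colored ultraproducts coincide, and then extract $\nu_1,\nu_2$ from a single dominating separable realization using \Cref{ch6:lem.rep}. The identity $t(\FFF,\WW) = \PP(\G(q,\WW) = \FFF)$ converts the hypothesis $t(\FFF,\UU) = t(\FFF,\VV)$ into the assertion that $\G(n,\UU)$ and $\G(n,\VV)$ have identical distributions on $\GGG_n^{r,k}$ for every $n$. I therefore couple, at each level $n$ and independently across $n$, uniform points $Z_U^{(n)}, Z_V^{(n)} \in [0,1]^{\hhh([n],r)}$ by first drawing a common graph $G_n$ from the shared distribution and then picking $Z_U^{(n)}, Z_V^{(n)}$ uniformly from the slices on which the associated deterministic coordinate maps pull $\UU$, respectively $\VV$, back to $G_n$. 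Each marginal remains uniform, so the ultraproducts $\tau_U := [\{Z_U^{(n)}\}_{n=1}^\infty]$ and $\tau_V := [\{Z_V^{(n)}\}_{n=1}^\infty]$ are almost surely separable realizations by \Cref{ch6:cor.sep}, and by construction $\tau_U^{-1}(U^\alpha) = \tau_V^{-1}(V^\alpha)$ modulo null sets for every $\alpha \in [k]$.

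Take the smallest separable sub-$\sigma$-algebra $\mathcal A \subset \mathcal B_{[r]}$ containing both $\tau_U^{-1}(\LL_{[0,1]^{\hhh([r])}})$ and $\tau_V^{-1}(\LL_{[0,1]^{\hhh([r])}})$, and apply \Cref{ch6:thm.sep} to obtain a separable realization $\phi : \XX^{[r]} \to [0,1]^{\hhh([r])}$ whose generated sub-$\sigma$-algebra equals $\mathcal A$ modulo null sets. Writing $\tau_U^*, \tau_V^*, \phi^* : \mathcal L^{\hhh([r])} \to \mathcal L_{[r]}$ for the induced measure-algebra embeddings, the injectivity of $\phi^*$ together with the containment of images allows me to define injective homomorphisms $\Phi_1 := (\phi^*)^{-1} \circ \tau_U^*$ and $\Phi_2 := (\phi^*)^{-1} \circ \tau_V^*$ from $\mathcal L^{\hhh([r])}$ to itself. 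The separable-realization axioms for $\tau_U, \tau_V, \phi$—invariance under $S_r$, $\phi_e^{-1}(A) \in \sigma(e)$, and independence of $\sigma(e)^*$—pass through these compositions and imply that $\Phi_1, \Phi_2$ are structure-preserving embeddings in the sense of the definition preceding \Cref{ch6:lem.rep}. That lemma then yields structure-preserving maps $\nu_1, \nu_2 : [0,1]^{\hhh([r])} \to [0,1]^{\hhh([r])}$ representing $\Phi_1$ and $\Phi_2$. Letting $B = (B^\alpha)_{\alpha \in [k]}$ denote the common $k$-coloring of $[0,1]^{\hhh([r])}$ provided by $\phi$, for which $\phi^{-1}(B^\alpha) = \tau_U^{-1}(U^\alpha) = \tau_V^{-1}(V^\alpha)$ modulo null, I obtain $\nu_1^{-1}(U^\alpha) = B^\alpha = \nu_2^{-1}(V^\alpha)$ modulo null, which is the claimed conclusion.

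The main obstacle is the coupling step: one must simultaneously equate $\tau_U^{-1}(U^\alpha)$ and $\tau_V^{-1}(V^\alpha)$ across \emph{all} colors $\alpha \in [k]$ while keeping each coordinate system marginally uniform so that \Cref{ch6:cor.sep} applies individually. The slice-sampling construction above accomplishes this, and it is the genuinely new ingredient compared to the uncolored uniqueness argument of \cite{ESz}, where a single color class trivially does not require joint matching. A secondary technicality is the verification that the composed embedding $(\phi^*)^{-1} \circ \tau_U^*$ inherits the full list of structure-preserving axioms rather than merely measure preservation; this is handled by the fact that $\phi$ was built to contain both $\tau_U$ and $\tau_V$ jointly, so the sub-$\sigma$-algebra and relative independence conditions on $\mathcal B_{[r]}$ can be reflected back through $\phi^*$ to the required conditions on $\mathcal L^{\hhh([r])}$.
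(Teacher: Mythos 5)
Your proposal is correct and follows essentially the same route as the paper: couple the samples through the common distribution of $\G(n,\UU)$ and $\G(n,\VV)$, use \Cref{ch6:cor.sep} to get two separable realizations pulling $U^\alpha$ and $V^\alpha$ back to the same ultraproduct sets, then pass through a separable realization of the joint $\sigma$-algebra (\Cref{ch6:thm.sep}) to obtain structure preserving embeddings and apply \Cref{ch6:lem.rep}. Your composed embeddings $(\phi^*)^{-1}\circ\tau_U^*$, $(\phi^*)^{-1}\circ\tau_V^*$ are exactly the paper's $\psi_1,\psi_2$; you merely spell out the coupling that the paper leaves implicit.
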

\begin{proof}
The equality $t(\FFF,\UU)=t(\FFF,\VV)$ for each $\FFF$ implies that $\GG(n, \UU)$ and $\GG(n,\VV)$ have the same distribution $Y_n$ for each $n$. Let $\HHH=[\{Y_n\}_{n=1}^\infty]$, then \Cref{ch6:cor.sep} implies that there exist separable realizations $\phi_1$ and $\phi_2$ such that $\mu_{[r]}(\HHH^\alpha \triangle \phi_1^{-1}(U^\alpha))=0$ and $\mu_{[r]}(\HHH^\alpha \triangle \phi_2^{-1}(V^\alpha))=0$ for each $\alpha \in [k]$, therefore also $\mu_{[r]}(\phi_1^{-1}(U^\alpha) \triangle \phi_2^{-1}(V^\alpha))=0$. 
Set $\mathcal A= \sigma(\phi_1^{-1}(\LL_{[0,1]^{\hhh([r])} }),\phi_1^{-1}(\LL_{[0,1]^{\hhh([r])} }) )$ that is a separable $\sigma$-algebra on $\XX^{[r]}$ so by \Cref{ch6:thm.sep} there exists a separable realization $\phi_3$ such that for each measurable 
$A \subset [0,1]^{\hhh([r])}$ the element $\phi_i^{-1}(A)$ of $\mathcal A$ can be represented by a subset  of $[0,1]^{\hhh([r])}$ denoted by $\psi_i(A)$. It is easy to check that the maps $\psi_1$ and $\psi_2$ defined this way are structure preserving embeddings from $\mathcal L^{\hhh([r])} \to \mathcal L^{\hhh([r])}$ satisfying $\lambda(\psi_1(U^\alpha)\triangle \psi_2(V^\alpha))=0$ for each $\alpha \in [k]$. We conclude that by 
\Cref{ch6:lem.rep} there are structure preserving $\nu_1$ and $\nu_2$  such that $\lambda(\nu_1^{-1}(U^\alpha)\triangle \nu_2^{-1}(V^\alpha))=0$ for each $\alpha \in [k]$.
\end{proof}

The next result is perhaps also meaningful beyond the framework of this paper and is the main contribution in the current section. Recall the definition of the ground state energies (GSE), \Cref{ch6:def.gse1} and \Cref{ch6:def.gse2}. 

\begin{theorem}\label{ch5:samp}
For any $J=(J^1, \dots, J^k)$ with $J^{\alpha}$ being a real $r$-array of size $q$ for each $\alpha \in [k]$ the parameter of $k$-colored $r$-graphs $\EEE_{r-1} ( . ,J)$ is testable.
\end{theorem}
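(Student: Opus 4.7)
The plan is to reduce the theorem via concentration to uniform convergence of sample expectations, then derive a contradiction using the ultralimit representation of \cite{ESz}. By \Cref{ch5:conc} applied both to finite hypergraphs and to graphons, $\EEE_{r-1}(\G(m,\HHH),J)$ is exponentially concentrated around its mean, so testability follows once we show that $\E\EEE_{r-1}(\G(m,\HHH),J)\to\EEE_{r-1}(\HHH,J)$ as $m\to\infty$ uniformly in $\HHH\in\GGG^{r,k}$.

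Assume this uniform convergence fails: there are $\varepsilon>0$, $m_n\to\infty$, and $\HHH_n\in\GGG_{N_n}^{r,k}$ with $N_n\to\infty$ such that
\begin{align*}
\bigl|\E\EEE_{r-1}(\G(m_n,\HHH_n),J)-\EEE_{r-1}(\HHH_n,J)\bigr|>\varepsilon.
\end{align*}
Fix a non-principal ultrafilter $\omega$, form the ultraproduct $\HHH=[\{\HHH_n\}_{n=1}^\infty]$ as a $k$-partition of $\XX^{[r]}$, and apply \Cref{ch6:thm.sep} to the separable $\sigma$-algebra generated by its classes to obtain a separable realization $\phi:\XX^{[r]}\to[0,1]^{\hhh([r])}$ and a $k$-colored $r$-set graphon $\UU=(U^1,\dots,U^k)$ with $\mu_{[r]}(\HHH^\alpha\triangle\phi^{-1}(U^\alpha))=0$ for each $\alpha$. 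Combining \Cref{ch6:lem.hom} with degree-$m$ liftings from \Cref{ch6:lem.lift} gives $t(\FFF,\UU)=\lim_\omega t(\FFF,\HHH_n)$ for every $\FFF\in\GGG_m^{r,k}$, so the distribution of $\G(m,\HHH_n)$ converges along $\omega$ to that of $\G(m,\UU)$.

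The core technical claim is a continuity statement: for \emph{any} sequence $\HHH_n'\in\GGG^{r,k}$ of vertex cardinality tending to infinity whose $\FFF$-densities converge to $t(\FFF,\UU)$ for each $\FFF$, one has $\lim_\omega\EEE_{r-1}(\HHH_n',J)=\EEE_{r-1}(\UU,J)$. Applied to $\HHH_n$ itself, this yields $\lim_\omega\EEE_{r-1}(\HHH_n,J)=\EEE_{r-1}(\UU,J)$. Applied to the random sequence $\G(m_n,\HHH_n)$, whose sample $\FFF$-densities concentrate around those of $\HHH_n$ by standard sampling, it yields $\lim_\omega\EEE_{r-1}(\G(m_n,\HHH_n),J)=\EEE_{r-1}(\UU,J)$ almost surely; combining this with \Cref{ch5:conc} transfers the identity to $\lim_\omega\E\EEE_{r-1}(\G(m_n,\HHH_n),J)$, producing the contradiction.

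The continuity claim is established by matching inequalities. For $\liminf_\omega\EEE_{r-1}(\HHH_n',J)\geq\EEE_{r-1}(\UU,J)$, take a near-optimal symmetric $q$-partition of $[0,1]^{\hhh([r-1])}$ for $\UU$, pull it back to a $\sigma([r-1])$-measurable partition of $\XX^{[r-1]}$ through the coordinate-$(r-1)$ restriction of $\phi$, and approximate it by partitions of $[N_n']^{r-1}\setminus\diag([N_n']^{r-1})$ of asymptotically equal energy against $\HHH_n'$, using the definition of the ultraproduct measure. For the reverse direction, take a near-optimal sequence of finite partitions $\P_n'$, form their ultraproduct as a symmetric $q$-partition of $\XX^{[r-1]}$, enlarge the separable sub-$\sigma$-algebra to include it jointly with $\HHH$, and reapply \Cref{ch6:thm.sep} to obtain a separable realization that simultaneously represents $\UU$ at the $r$-level and the new partition at the $(r-1)$-level, yielding a valid competitor in the definition of $\EEE_{r-1}(\UU,J)$. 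The main obstacle is exactly this joint transfer at the $(r-1)$-level: since $\sigma([r-1])$ is in general strictly larger than $\sigma([r-1])^*$, a partition measurable at level $r-1$ is not automatically visible from lower-level data and must be incorporated by enlarging the separable sub-$\sigma$-algebra and reinvoking \Cref{ch6:thm.sep}, with \Cref{ch6:thm.uni} and \Cref{ch6:lem.rep} reconciling different representations up to structure-preserving maps.
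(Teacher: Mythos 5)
Your proposal runs on the same machinery as the paper (contradiction, the concentration bound of \Cref{ch5:conc}, ultraproducts of both the original and the sampled sequence, separable realizations via \Cref{ch6:thm.sep}, reconciliation via \Cref{ch6:thm.uni}), but the way you organize the endgame through the two-sided continuity claim has a genuine gap, located exactly at the step ``yielding a valid competitor in the definition of $\EEE_{r-1}(\UU,J)$''. When you enlarge the separable sub-$\sigma$-algebra to contain the ultraproduct of the near-optimal partitions and reapply \Cref{ch6:thm.sep}, the new realization represents $\HHH'$ by some $k$-colored $r$-set graphon $\UU''$ which need not be $\UU$; your transported partition is therefore a competitor for $\EEE_{r-1}(\UU'',J)$, not for $\EEE_{r-1}(\UU,J)$. \Cref{ch6:thm.uni} identifies $\UU$ and $\UU''$ only after pulling both back by structure preserving maps $\nu_1,\nu_2$ (so that $\nu_1^{-1}(U^\alpha)=\nu_2^{-1}(U''^\alpha)$ a.e.), hence what you actually get is a competitor for the pullback $\nu_1^{-1}(\UU)$, i.e.\ the bound $\limsup_\omega\EEE_{r-1}(\HHH_n',J)\leq \EEE_{r-1}(\nu_1^{-1}(\UU),J)$. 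To finish you would need $\EEE_{r-1}(\nu^{-1}(\UU),J)\leq\EEE_{r-1}(\UU,J)$ for structure preserving $\nu$; but only the opposite inequality is automatic (a partition for $\UU$ pulls back along $\nu$ to a partition of equal energy because $\nu^{-1}(\mathcal A_{[r]\setminus\{j\}})\subset\mathcal A_{[r]\setminus\{j\}}$), while pushing a partition forward along a non-invertible structure preserving map requires a substantive extra argument (a factorization/conditional-independence statement at level $r-1$ plus rounding of the resulting fractional partition) that the proposal does not supply. The same representation-transfer problem reappears when you apply the claim to the sample sequence, which is not represented by $\UU$ itself.

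The paper avoids ever comparing ground state energies at the graphon level, and this is precisely how it sidesteps the missing invariance: it transfers only the optimal partitions of the samples --- through $\phi_2$, the maps furnished by \Cref{ch6:thm.uni}, and $\phi_1$ --- back into the ultraproduct of the \emph{original} sequence, where the transported sets lie in $\sigma_1([r]\setminus\{j\})$ and therefore arise ($\omega$-almost everywhere) from honest finite partitions $\RRR_n$ of ${[m_n] \choose r-1}$; the contradiction is then drawn at the finite level from the trivial inequality $\EEE_{\RRR_n,r-1}(\HHH_n,J)\leq\EEE_{r-1}(\HHH_n,J)$ against the concentration estimate. So either prove the invariance of $\EEE_{r-1}(\cdot,J)$ under structure preserving pullbacks as a separate lemma, or reorganize your argument so that, as in the paper, the final comparison is made against the finite hypergraphs rather than against $\EEE_{r-1}(\UU,J)$.
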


\begin{proof}

We may assume that $\|J^\alpha\|_\infty \leq 1$ for every $\alpha$ without losing generality.
We proceed by contradiction. Suppose there exist an $\varepsilon>0$ and a sequence of $k$-colored $r$-uniform hypergraphs $\{\HHH_n\}_{n=1}^\infty$ with $V(\HHH_n)=[m_n]$ tending to infinity that are such that for each $n$ with probability at least $\varepsilon$ we have that $\EEE_{r-1} (\HHH_n,J) + \varepsilon \leq \EEE_{r-1} (\G(n, \HHH_n),J)$. Let $\GG_n=(G_n^1, \dots, G_n^k)$ denote the random $k$-colored hypergraph $\G(n, \HHH_n)$ for each $n$ with $G_n^\alpha=\G(n, H_n^\alpha)$.
The previous event can be reformulated as stating that for each $n$ with probability at least $\varepsilon$ there is a partition $\P_n=(P_n^1, \dots, P_n^q)$ of ${[n] \choose r-1}$ such that the expression
\begin{align*}
 \frac{1}{n^r}\sum_{\alpha=1}^{k} \sum_{i_1, \dots, i_r =1}^q J^\alpha(i_1, \dots, i_r) e_{G_n^\alpha} (r; P_n^{i_1}, \dots, P_n^{i_r})\end{align*}
 is larger than 
 \begin{align*} \frac{1}{m_n^r}\sum_{\alpha=1}^{k} \sum_{i_1, \dots, i_r =1}^q J^\alpha(i_1, \dots, i_r) e_{H_n^\alpha} (r; R_n^{i_1}, \dots, R_n^{i_r}) + \varepsilon \end{align*}
for any partition $\RRR_n=(R_n^1, \dots, R_n^q)$ of ${[m_n] \choose r-1}$.

Let $\HHH$ denote the ultralimit of the hypergraph sequence $\{\HHH_n\}_{n=1}^\infty$ that is a $k$-partition in the measure space $(\XX^{[r]}_1, \mathcal B_1, \mu_1)$, and let $\sigma_1(S)$ and $\sigma_1(S)^*$ denote the sub-$\sigma$-algebras of $\mathcal B_1$ corresponding to subsets $S$ of $[r]$. Due to \Cref{ch6:thm.sep} 
 there exists a separable realization $\phi_1: \XX^{[r]}_1 \to [0,1]^{\hhh([r])}$ such that there is a $k$-colored $r$-set graphon $\WW=(W^1, \dots, W^k)$ satisfying $\mu_1(\phi_1^{-1}(W^\alpha) \triangle \HHH^\alpha)=0$ for each $\alpha \in [k]$. Let $\GG(s)$ stand for the point-wise ultralimit realization of the $\{\GG_n(s)\}_{n=1}^\infty \subset \XX^{[r]}_2$ for all $s \in \mathbb S$, where $(\mathbb S, \mathcal S, \nu)$ denotes the underlying joint probability space for the random hypergraphs, and $(\XX^{[r]}_2, \mathcal B_2, \mu_2)$ is the ultraproduct measure space in the case of the sample sequence, $\sigma_2(S)$ and $\sigma_2(S)^*$ are the corresponding sub-$\sigma$-algebras. Note that the ultralimits $\GG(s)$ are not $k$-partitions of the same ultraproduct space as $\HHH$, moreover, it is possible that the $\sigma$-algebra generated by $\{\GG(s)| s \in \mathbb S\}$ together with $\mu_2$ form a non-separable measure algebra that prevents us from using \Cref{ch6:thm.sep}
 directly. 

Suppose that for some $n$ we have that $\E \EEE_{r-1} (\GG_n,J) < \EEE_{r-1} (\HHH_n,J) + 3/4 \varepsilon$. This assumption implies by \Cref{ch5:conc} that $\PPP(\EEE_{r-1} (\GG_n,J) \geq \EEE_{r-1} (\HHH_n,J) + \varepsilon ) \leq \PPP(\EEE_{r-1} (\GG_n,J) \geq \E \EEE_{r-1} (\GG_n,J) + \varepsilon/4) \leq 2\exp(-\frac{\varepsilon^2 n}{64 r^2})$. The last bound is strictly smaller than $\varepsilon$ when $n$ is chosen sufficiently large, therefore it contradicts the main assumption for large $n$. Therefore we can argue that $\E \EEE_{r-1} (\GG_n,J) \geq \EEE_{r-1} (\HHH_n,J) + 3/4 \varepsilon$ for large $n$, throwing away a starting piece of the sequence $\{\HHH_n\}_{n=1}^\infty$ we may assume that it holds for all $n$. 

A second application of \Cref{ch5:conc} leads to a lower bound on the probability that $\EEE_{r-1} (\GG_n,J)$ is close to $\EEE_{r-1} (\HHH_n,J)$, namely $\PPP(\EEE_{r-1} (\GG_n,J) \leq \EEE_{r-1} (\HHH_n,J) + \varepsilon/2) \leq 2\exp(-\frac{\varepsilon^2 n}{64 r^2})$. Hence, by invoking the Borel-Cantelli Lemma, we infer that with probability one the event $\EEE_{r-1} (\GG_n,J) \leq \EEE_{r-1} (\HHH_n,J) + \varepsilon/2$ can occur only for finitely many $n$, let the $M_1$ denote the (random) threshold for which is true that  $\EEE_{r-1} (\GG_n,J) > \EEE_{r-1} (\HHH_n,J) + \varepsilon/2$ for every $n \geq M_1$. It follows that $\lim_\omega \EEE_{r-1} (\GG_n,J) > \lim _\omega \EEE_{r-1} (\HHH_n,J) + \varepsilon/2$  with probability $1$.    

Next we will show that with probability one $\GG$ is equivalent to $\HHH$ in the sense that for each $k$-colored $r$-graph $\FFF$ it holds that $t(\FFF,\GG)=t(\FFF,\HHH)$. Then, since there are countably many test graphs $\FFF$, we can conclude that the equality holds simultaneously for all $\FFF$ with probability $1$. 

We have seen above in the paragraph after \eqref{ch6:eq2}  that for every fixed $k$-colored $r$-uniform hypergraph $t(\FFF,\HHH)=\lim_{\omega} t(\FFF,\HHH_n)$. On the other hand the subgraph densities in random induced subgraphs are highly concentrated around their mean, that is $$\PPP(|t(\FFF, \GG_n)-t(\FFF,\HHH_n)| \geq \delta) \leq 2 \exp(-\frac{\delta^2 n}{2|V(\FFF)|^2})$$ for any $\delta>0$, this follows with basic martingale techniques, see Theorem 11 in \cite{ESz} for the almost identical statement together with a proof. The Borel-Cantelli Lemma implies then for every fixed $\FFF$ that with probability one for each $\delta>0$ there exists a (random) $n_0(\delta)$ such that for each $n \geq n_0(\delta)$ it is true that $|t(\FFF,\GG_n)-t(\FFF,\HHH_n)| < \delta/2$. Let us fix $\delta >0$ and $\FFF \in \GGG^{r,k}$. Since the set $\{n |\quad |t(F,\HHH_n)-t(F,\HHH)| < \delta/2 \}$ belongs to  $\omega$ by the definition of the ultralimit function, it holds that  $\{n | \quad |t(\FFF,\GG_n)-t(F,\HHH)| < \delta \} \in \omega$ as a consequence of 
\begin{align*}
\{n | &\quad |t(\FFF,\GG_n)-t(\FFF,\HHH)| < \delta \} \\ &\supset (\{n | \quad |t(\FFF,\GG_n)-t(\FFF,\HHH_n)| < \delta/2 \} \cap \{n | \quad |t(\FFF,\HHH_n)-t(\FFF,\HHH)| < \delta/2 \}) \in \omega. 
\end{align*}
 Consequently, $\lim_\omega t(\FFF,\GG_n)=t(\FFF,\HHH)$ with probability one for each $\FFF$, and the limit equation holds simultaneously for each $\FFF$ also with probability one, since their number is countable.

Let us pick a realization $\{\GG_n(s)\}_{n=1}^\infty$ of $\{\GG_n\}_{n=1}^\infty$ such that it satisfies $\lim_{\omega} \EEE_{r-1} (\GG_n(s),J) - \lim_\omega \EEE_{r-1} (\HHH_n,J) \geq \varepsilon/2$ and $\lim_\omega t(\FFF,\GG_n(s))=t(\FFF,\HHH)$ for each $\FFF$, the preceding discussion implies that such a realization exists, in fact almost all of them are like this. Furthermore, let us consider the sequence of partitions $\P_n=(P_n^1, \dots, P_n^q)$ of ${[n] \choose r-1}$ that realize $\EEE_{r-1} (\GG_n(s),J)$, and define $T^{i,j}_n\subset [n]^r \setminus \diag([n]^r) $ through the inverse images of the projections $A_{T^{i,j}_n}=(p^n_j)^{-1}(A_{P_n^i})$ for $i \in [q]$, $j \in [r]$, and $n \in \N$, where $p^n_j$ is the projection that maps an $r$-array of size $n$ onto an $(r-1)$-array by erasing the $j$th coordinate. Note that the $T^{i,j}_n$'s are not completely symmetric, but are invariant under coordinate permutations from $S_{[r] \setminus \{j\}}$ for the corresponding $j \in [r]$. A further property is that and  $T^{i,j_1}_n$ can be obtained from $T^{i,j_2}_n$ swapping the coordinates corresponding to $j_1$ and $j_2$.

We additionally define the ultraproducts of these sets by  $\PP^i=[\{P^i_n\}_{n=1}^\infty] \subset \XX_2^{[r-1]}$ and $\TT^{i,j}=[\{T^{i,j}_n\}_{n=1}^\infty] \subset \XX_2^{[r]}$, it is clear that $\TT^{i,j} \in \sigma_2([r]\setminus \{j\})$ for each pair of $i$ and $j$, so $\cap_{(i,j)\in I} \TT^{i,j} \in \sigma_2([r])^*$ for any $I \subset [q]\times [r]$, and that $\XX_2^{[r-1]} = \cup_i \PP^i$. The same symmetry assumptions apply for the $\TT^{i,j}$'s as for the $T^{i,j}_n$'s described above.

We also require the fact that these ultraproduct sets defined above establish a correspondence between the GSE of $\GG(s)$ and the ultralimit of the sequence of energies $\{\EEE_{r-1} (\GG_n(s),J)\}_{n=1}^\infty$. 

This  can be seen as follows: Recall that  
\begin{align*}
 \EEE_{r-1} (\GG_n(s),J) = \frac{1}{n^r}\sum_{\alpha=1}^{k} \sum_{i_1, \dots, i_r =1}^q J^\alpha(i_1, \dots, i_r) |G_n^\alpha\cap (\cap_{j=1}^q T_n^{i_j,j})|\end{align*}
This formula together with the identities $[\{G^\alpha_n(s) \cap (\cap_{j=1}^q T_n^{i_j})\}_{n=1}^\infty] = \GG^\alpha(s) \cap (\cap_{j=1}^q \TT^{i_j,j})$, and 
that the ultralimit of subgraph densities equals the subgraph density of the ultraproduct imply that
\begin{align*}
\lim_\omega \EEE_{r-1} (\GG_n(s),J) = \sum_{\alpha=1}^{k} \sum_{i_1, \dots, i_r =1}^q J^\alpha(i_1, \dots, i_r) \mu_2(\GG^\alpha(s) \cap (\cap_{j=1}^q \TT^{i_j,j})).
\end{align*}
Now consider the separable sub-$\sigma$-algebra $\mathcal A$ of $\mathcal B_2$ generated by the collection of the sets $\GG^1(s), \dots, \GG^k(s), \TT^{1,1}, \dots, \TT^{q,r}$. Then by \Cref{ch6:thm.sep} 
 there exists a separable realization $\phi_2: \XX_2^{[r]} \to [0,1]^{\hhh([r])}$ and measurable sets $U^1, \dots, U^k, V^{1,1}, \dots, V^{q,r}$ such that $\mu_{2}(\phi_2^{-1}(U^\alpha) \triangle \GG^\alpha(s))=0$  for each $\alpha \in [k]$ and $\mu_{2}(\phi_2^{-1}(V^{i,j}) \triangle \TT^{i,j})=0$ for every $i \in [q],j \in [r]$. Additionally, we can modify the $V^{i,j}$'s on a set of measure $0$ such that each of them only depends on the coordinates corresponding to the sets in $\hhh([r]\setminus \{j\})$, is invariant under coordinate permutations induced by elements of $S_{[r]}$ that fix $j$, and  $V^{i,j_1}$ can be obtained from $V^{i,j_2}$ by relabeling the coordinates according to the $S_r$ permutation swapping $j_1$ and $j_2$. Also, $(U^1, \dots, U^k)$ form a $k$-colored $r$-set graphon $\UU$ when we make modifications on null sets. Most importantly, the separable realization $\phi_2$ is measure preserving, so we have that
\begin{equation} \label{ch6:eq20}
\lim_\omega \EEE_{r-1} (\GG_n(s),J) = \sum_{\alpha=1}^{k} \sum_{i_1, \dots, i_r =1}^q J^\alpha(i_1, \dots, i_r) \lambda(U^\alpha \cap (\cap_{j=1}^r V^{i_j,j})).
\end{equation}
On the other hand we established that $t(\FFF,\GG(s))=t(\FFF,\HHH)$ for each $\FFF$,which implies $t(\FFF,\UU)=t(\FFF,\WW)$, therefore the uniqueness statement of \Cref{ch6:thm.uni}
 ensures the existence of two structure preserving measurable maps $\nu_1, \nu_2:[0,1]^{\hhh([r])} \to [0,1]^{\hhh([r])}$ such that $\lambda(\nu_1^{-1}(W^\alpha) \triangle \nu_2^{-1}(U^\alpha))=0$ for each $\alpha \in [k]$. 

Now let us define the sets $\SSS^{i,j}=\phi_1^{-1}(\nu_2(\nu_1^{-1} (V^{i,j})))$, these satisfy exactly the same symmetry properties as the $\TT^{i,j}$'s above, by the measure preserving nature of the maps involved we have that 
\begin{equation}\label{ch6:eq21}
\lim_\omega \EEE_{r-1} (\GG_n(s),J) = \sum_{\alpha=1}^{k} \sum_{i_1, \dots, i_r =1}^q J^\alpha(i_1, \dots, i_r) \mu_1(\HHH^\alpha \cap (\cap_{j=1}^r \SSS^{i_j,j})).\end{equation}
 The properties of structure preserving maps imply that $\SSS^{i,j} \in \sigma_1([r]\setminus \{j\})$ for each $i,j$, so $\cap_{(i,j)\in I} \SSS^{i,j} \in \sigma_1([r])^*$ for any $I \subset [q]\times [r]$. Also, the ultraproduct construction makes it possible to assert the existence of a sequence of partitions $\RRR_n=(R_n^1, \dots, R_n^q)$ of ${[m_n] \choose {r-1}}$  for $\omega$-almost every $n$ such that $\SSS^{i,j}=[\{(p^{m_n}_j)^{-1} (R_n^i)\}_{n=1}^\infty]$. But again by the correspondence principle between ultralimits of sequences and ultraproducts in \Cref{ch6:lem.hom} applied to (\ref{ch6:eq20}) and (\ref{ch6:eq21}) we have  
$$\lim_\omega \EEE_{\RRR_n,r-1} (\HHH_n,J) = \lim_\omega \EEE_{r-1} (\GG_n(s),J),$$
which contradicts $\lim_{\omega} \EEE_{r-1} (\GG_n(s),J) - \lim_{\omega} \EEE_{r-1} (\HHH_n,J) \geq \varepsilon/2$.
      
\end{proof}

An immediate consequence is that the above theorem is also true for $r$-graphons.

\begin{corollary}\label{ch6:cor.samp}
For any $J=(J^1, \dots, J^k)$ with $J^{\alpha}$ being a real $r$-array of size $l$ for each $\alpha \in [k]$ there exists for any $\varepsilon>0$ a $q(\varepsilon)$ integer such that for any $k$-colored $r$-set graphon $\WW$ and $q \geq q(\varepsilon)$ it holds that $$ \PPP (|\EEE_{r-1} ( \WW,J)-\EEE_{r-1} (\G(q,\WW),J)| > \varepsilon ) < \varepsilon. $$
\end{corollary}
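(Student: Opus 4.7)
The plan is to reduce the statement to \Cref{ch5:samp} by inserting a large finite intermediate $k$-colored $r$-graph $\HHH = \G(N, \WW)$ between $\WW$ and the $q$-sample, and to exploit a sampling coincidence that identifies $q$-samples of $\WW$ and of $\HHH$ in distribution. The sampling coincidence reads as follows: for any integers $q \leq N$ and any $k$-colored $r$-graphon $\WW$, the two-stage sample $\G(q, \G(N, \WW))$, obtained by first forming $\HHH = \G(N, \WW)$ and then selecting $q$ vertices of $[N]$ uniformly and taking the induced subhypergraph, has the same distribution as $\G(q, \WW)$. The reason is that, after conditioning on the choice of the $q$-subset $T \subseteq [N]$, the relevant coordinates $(X_S)_{S \subseteq T,\, 1 \leq |S| \leq r-1}$ driving $\G(N, \WW)$ are themselves uniformly distributed in $[0,1]^{\hhh(T, r-1)}$, and the independent color draws on edges contained in $T$ follow exactly the same rule as in the direct construction of $\G(q, \WW)$.

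The second ingredient is the convergence $\EEE_{r-1}(\G(N, \WW), J) \to \EEE_{r-1}(\WW, J)$ in probability as $N \to \infty$, for every fixed graphon $\WW$. By the graphon extension of \Cref{ch5:conc} noted right after its proof, it suffices to show $\E \EEE_{r-1}(\G(N, \WW), J) \to \EEE_{r-1}(\WW, J)$. For the lower bound I would fix a symmetric partition $\P = (P^1, \dots, P^l)$ of $[0,1]^{\hhh([r-1])}$ within $\eta$ of the supremum, lift it to the random partition $\tilde \P_N = (\tilde P^1, \dots, \tilde P^l)$ of ${[N] \choose r-1}$ via $f \in \tilde P^i$ iff $(X_{f'})_{f' \subseteq f,\, f' \neq \emptyset} \in P^i$ (well-defined up to a null modification because $P^i$ satisfies the usual symmetries), and compute
$$
\E\, \EEE_{\tilde \P_N, r-1}(\G(N, \WW), J) = \EEE_{\P, r-1}(\WW, J) + O(r^2/N),
$$
with the error term arising from diagonal $r$-tuples of $[N]^r$. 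The matching upper bound is the main obstacle; my intention is to import the ultralimit argument of \Cref{ch5:samp} almost verbatim: along a putative counterexample sequence $N_n \to \infty$, form the ultralimit $\HHH$ of $\{\G(N_n, \WW)\}_{n=1}^\infty$, use the concentration of $\FFF$-densities in a random sample to conclude $t(\FFF, \HHH) = t(\FFF, \WW)$ almost surely for every $\FFF$, invoke \Cref{ch6:thm.uni} to relate the representations of $\HHH$ and $\WW$ by structure-preserving maps, and pull back an almost-optimal partition from the limit object to a measurable partition of $[0,1]^{\hhh([r-1])}$ whose energy against $\WW$ would exceed $\EEE_{r-1}(\WW, J)$, a contradiction.

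To combine the ingredients, I would set $q(\varepsilon) := q_f(\varepsilon/8)$ from \Cref{ch5:samp} and, for a graphon $\WW$ and any $q \geq q(\varepsilon)$, pick $N = N_0(\WW, \varepsilon/8)$ from the previous step, so that with probability at least $1 - \varepsilon/8$ the hypergraph $\HHH := \G(N, \WW)$ satisfies $|\EEE_{r-1}(\HHH, J) - \EEE_{r-1}(\WW, J)| \leq \varepsilon/8$. An elementary subsampling argument (using the finitary analogue $\G(q(\varepsilon), \G(q, \HHH)) =_d \G(q(\varepsilon), \HHH)$ of the sampling coincidence, applied through a coupling that reads off the first $q(\varepsilon)$ and the first $q$ vertices of a uniform random permutation of $V(\HHH)$) upgrades \Cref{ch5:samp} at sample size $q(\varepsilon)$ to the same type of bound for any $q \geq q(\varepsilon)$ at the cost of a factor two in the error, yielding $|\EEE_{r-1}(\HHH, J) - \EEE_{r-1}(\G(q, \HHH), J)| \leq \varepsilon/4$ with probability at least $1 - \varepsilon/4$. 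Since the sampling coincidence identifies $\G(q, \HHH)$ with $\G(q, \WW)$ in distribution, the triangle inequality finally delivers
$$
\PPP\bigl(|\EEE_{r-1}(\WW, J) - \EEE_{r-1}(\G(q, \WW), J)| > \varepsilon\bigr) < \varepsilon,
$$
as required.
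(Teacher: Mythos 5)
Your overall bootstrapping scheme is reasonable: the sampling coincidence (that $\G(q,\G(N,\WW))$ and $\G(q,\WW)$ have the same distribution), the subsampling/coupling upgrade of \Cref{ch5:samp} to all $q\geq q(\varepsilon)$ (which is in fact more explicit than the paper's sketch on this point), and the lower-bound half of the convergence $\E\,\EEE_{r-1}(\G(N,\WW),J)\to\EEE_{r-1}(\WW,J)$ are all fine. The genuine gap is the upper-bound half of that convergence, which you yourself call ``the main obstacle'' and then dispose of in one sentence; that is precisely the hard content, and the ``almost verbatim'' import of the ultralimit argument of \Cref{ch5:samp} does not go through verbatim. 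In that proof the contradiction is closed by mapping the near-optimal partitions of the samples back into the ultraproduct $\XX^{[r]}_1$ of the \emph{original finite hypergraphs}, where every $\sigma_1([r]\setminus\{j\})$-measurable set agrees up to a null set with an ultraproduct of finite sets, so one obtains honest partitions $\RRR_n$ of ${[m_n] \choose r-1}$ whose energies are dominated by $\EEE_{r-1}(\HHH_n,J)$ by the very definition of the GSE. In your adaptation the original object is the graphon $\WW$ itself, and what \Cref{ch6:thm.uni} hands you is a pair of structure preserving maps $\nu_1,\nu_2$ with $\nu_1^{-1}(W^\alpha)=\nu_2^{-1}(U^\alpha)$ a.e.; pulling the sets $V^{i,j}$ back through $\nu_2$ therefore produces a symmetric partition of $[0,1]^{\hhh([r-1])}$ whose energy is large \emph{against the pullback} $\nu_1^{-1}(\WW)$, not against $\WW$. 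To reach the contradiction you would additionally need that the GSE of a graphon does not increase under pullback by a structure preserving map (say, by conditioning to get a fractional partition and then de-fractionalizing via multilinearity), a lemma that appears neither in the paper nor in your proposal; as written, the claimed partition ``whose energy against $\WW$ would exceed $\EEE_{r-1}(\WW,J)$'' has not been produced.

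The paper avoids this entirely by choosing the intermediate finite object differently: approximate $\WW$ in the $1$-norm by a step-function graphon, obtain a finite $k$-colored $r$-graph $\GG$ from it by randomization, and use that the GSE is Lipschitz with respect to the $1$-norm while the sampling distributions at the fixed depth $q_0(\varepsilon/2)$ are close in total variation by \Cref{ch6:hypercountcor}, the cut norm being dominated by the $1$-norm; then \Cref{ch5:samp} applied to $\GG$ finishes the proof. If you replace your second ingredient by this $L^1$-approximation step, keeping your subsampling argument for the claim about all $q\geq q(\varepsilon)$, your proof closes without any new ultralimit work.
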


\begin{proof}
We only sketch the proof here, details are left to the reader. The main idea is to find for any fixed $\varepsilon>0$, and  for each $k$-colored $r$-set graphon $\WW$ a $\GG \in \GGG^{r,k}$ such that their GSE are sufficiently close, and further, the distributions of $\G(q_0(\varepsilon/2), \WW)$ and $\G(q_0(\varepsilon/2),\GG)$ are close enough in terms of $\varepsilon$, where $q_0$ is the sample complexity of $\EEE_{r-1} (.,J)$,  whose existence is ensured by \Cref{ch5:samp}.  
Fix $\varepsilon>0$, and let $\WW$ be a $k$-colored $r$-set graphon. By measurability for any $\Delta >0$ there exists an integer $l$ and a $k$-colored $r$-set graphon $\UU$ such that each $U^\alpha$ is a union of cubes $\times_{S \in \hhh([r])} [\frac{z_S-1}{l}, \frac{z_S}{l}]$ with $z \in \Z^{\hhh([r])}$ and $\sum_{\alpha=1}^k \|W^\alpha-U^\alpha\|_1 \leq \Delta$. For a fixed, but sufficiently small $\Delta$, let $\GG$ be the $k$-colored $r$-graph on $l$ vertices that is obtained by randomization form $\UU$ using the independent uniform $[0,1]$-valued random variables $(X_S)_{S \in \hhh([l],r)\setminus \hhh([l],1)}$. Then by standard large deviations results it follows that the $1$-norm of $U^\alpha-W_{\GG^\alpha}$ is highly concentrated around $0$. By definition, the deviation of the GSE's of two $r$-graphons can be upper bounded by a constant multiple of their difference in the $1$-norm. By \Cref{ch6:hypercountcor} the same is true for the total variation distance of the corresponding measures for the fixed sampling depth $q_0(\varepsilon/2)$, as the cut-norm is dominated by the $1$-norm. The quantity $\sum_{\alpha=1}^k \|W^\alpha-W_{\GG^\alpha}\|_1$ can be made arbitrarily small by the above discussion, which proves the result.   
\end{proof}
We can derive a substantial property of the cut norm form the above theorem. Recall the definition of the relevant norms, \Cref{ch6:def.cutnorm}.

\begin{lemma}\label{ch6:hyperregpres}
Let $r \geq 1$. For any $\varepsilon >0 $ and $t \geq 1$ there exists an integer $l_{0}(r,\varepsilon,t)\geq 1$ a such that  for any symmetric $r$-kernel $U$ that takes values in $[-1, 1]$, and  for any integer $l \geq l_0(r,\varepsilon,t)$ it holds with probability at least $1- \varepsilon$ that
$$
\left| \sup_{\QQ, t_\QQ\leq t} \|U\|_{\square,r, \QQ} - \sup_{\QQ, t_\QQ\leq t} \|W_{\G(l,U)}\|_{\square,r, \QQ} \right| \leq \varepsilon,
$$ 
 where the supremum at both places goes over symmetric partitions $\QQ$ of $[0,1]^{\hhh([r-1])}$ into at most $t$ classes.

\end{lemma}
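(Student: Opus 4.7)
The plan is to reduce the statement to the testability of ground state energies established in \Cref{ch6:cor.samp}, via the identity
\begin{equation*}
\sup_{\QQ,\,t_\QQ\le t}\|W\|_{\square,r,\QQ}\;=\;\max_{\epsilon\in\{-1,1\}^{[t]^r}}\EEE_{r-1}(W,J_\epsilon),
\end{equation*}
which I claim holds for every symmetric $r$-kernel $W$ with values in $[-1,1]$, where $\{J_\epsilon\}$ is a finite family of $\{-1,0,1\}$-valued real $r$-arrays of dimension $q_0:=2^r t$. Once the identity is available, the lemma will follow by applying \Cref{ch6:cor.samp} to each of the $2^{t^r}$ GSEs $\EEE_{r-1}(\,\cdot\,,J_\epsilon)$ in turn and combining them by a union bound.

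To set up the identity, I would fix a symmetric partition $\QQ=(Q_1,\ldots,Q_t)$ (padding with empty classes if $t_\QQ<t$) and symmetric test sets $S_1,\ldots,S_r\subset[0,1]^{\hhh([r-1])}$, and then take $\P$ to be the common refinement of $\QQ$ with the binary partitions $\{S_i,S_i^c\}_{i\in[r]}$. Its classes
\begin{equation*}
P_{j,\tau}\;=\;Q_j\cap\bigcap_{i:\,\tau_i=1}S_i\cap\bigcap_{i:\,\tau_i=0}S_i^c,\qquad(j,\tau)\in[t]\times\{0,1\}^r,
\end{equation*}
are each symmetric, so $\P$ is a legitimate symmetric partition of size at most $q_0$. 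Expanding $\I_{S_m\cap Q_{j_m}}=\sum_{\tau:\,\tau_m=1}\I_{P_{j_m,\tau}}$ inside $\|W\|_{\square,r,\QQ}$ and using the elementary equality $\sum_j|a_j|=\sup_\epsilon\sum_j\epsilon_j a_j$ will turn the cut-$\QQ$-norm at $(S_i)$ into $\sup_\epsilon\EEE_{\P,r-1}(W,J_\epsilon)$ with
\begin{equation*}
J_\epsilon\bigl((j_m,\tau^m)_{m=1}^r\bigr)\;=\;\epsilon_{j_1,\ldots,j_r}\prod_{m=1}^r\I(\tau^m_m=1).
\end{equation*}
Swapping the outer supremum over $(\QQ,S_i)$ with the finite max over $\epsilon$ gives $\sup_\QQ\|W\|_{\square,r,\QQ}\le\max_\epsilon\EEE_{r-1}(W,J_\epsilon)$. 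The converse is immediate: any symmetric partition $\P$ with classes labeled by $[t]\times\{0,1\}^r$ decodes into a $(\QQ,S_i)$ via $Q_j=\bigcup_\tau P_{j,\tau}$ and $S_i=\bigcup_{(j,\tau):\,\tau_i=1}P_{j,\tau}$, for which $\EEE_{\P,r-1}(W,J_\epsilon)=\sum_j\epsilon_j\int W\prod_m\I_{S_m\cap Q_{j_m}}\le\|W\|_{\square,r,\QQ}$.

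Given the identity, I would set $\delta=\varepsilon/2^{t^r+1}$ and apply \Cref{ch6:cor.samp} to each of the $2^{t^r}$ arrays $J_\epsilon$. The testability statement extends from $[0,1]$-valued colored graphons to $[-1,1]$-valued kernels $U$ by a routine argument---for instance, writing $U=2\tilde W-\mathbf{1}$ with $\tilde W=(1+U)/2$ and viewing $(\tilde W,1-\tilde W)$ as a $2$-colored graphon with signed array $(J_\epsilon,-J_\epsilon)$, or by observing that the concentration estimate in \Cref{ch5:conc} and the ultralimit machinery of \Cref{ch5:samp} carry over to kernels verbatim. This yields a sample threshold $q_\epsilon(\delta)$ beyond which $|\EEE_{r-1}(U,J_\epsilon)-\EEE_{r-1}(\G(l,U),J_\epsilon)|\le\delta$ holds with probability at least $1-\delta$, and a union bound will ensure that with probability at least $1-\varepsilon/2$ the estimate holds simultaneously for every $\epsilon$, whence $|\max_\epsilon\EEE_{r-1}(U,J_\epsilon)-\max_\epsilon\EEE_{r-1}(\G(l,U),J_\epsilon)|\le\delta$. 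A final deterministic correction of order $C(q_0,r)/l$, accounting for the diagonal cubes in passing from the sampled hypergraph $\G(l,U)$ to its kernel representation $W_{\G(l,U)}$ (analogously to~\eqref{ch6:eq22}), absorbs the remaining $\varepsilon/2$ once $l$ is sufficiently large, and then $l_0(r,\varepsilon,t)$ can be taken to exceed all the thresholds simultaneously.

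The main obstacle I foresee is the identity itself: the bookkeeping over the refined index set $[t]\times\{0,1\}^r$ has to be arranged so that the symmetry of the classes and the correspondence between $(\QQ,S_i)$-structures and symmetric partitions of size $q_0$ are both transparent, and so that the $\sup$-$\max$ swap really loses nothing. The extension of \Cref{ch6:cor.samp} from colored graphons to $[-1,1]$-valued kernels is a routine but nontrivial secondary point.
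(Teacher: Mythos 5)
Your core construction is the same as the paper's: rewriting $\sup_{\QQ,\,t_\QQ\le t}\|W\|_{\square,r,\QQ}$ as a maximum over the $2^{t^r}$ sign arrays of ground state energies whose arrays live on the index set $[t]\times\{0,1\}^r$ of size $2^rt$ is exactly the paper's identity \eqref{ch6:eq13} with $J^\alpha_A=y_\alpha(A\otimes B_0)$ (your $\prod_m\I(\tau^m_m=1)$ is precisely $B_0$), and the conclusion via \Cref{ch6:cor.samp} applied to each array with the error parameter divided by the number of arrays is also how the paper finishes. So the skeleton and the key identity match.

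The genuine gap is the point you dismiss as routine: extending GSE testability from colored graphons to $[-1,1]$-valued kernels. \Cref{ch5:samp} and \Cref{ch6:cor.samp} are proved only for $k$-colored $r$-graphs and $r$-set graphons, and neither of your two suggested fixes works as stated. The reduction $U=2\tilde W-\mathbf 1$ with the $2$-colored graphon $(\tilde W,1-\tilde W)$ does give matching energies at the graphon level, but it changes the sampling model: in this paper $\G(l,U)$ for a kernel is the weighted sample whose entries are $U(X_{\hhh(e,r-1)})$ (this is what makes the step $\left|\|W_{\G(l,U)}\|_{\square,r,\QQ}-\|W_{\G(l,W)}\|_{\square,r,\QQ}\right|\le\varepsilon/4$ meaningful), whereas $\G(l,(\tilde W,1-\tilde W))$ is a randomized rounding in which edge $e$ gets color $1$ with probability $\tilde W(X_{\hhh(e,r-1)})$; comparing the GSEs of the rounded and the weighted samples requires an extra concentration argument uniform over all partitions into $2^rt$ classes, which you neither state nor prove. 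Likewise, the claim that the ultralimit machinery of \Cref{ch5:samp} carries over ``verbatim'' is unsupported: that proof manipulates ultraproducts of partitions of $X_i^r$, not weighted arrays. The paper's device, which you are missing, is to discretize the range of $U$ into $k=O(1/\varepsilon)$ levels $y_1,\dots,y_k$ (losing $\varepsilon/4$ in every cut-$\QQ$ norm on both the kernel and the sample side), view the discretized kernel as a $k$-colored graphon with indicator color classes $W^\alpha=\I_{W=y_\alpha}$, and fold the level values into the arrays $J^\alpha_A=y_\alpha(A\otimes B_0)$; since the $W^\alpha$ are $\{0,1\}$-valued, the colored sample coincides with the weighted kernel sample and \Cref{ch6:cor.samp} applies exactly as stated. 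With that discretization inserted in place of your ``routine extension,'' your argument becomes the paper's proof.
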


\begin{proof}
Let us fix $\varepsilon >0$, $r,t \geq 1$, and let $U$ be arbitrary. In this lemma we deal with $r$-graphons instead of $r$-set graphons, Fubini's Theorem ensures that we can apply \Cref{ch5:samp} correctly later on. 

Showing that there exits an $l_0$ not depending on $U$ such that for each $l \geq l_0$ it holds that 
$\sup_{\QQ, t_\QQ\leq t}\|U\|_{\square,r, \QQ} - \sup_{\QQ, t_\QQ\leq t}\|W_{\G(l,U)}\|_{\square,r, \QQ}  \leq \varepsilon$ with failure probability at most $\varepsilon /2$ is a routine exercise, we only have to consider a tuple $(S_i)_{i \in [r]}$ of symmetric subsets of $[0,1]^{\hhh([r-1])}$ and a symmetric partition $\QQ^0$ of $[0,1]^{\hhh([r-1])}$ into at most $t$ classes such that 
\begin{align*}\sup_{\QQ, t_\QQ\leq t}\|U\|_{\square,r, \QQ}=\sum_{j_1, \dots,j_r=1}^t |\int_{\cap_{i \in [r]} p_{[r]\setminus \{i\}}^{-1}(S_i \cap Q^0_{j_i} ) } U(x_{\hhh([r],r-1)}) \du \lambda(x_{\hhh([r],r-1)}) |,
\end{align*}
and use Markov's inequality.
The difficult part is to show that if $l$ is large enough then  for each $U$ it holds that
$$
\sup_{\QQ, t_\QQ\leq t} \|W_{\G(l,U)}\|_{\square,r, \QQ}- \sup_{\QQ, t_\QQ\leq t} \|U\|_{\square,r, \QQ} \leq \varepsilon
$$
with probability at least $1- \varepsilon/2$.

First we have to discretize the range of $U$ in order to apply the above result on $k$-colored $r$-graphs, \Cref{ch6:cor.samp}. Therefore we split the interval $[-1,1]$ into consecutive intervals $I_1, \dots, I_k$ of length at most $\varepsilon/4$, let $y_j=\inf I_j$ for each $j \in [k]$, and define the $r$-kernel $W(x)=\sum_{j=1}^k \I_{I_j} (U(x)) y_j$. Then $\|U-W\|_\infty \leq \varepsilon/4$, so therefore $\left|\|U\|_{\square,r, \QQ}-\|W\|_{\square,r, \QQ}\right| \leq \varepsilon/4$ and $\left|\|W_{\G(l,U)}\|_{\square,r, \QQ}-\|W_{\G(l,W)}\|_{\square,r, \QQ}\right| \leq \varepsilon/4$ for any $\QQ$ and $l$. Thus, it suffices to 
show the existence of an $l_0$ not depending on $U$ or $W$ such that for each $l \geq l_0$ we have $$
 \|W_{\G(l,W)}\|_{\square,r, \QQ} - \|W\|_{\square,r, \QQ}  \leq \varepsilon/2
$$ 
for each partition symmetric $\QQ$ of $[0,1]^{\hhh([r-1])}$ into at most $t$ classes simultaneously with probability at least $1- \varepsilon/2$.

We can rewrite $\sup_{\QQ, t_\QQ\leq t} \|W\|_{\square,r, \QQ}$ as an optimization problem, more precisely 
\begin{align}\label{ch6:eq14}
&\sup_{\QQ, t_\QQ\leq t} \|W\|_{\square,r, \QQ} \\ & =
\sup_{\QQ, t_\QQ\leq t}\max_{A \in \mathbb A}\sup_{\substack{T_j \subset [0,1]^{\hhh([r-1])}  \\ j \in [r]}}
\sum_{i_1, \dots, i_r=1}^t A(i_1, \dots, i_r) \int\limits_{[0,1]^{\hhh([r],r-1)}} W(x_{\hhh([r],r-1)}) \prod_{j=1}^r \I_{T_j \cap Q_{i_j}} (x_{\hhh([r] \setminus \{j\})})    \du \lambda(x_{\hhh([r],r-1)}),
\end{align}
where $\mathbb A$ denotes the set of all $r$-arrays of size $t$ with $\{-1,1\}$ entries, and the set and partitions involved are symmetric.

If we swap the order of the maximization operation on the right of the above formula (\ref{ch6:eq14}), then it can be turned into a generalized energy for each $A \in \mathbb A$. In more detail, consider $W$ as a $k$-colored $r$-graphon with $W^{\alpha}= \I_{W=y_\alpha}$ for each $\alpha \in [k]$, with slight abuse of notation we set $W=(W^{\alpha})_{\alpha \in [k]}$.  We also define the $r$-array $B_0$ of size $2^r$, indexed by the power set of $[r]$ so that $B_0(i_{S_1}, \dots, i_{S_r})$ is equal to $1$ if for every $j \in [r]$ we have $j \in S_j$, and is equal to $0$ otherwise. Let $J^\alpha_A=y_\alpha (A \otimes B_0)$ be the tensor product of $A$ and $B_0$ for each $A \in \mathbb A$ multiplied with the scalar $y_\alpha$ with $\alpha \in [k]$, then $J_A^\alpha$ is an $r$-array of size $2^r t$. 
It follows that 
\begin{align}\label{ch6:eq13}
\max_{A \in \mathbb A} \EEE_{r-1}(W, J_A) = \sup_{\QQ, t_\QQ\leq t} \|W\|_{\square,r, \QQ}.
\end{align} 
Similarly, 
$$\max_{A \in \mathbb A} \EEE_{r-1}(W_{\G(l,W)}, J_A)=\sup_{\QQ, t_\QQ\leq t} \|W_{\G(l,W)}\|_{\square,r, \QQ},$$
hence 
\begin{align*}
\sup_{\QQ, t_\QQ\leq t} \|W_{\G(l,W)}\|_{\square,r, \QQ}- \sup_{\QQ, t_\QQ\leq t} \|W\|_{\square,r, \QQ} \leq  \max_{A \in \mathbb A} |\EEE_{r-1}(W_{\G(l,W)}, J_A) -\EEE_{r-1}(W, J_A)|. 
\end{align*}
The function $\EEE_{r-1}( . , J_A)$ is testable by \Cref{ch6:cor.samp}, say with sample complexity $q_1(\varepsilon,r,l,k)$, so $\sup_{\QQ, t_\QQ\leq t} \| . \|_{\square,r, \QQ}$ is testable with sample complexity $l_0(r,\varepsilon,t) = q_1(\varepsilon / |\mathbb A|,r,m, 2^r t)$. 

\end{proof}

In fact, we will require the version of \Cref{ch6:hyperregpres} for $k$-tuples $r$-kernels.

\begin{lemma}\label{ch6:hyperregpres2}
Let $r,k \geq 1$. For any $\varepsilon >0 $ and $t \geq 1$ there exists an integer $q_{\cut}(r,k,\varepsilon,t)\geq 1$ a such that  for any  $k$-tuple of $r$-kernel $U_1, \dots, U_k$ that take values from $[-1, 1]$, and any integer $q \geq q_{\cut}(r,k,\varepsilon,t)$ it holds with probability at least $1- \varepsilon$ that
$$
\left| \sup_{\QQ, t_\QQ\leq t} \sum_{j=1}^{k}\|U_j\|_{\square,r, \QQ} - \sup_{\QQ, t_\QQ\leq t} \sum_{j=1}^{k} \|W_{\G(q,U_j)}\|_{\square,r, \QQ} \right| \leq \varepsilon
$$ 
 where the supremum at both places goes over symmetric partitions $\QQ$ of $[0,1]^{\hhh([r-1])}$ into at most $t$ classes.
\end{lemma}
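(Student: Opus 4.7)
The plan is to repeat the argument of Lemma~\ref{ch6:hyperregpres} while packaging the $k$ kernels into a single $M$-colored $r$-set graphon, and then invoke Corollary~\ref{ch6:cor.samp} exactly once.

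First I would discretize each $U_j$ exactly as in the single-kernel proof: choose consecutive intervals $I^{(j)}_1,\dots,I^{(j)}_{m_j}$ of length at most $\varepsilon/(8k)$ covering $[-1,1]$, set $y^{(j)}_\alpha = \inf I^{(j)}_\alpha$, and replace $U_j$ by $W_j = \sum_\alpha y^{(j)}_\alpha \I_{U_j \in I^{(j)}_\alpha}$. Because $\|U_j - W_j\|_\infty \leq \varepsilon/(8k)$, the quantity $\bigl|\|U_j\|_{\square,r,\QQ} - \|W_j\|_{\square,r,\QQ}\bigr|$ is bounded by $\varepsilon/(8k)$ uniformly in $\QQ$, and similarly on the sampled side; summing over $j$ this contributes at most $\varepsilon/4$ on each side, so it suffices to establish the claim with error $\varepsilon/2$ for the $W_j$'s.

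Next I would form the common refinement: for every tuple $\alpha = (\alpha_1,\dots,\alpha_k) \in \prod_j [m_j]$ put $\hat W^\alpha = \prod_j \I_{W_j = y^{(j)}_{\alpha_j}}$. This yields a genuine $M$-colored $r$-set graphon $\hat\WW = (\hat W^\alpha)$ with $M \leq \prod_j m_j$ that inherits the usual symmetries, together with the reconstruction identity $W_j = \sum_\alpha y^{(j)}_{\alpha_j}\hat W^\alpha$. Following the derivation of equation~(\ref{ch6:eq13}) in the proof of Lemma~\ref{ch6:hyperregpres} but applied simultaneously to all $j$ (so that the $r$ independent set choices per kernel are absorbed, alongside the shared partition $\QQ$, into one oversized symmetric partition of $[0,1]^{\hhh([r-1])}$), one obtains
\begin{equation*}
\sup_{\QQ,\, t_\QQ \leq t}\,\sum_{j=1}^k \|W_j\|_{\square,r,\QQ} \;=\; \max_{(A_1,\dots,A_k) \in \mathbb{A}^k} \EEE_{r-1}\bigl(\hat\WW,\, \hat J_{A_1,\dots,A_k}\bigr),
\end{equation*}
where $\mathbb{A}$ is the finite set of $\{-1,+1\}$-valued symmetric $r$-arrays of size $t$ and $\hat J_{A_1,\dots,A_k}$ is a real $M$-colored $r$-array whose dimension depends only on $r,k,t$, built from the $A_j$'s, the $y^{(j)}_\alpha$'s, and the auxiliary array $B_0$ from the single-kernel proof. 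The same identity holds for the coupled sample obtained by drawing one random coordinate vector and reading off each $\G(q,U_j)$ from it, since both sides are the same algebraic function of the joint sample of $\hat\WW$.

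Finally, Corollary~\ref{ch6:cor.samp} supplies, for each fixed tuple $(A_1,\dots,A_k) \in \mathbb{A}^k$, a sample complexity that makes $\EEE_{r-1}(\,\cdot\,,\hat J_{A_1,\dots,A_k})$ testable; a union bound over the finite set $\mathbb{A}^k$ (whose cardinality depends only on $r,k,t$) together with the Markov-type estimate for the reverse direction handled exactly as in Lemma~\ref{ch6:hyperregpres} produces the desired $q_{\cut}(r,k,\varepsilon,t)$. The main obstacle is the bookkeeping around the common refinement: verifying that $\hat\WW$ is a genuine $M$-colored $r$-set graphon with the correct symmetries and that the derivation of the single-kernel identity goes through when the set choices for different $j$ are combined into one oversized partition. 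Once this algebra is settled, the rest of the argument is a direct replay of the proof of Lemma~\ref{ch6:hyperregpres}.
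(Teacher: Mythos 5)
Your proposal is correct and follows the same overall strategy as the paper: discretize the kernels, rewrite the supremum of the summed cut-$\QQ$-norms as a maximum of generalized ground state energies over a finite family of sign arrays, and conclude by the testability of the GSE plus a union bound, with the easy direction handled by Markov's inequality. The one genuine difference is how the GSE testability is invoked. The paper keeps the $k$ discretized colored graphons $\WW_1,\dots,\WW_k$ separate, writes the quantity as $\max_{A_1,\dots,A_k}\sup_{\QQ}\sum_j \EEE_{\QQ,r-1}(\WW_j,J_{A_j})$ with a shared partition, and then appeals to \Cref{ch5:samp} ``with a slight modification of the argument'' (i.e.\ a re-run of the ultralimit proof for this tuple version). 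You instead merge the $k$ graphons into a single $M$-colored graphon via the common refinement, which turns the quantity into a bona fide GSE of one colored object, so that \Cref{ch6:cor.samp} applies verbatim with no modification; this is arguably cleaner, at the price of the bookkeeping you flag: the encoding partition must record the membership pattern in all $rk$ set choices (so the arrays $\hat J_{A_1,\dots,A_k}$ have size $2^{rk}t$ rather than $2^{r}t$, still depending only on $r,k,t$), the entries of $\hat J^{\alpha}=\sum_j y^{(j)}_{\alpha_j}(A_j\otimes B_0^{(j)})$ are only bounded by $k$ so a rescaling by $1/k$ (testing at accuracy $\varepsilon/k$) is needed to fit the normalization in \Cref{ch6:def.gse2}, and the identity on the sampled side indeed uses that all $\G(q,U_j)$ are read off the same random coordinates, which is the intended coupling. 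With these details filled in, both routes yield the same sample-complexity dependence on $(r,k,\varepsilon,t)$ only.
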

\begin{proof}
We only sketch the proof as it is almost identical to that of \Cref{ch6:hyperregpres}. Let $r,k,t \geq 1$ and $\varepsilon>0$ be fixed, and let $U_1, \dots, U_k$ and $q$ be arbitrary. The lower bound on $\sup_{\QQ, t_\QQ\leq t} \sum_{j=1}^{k} \|W_{\G(q,U_j)}\|_{\square,r, \QQ}$ can be obtained by the same argument as above using Markov's inequality. For the upper bound we again discretize to obtain the $r$-kernels $W_1, \dots, W_k$ with common range $\{y_i: \alpha \in [m]\}$ such that $\|U_j-W_j\|_\infty \leq \frac{\varepsilon}{4k}$ for each $j \in [k]$, hence $m=\frac{8k}{\varepsilon}$ will do. We associate to each $W_j$ and $m$-colored $r$-graphon $\WW_j$ as above and set $J_A^\alpha$ to $y_\alpha(A \otimes B_0)$, then 
\begin{align*}
\max_{A_1, \dots, A_k \in \mathbb A} \sup_{\QQ, t_\QQ\leq t}\sum_{j=1}^k \EEE_{\QQ,r-1}(\WW_j, J_{A_j}) = \sup_{\QQ, t_\QQ\leq t} \sum_{j=1}^k\|W_j\|_{\square,r, \QQ}.
\end{align*} 
Similarly, 
$$ \max_{A_1, \dots, A_k \in \mathbb A} \sup_{\QQ, t_\QQ\leq t} \sum_{j=1}^k \EEE_{\QQ,r-1}(\WW_{\G(q,\WW_j)}, J_{A_j})=\sup_{\QQ, t_\QQ\leq t} \sum_{j=1}^k\|W_{\G(q,W_j)}\|_{\square,r, \QQ}.$$
The testability of $\sup_{\QQ, t_\QQ\leq t}\sum_{j=1}^k \EEE_{\QQ,r-1}(\WW_j, J_{A_j})$ follows from \Cref{ch5:samp} with a slight modification of the argument for any fixed tuple $A_1, \dots, A_k \in \mathbb A$. As the cardinality of $\mathbb A$ does not depend on $\WW_1, \dots, \WW_k$ the statement of the lemma follows.
\end{proof}

\section{Auxiliary lemmas}\label{ch6:sec.aux}
We will require the version of Szemer\'edi's  Regularity Lemma adapted to the Hilbert space setting. Let us recall this variant. 

\begin{lemma}\cite{LSzreg} \label{hreg}
Let $\KK_1, \KK_2, \dots$ be arbitrary subsets of a Hilbert space $\HH$. Then for every $\varepsilon>0$ and $f \in \HH$ there is an $ m \leq \frac{1}{\varepsilon^2}$ and there are $f_i \in \KK_i$ and $\gamma_i \in \R$ ($1 \leq i\leq m$) such that for every $g \in \KK_{m+1}$ we have that 
\begin{align*}
|\langle g, f- \sum_{i=1}^m \gamma_i f_i \rangle| \leq \varepsilon \|f\| \|g\|.
\end{align*}   
\end{lemma}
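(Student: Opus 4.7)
The plan is to run a greedy energy-increment argument, the standard abstract form of the regularity lemma. Let $p_i$ denote the orthogonal projection of $f$ onto $\operatorname{span}(f_1,\dots,f_i)$ (with $p_0=0$), and build the $f_i$'s one at a time so that $\|p_i\|^2$ gains a definite amount at each step. The coefficients $\gamma_i$ in the statement will just be the coefficients expressing the final projection in the chosen basis.

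Concretely, I would proceed by induction. At the start of step $i$, having constructed $f_1,\dots,f_{i-1}$ and a representation $p_{i-1}=\sum_{j<i}\gamma_j f_j$, check whether $|\langle g,f-p_{i-1}\rangle|\le \varepsilon\|f\|\|g\|$ already holds for every $g\in\KK_i$. If so, stop and output $m=i-1$ together with the $f_j$'s and $\gamma_j$'s already chosen; the conclusion is exactly the stopping condition (with $\KK_{m+1}=\KK_i$). Otherwise select some $f_i\in\KK_i$ violating the inequality, define $p_i$ as the orthogonal projection of $f$ onto $\operatorname{span}(f_1,\dots,f_i)$, and re-expand it as $\sum_{j\le i}\gamma_j f_j$, updating previous coefficients if necessary.

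The crux is the energy increment. Let $f_i^{\perp}$ denote the component of $f_i$ orthogonal to $\operatorname{span}(f_1,\dots,f_{i-1})$. Since $f-p_{i-1}$ is already orthogonal to that subspace, $\langle f_i,f-p_{i-1}\rangle=\langle f_i^{\perp},f-p_{i-1}\rangle$, and a one-line Pythagoras computation gives
\[
\|p_i\|^2-\|p_{i-1}\|^2 \;=\; \frac{|\langle f_i,f-p_{i-1}\rangle|^2}{\|f_i^{\perp}\|^2} \;\ge\; \frac{|\langle f_i,f-p_{i-1}\rangle|^2}{\|f_i\|^2} \;>\;\varepsilon^2\|f\|^2.
\]
Summing over the steps performed, $m\varepsilon^2\|f\|^2 < \|p_m\|^2 \le \|f\|^2$, forcing the process to halt with $m\le 1/\varepsilon^2$, as claimed.

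I do not anticipate any real obstacle; the only subtle point is ensuring that the displayed inequality is legitimate, i.e.\ that $f_i^{\perp}\neq 0$. If $f_i$ lies in $\operatorname{span}(f_1,\dots,f_{i-1})$, then $\langle f_i,f-p_{i-1}\rangle=0$, so such a degenerate $f_i$ can never satisfy the selection criterion $|\langle f_i,f-p_{i-1}\rangle|>\varepsilon\|f\|\|f_i\|$ and hence is never chosen. This completes the plan.
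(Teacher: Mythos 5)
Your proof is correct: the greedy selection together with the energy increment $\|p_i\|^2-\|p_{i-1}\|^2=|\langle f_i,f-p_{i-1}\rangle|^2/\|f_i^{\perp}\|^2>\varepsilon^2\|f\|^2$ is exactly the standard argument for this lemma, and your treatment of the degenerate case $f_i\in\operatorname{span}(f_1,\dots,f_{i-1})$ closes the only real gap. Note that the paper itself gives no proof but quotes the lemma from \cite{LSzreg}; your argument coincides with the one given there, phrased via the increasing energy of the projections $p_i$ rather than the equivalent (by Pythagoras) decreasing squared distance $\|f-p_i\|^2$.
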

 
 We start with the following intermediate version of the regularity lemma for edge $k$-colored $r$-graphons, the partition obtained here satisfies stronger conditions than those imposed by the Weak Regularity Lemma \cite{FK}, and weaker than by Szemer\'edi's original.  
\begin{lemma}\label{ch6:sregkhyper}

 For every $r \geq 1$, $\varepsilon>0$, $t \geq 1$, $k \geq 1$ and $k$-colored  $r$-graphon $\WW$ there exists a symmetric partition $\P=(P_1, \dots, P_m)$ of $[0,1]^{\hhh([r-1])}$ into $ m \leq (2t)^{(rk+1)^{4/\varepsilon^2}}=t_\reg(r,k,\varepsilon, t)$ parts and a symmetric $(r,r-1)$- step function $\VV \in \WWW^{r,k}$ with steps from $\P$, such that for any partition $\QQ$ of $[0,1]^{\hhh([r-1])}$ into at most $m t$ classes we have
\begin{align*}
d_{\square,r,\QQ}(\WW,\VV) \leq \varepsilon. 
\end{align*} 
\end{lemma}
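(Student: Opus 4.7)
The plan is to deduce the lemma from the Hilbert space regularity lemma, \Cref{hreg}, applied to each color $W^\alpha$ of $\WW$ in the Hilbert space $\HH=L^2([0,1]^{\hhh([r],r-1)})$. I would set $\KK_i$ for $i\leq m$ to be the family of products $\prod_{j=1}^r \I_{S_j}(x_{\hhh([r]\setminus\{j\})})$ with each $S_j\subset [0,1]^{\hhh([r-1])}$ a symmetric subset, so that linear combinations of such products are automatically $(r,r-1)$-step functions on the Boolean refinement of the $S_j$'s. For the test class $\KK_{m+1}$ I would take the larger family of products $\prod_{j=1}^r f_j(x_{\hhh([r]\setminus\{j\})})$ with $f_j:[0,1]^{\hhh([r-1])}\to [-1,1]$ symmetric; by the alternative description of the cut norm in the remark following \Cref{ch6:def.cutnorm}, the functions $g_{\vec{j}}=\prod_i f_i \I_{Q_{j_i}}$ for the symmetric cells $Q_{j_i}$ of any candidate partition $\QQ$ lie in this class and are precisely the test functions witnessing $\|\cdot\|_{\square,r,\QQ}$ cell-by-cell.

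Next, I would apply \Cref{hreg} iteratively across the $k$ colors, at each round refining the current symmetric partition by the fresh indicator sets produced by the lemma. With the parameter $\varepsilon'$ taken of order $\varepsilon/2$, the bound $m\leq 1/(\varepsilon')^2$ gives at most $4/\varepsilon^2$ iterations per color, each contributing $r$ new symmetric indicator sets; after accounting also for the interaction with the $\QQ$-cells, the Boolean refinement combined with the existing partition grows by a factor of at most $(2t)^{r+1}$ per step, so that after processing all $k$ colors the common partition $\P$ has at most $t_\reg(r,k,\varepsilon,t)=(2t)^{(rk+1)^{4/\varepsilon^2}}$ symmetric classes. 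I would then define $\VV$ as the $(r,r-1)$-step function whose value on each $\P$-cell is the conditional expectation of $\WW$ on the product $\sigma$-algebra generated by $\P$ along each coordinate direction, with a rounding step to obtain a valid $k$-coloring of the top coordinate.

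For the verification, fix an arbitrary symmetric $\QQ$ with at most $mt$ classes and write $\|W^\alpha-V^\alpha\|_{\square,r,\QQ}\leq \sum_{j_1,\dots,j_r}|\langle W^\alpha-\tilde V^\alpha,g_{\vec{j}}\rangle|$ with $g_{\vec{j}}\in\KK_{m+1}$. Applying the Hilbert regularity bound $|\langle g_{\vec{j}},W^\alpha-\tilde V^\alpha\rangle|\leq\varepsilon'\|g_{\vec{j}}\|_2\|W^\alpha\|_2$ term by term, and closing the sum via Cauchy--Schwarz together with the orthogonality relation $\sum_{\vec{j}}\|g_{\vec{j}}\|_2^2\leq 1$ (which follows from the partition property of $\QQ$ and $|f_i|\leq 1$), controls $\|W^\alpha-V^\alpha\|_{\square,r,\QQ}$ by $\varepsilon/k$. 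Summing over the $k$ colors yields $d_{\square,r,\QQ}(\WW,\VV)\leq\varepsilon$.

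The principal obstacle is the self-referential role of $\KK_{m+1}$: it must accommodate cut-$\QQ$-test functions for every $\QQ$ of size up to $mt$, while $m$ itself is produced by the regularity procedure. I would resolve this by pre-specifying $\KK_{m+1}$ as the universal family of products of symmetric $[-1,1]$-valued functions (which contains $g_{\vec{j}}$ for any $\QQ$), and absorbing the Cauchy--Schwarz blowup factor $(mt)^{r/2}$ into the choice of $\varepsilon'$ through a self-consistent estimate; this bookkeeping, together with the per-iteration branching of the partition, is what forces the stated double-exponential form $t_\reg(r,k,\varepsilon,t)=(2t)^{(rk+1)^{4/\varepsilon^2}}$. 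A secondary subtlety is preserving the usual symmetries of the partition and of $\VV$ under coordinate permutations of $[r]\setminus\{j\}$ throughout the iteration, which is handled by symmetrizing each new indicator set before incorporating it into the Boolean refinement.
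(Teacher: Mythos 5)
There is a genuine gap at the heart of your verification step, and it is exactly the point your final paragraph tries to absorb by fiat. Testing each cell-tuple of $\QQ$ separately against a product $g_{\vec j}=\prod_i f_i\I_{Q_{j_i}}$ and then summing gives, after Cauchy--Schwarz, a bound of order $\varepsilon'(t_\QQ)^{r/2}\le\varepsilon'(mt)^{r/2}$, not $\varepsilon'$; and this factor cannot be absorbed ``self-consistently'' into the choice of $\varepsilon'$, because \Cref{hreg} only guarantees $m\le 1/\varepsilon'^2$, so in the worst case $\varepsilon'(mt)^{r/2}$ is of order $t^{r/2}\varepsilon'^{\,1-r}$, which diverges as $\varepsilon'\to 0$ for every $r\ge 2$. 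The only way to avoid the blow-up is to test against a single function encoding all $t_\QQ^r$ cell-tuples at once, namely $g=\sum_{\vec j}\pm\prod_{i}\I_{S_i\cap Q_{j_i}}$ with signs chosen to make every term of the cut-$\QQ$-norm nonnegative; the supports are disjoint, so this $g$ is $\{-1,0,1\}$-valued, $\|g\|_2\le 1$, and one application of the regularity bound controls the whole sum with no loss. But such a signed sum of products is not itself a product of $r$ symmetric functions (already for $r=2$ it is a product only when the sign pattern has rank one), so your proposed universal test class $\KK_{m+1}$ of products of $[-1,1]$-valued functions does not contain the test functions you actually need. There is also a structural misuse of \Cref{hreg}: the sets $\KK_1,\KK_2,\dots$ must be specified in advance and $m$ is an output, so you cannot take $\KK_i$ for $i\le m$ to be indicator products while $\KK_{m+1}$ is the universal family; and if you make every $\KK_i$ the universal family, the approximant $\sum_i\gamma_i f_i$ is no longer an $(r,r-1)$-step function, which destroys the structure of $\VV$ asserted in the lemma.

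The paper resolves precisely this self-reference by a different device: it applies \Cref{hreg} once, in the Hilbert space of $k$-tuples of functions (so all colors are handled through a single inner product), and takes $\KK_i$ to be the $k$-tuples of $\{-1,0,1\}$-valued $(r,r-1)$-step functions with $s(i)$ symmetric steps, where $s(1)=1$ and $s(i+1)=s(i)(s(i)t+1)^{rk}$. Because the permitted number of steps grows recursively with $i$, once the lemma outputs $m\le 4/\varepsilon^2$ and an approximant whose steps generate the partition $\P$, the class $\KK_{m+1}$ automatically contains the signed step functions adapted to the refinement of $\P$ by any partition $\QQ$ with at most $t_\P t$ classes, in every direction and every color, so the cut-$\QQ$-norm is bounded by $\varepsilon$ with no $(mt)^{r/2}$ factor; this recursion, via $s(i)\le(2t)^{(rk+1)^i}$, is also what yields the stated bound $t_\reg(r,k,\varepsilon,t)=(2t)^{(rk+1)^{4/\varepsilon^2}}$ rather than a per-step factor $(2t)^{r+1}$. (The symmetry issue is handled at the end by averaging the approximant over $S_r$ and the triangle inequality, not by conditional expectation and rounding.) Your outline correctly identifies the obstacle, but the mechanism you propose does not overcome it; the growing families of many-step $\{-1,0,1\}$-valued test functions are the essential missing idea.
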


\begin{proof}

Our lemma is a special case of \Cref{hreg}. We set $\HH$ to be the space of of $k$-tuples of real measurable functions on $[0,1]^{\hhh([r],r-1)}$ with the sum of the component-wise $L^2$-products as the inner product, this space contains
 $\WWW^{r,k}$.  Set $s(1)=1$ and $s(i+1)=s(i)(s(i)t+1)^{rk}$ for each $i \geq 1$ and let $\KK_i$ be the  set of $k$-tuples of indicator functions that are $(r,r-1)$ step functions with $s(i)$ number of symmetric steps and taking values from the set $\{-1,0,1\}$. Note that the elements of the $\KK_i$'s are not necessarily symmetric as functions, only their steps are required to be such. Further, observe that $s(i) \leq (2t)^{(rk+1)^i}$. Now apply \Cref{hreg} with the above setup for $\varepsilon/2$ and $\WW$ to obtain $\UU$ that satisfies all the conditions of the lemma except for symmetry, in particular
 \begin{align*}
 \sum_{\alpha=1}^k \|U^\alpha-W^\alpha\|_{\square,r,\P} < \varepsilon.
 \end{align*}
 Define $\VV$ with $V^\alpha(x_{\hhh([r],r-1)})=\frac{1}{r!}\sum_{\pi \in S_r} U^\alpha(x_{\pi(\hhh([r],r-1))})$. The symmetry of $\WW$ and the triangle inequality implies that $\VV$ is suitable, since
 \begin{align*}
 \|V^\alpha-W^\alpha\|_{\square,r,\P} \leq \frac{1}{r!}\sum_{\pi \in S_r} \|(U^\alpha)^\pi-(W^\alpha)^\pi\|_{\square,r,\P}=\|U^\alpha-W^\alpha\|_{\square,r,\P}
 \end{align*}
 for any $\P$ and $\alpha \in [k]$, and $U^\pi(x_{\hhh([r],r-1)})=U(x_{\pi(\hhh([r],r-1))})$.

\end{proof}

The next lemma is analogous to Lemma 3.2 from \cite{KM2}. It describes under what metric conditions a $k$-coloring of a $t$-colored graphon can be transfered to another one so that the two $tk$-colored graphons are close in a certain sense.  
For the sake if completeness we sketch the proof.

\begin{lemma}\label{ch6:hypercoloring}
Let $\varepsilon>0$, $\UU$ be a $t$-colored $r$-graphon that is an $(r,r-1)$-step function with steps $\P=(P_1, \dots, P_m)$ and $\VV$ be a $t$-colored $r$-graphon with $d_{\square,r, \P}(U, V) \leq \varepsilon$. For any $k \geq 1$ and $\hat \UU$ a $[t]\times [k]$-colored $r$-graphon that is an $(r,r-1)$-step function with steps from $\P$ such that $[\hat \UU, k]=\UU$ there exists a $k$-coloring of $\VV$ denoted by $\hat \VV$ so that 
\begin{equation*}
d_{\square,r, \P}(\hat \UU, \hat \VV) \leq k \varepsilon.
\end{equation*}
\end{lemma}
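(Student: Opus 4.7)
\medskip
\noindent\textbf{Proof proposal.} The plan is a direct construction: transfer the $k$-refinement from $\hat{\UU}$ onto $\VV$ by copying, on each combinatorial cube of $\P$, the proportions with which $\hat{\UU}$ subdivides $\UU$. Since $\UU$ is an $(r,r-1)$-step function with steps $\P$, each combinatorial cube
\[
C_{i_1,\ldots,i_r} \defeq \bigcap_{j\in[r]} p_{[r]\setminus\{j\}}^{-1}(P_{i_j})
\]
carries a constant value of every $U^\alpha$, and (because the refinement also has steps from $\P$) a constant value of every $\hat U^{(\alpha,\beta)}$. Writing $A^\alpha(i_1,\ldots,i_r)$ and $\hat A^{(\alpha,\beta)}(i_1,\ldots,i_r)$ for these constants, the identity $[\hat\UU,k]=\UU$ says $\sum_\beta \hat A^{(\alpha,\beta)}=A^\alpha$. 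I would then define the proportions
\[
p_{\alpha,\beta}(i_1,\ldots,i_r) \defeq \frac{\hat A^{(\alpha,\beta)}(i_1,\ldots,i_r)}{A^\alpha(i_1,\ldots,i_r)}
\]
whenever the denominator is positive, and set them to any fixed probability vector otherwise. By symmetry of the arrays $A^\alpha$ and $\hat A^{(\alpha,\beta)}$ the $p_{\alpha,\beta}$ are symmetric functions of their indices. Finally, define $\hat\VV$ by
\[
\hat V^{(\alpha,\beta)}(x) \defeq p_{\alpha,\beta}(i_1,\ldots,i_r)\,V^\alpha(x) \qquad \text{for } x\in C_{i_1,\ldots,i_r}.
\]

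Next I would check that $\hat\VV$ is a genuine $k$-coloring of $\VV$ and inherits the usual symmetries: summing over $\beta$ gives $\sum_\beta \hat V^{(\alpha,\beta)}=V^\alpha$, and the invariance under $S_r$-coordinate permutations follows from the symmetry of $\VV$ together with that of the proportions $p_{\alpha,\beta}$. Both properties use only that $\hat\UU$ and $\UU$ are symmetric step functions with the same steps $\P$.

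The norm estimate is the content of the lemma and the key calculation is a one-liner: on each cube $C_{j_1,\ldots,j_r}$ we have
\[
\hat U^{(\alpha,\beta)}(x)-\hat V^{(\alpha,\beta)}(x) = p_{\alpha,\beta}(j_1,\ldots,j_r)\bigl(U^\alpha(x)-V^\alpha(x)\bigr),
\]
since $U^\alpha\equiv A^\alpha(j_1,\ldots,j_r)$ on that cube and $\hat U^{(\alpha,\beta)}=p_{\alpha,\beta}(j_1,\ldots,j_r)A^\alpha(j_1,\ldots,j_r)$ there (a case split handles $A^\alpha=0$ trivially). Plugging this into \Cref{ch6:def.cutnorm}, for any symmetric test sets $S_1,\ldots,S_r$,
\[
\sum_{j_1,\ldots,j_r=1}^m \Bigl| \int_{\cap_i p_{[r]\setminus\{i\}}^{-1}(S_i\cap P_{j_i})}\!\!(\hat U^{(\alpha,\beta)}-\hat V^{(\alpha,\beta)}) \Bigr| \leq \sum_{j_1,\ldots,j_r}p_{\alpha,\beta}(j_1,\ldots,j_r)\Bigl|\int\!\!(U^\alpha-V^\alpha)\Bigr|\leq \|U^\alpha-V^\alpha\|_{\square,r,\P},
\]
using $p_{\alpha,\beta}\le 1$. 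Summing over the $tk$ indices and using $\sum_\beta p_{\alpha,\beta}=1$ is what produces the factor $k$: summing first over $\beta$ for fixed $\alpha$ contributes at most $k\,\|U^\alpha-V^\alpha\|_{\square,r,\P}$, and then summing over $\alpha$ gives $d_{\square,r,\P}(\hat\UU,\hat\VV)\le k\,d_{\square,r,\P}(\UU,\VV)\le k\varepsilon$.

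I do not expect a genuine obstacle; the only subtle point is that the cut-$\P$-norm takes a separate supremum over test sets for each color pair $(\alpha,\beta)$, which prevents exchanging sums and suprema to save the factor $k$. That is precisely why the bound degrades by a factor of $k$ and not, say, $1$. The symmetry bookkeeping and the zero-denominator edge case are routine once the proportions are fixed to be symmetric.
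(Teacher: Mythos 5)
Your construction is exactly the paper's: the paper defines $V^{\alpha,\beta}=V^{\alpha}\bigl[\I_{U^{\alpha}=0}\tfrac{1}{k}+\I_{U^{\alpha}>0}\tfrac{U^{\alpha,\beta}}{U^{\alpha}}\bigr]$, which on each $\P$-cube is precisely your proportional reweighting $p_{\alpha,\beta}V^\alpha$ with the $1/k$ fallback, and its norm estimate is the same splitting over $\P$-cells using that the proportion factor is bounded by $1$, followed by summing over the $tk$ color pairs to get the factor $k$. The proposal is correct and takes essentially the same route, including the zero-denominator and symmetry bookkeeping.
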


\begin{proof}
Fix $\varepsilon >0$, and let $\UU=(U^\alpha)_{\alpha \in [t]}$, $\VV=(V^\alpha)_{\alpha \in [t]}$ and $\hat \UU=(U^{\alpha,\beta})_{\alpha \in [t], \beta \in [k]}$ as in the statement of the lemma. Then $\sum_{\alpha=1}^t U^\alpha=1$ and $\sum_{\beta =1}^k U^{\alpha,\beta}=U^{\alpha}$ for each $\alpha \in [t]$. Let us define $\hat \VV=(V^{\alpha,\beta})_{\alpha \in [t], \beta \in [k]}$ that is a $k$-coloring of $\VV$. Set $V^{\alpha,\beta}=V^{\alpha} [\I_{U^{\alpha}=0} \frac{1}{k} + \I_{U^{\alpha}>0} \frac{U^{\alpha,\beta}}{U^{\alpha}}]$, it is easy to see that the factor on the right of the formula is a $(r,r-1)$-step function with steps $\P=(P_1, \dots, P_m)$. We estimate the deviation of each pair $U^{\alpha,\beta}$ and $V^{\alpha,\beta}$ from above in the $r$-cut norm, for this we fix the symmetric $S_1, \dots, S_r \subset [0,1]^{\hhh([r-1])}$. Then we have 
\begin{align*}
\left|\int_{\cap_{l \in [r]}p_l^{-1}(S_l)}  U^{\alpha,\beta}-V^{\alpha,\beta}\right|  
&\leq \sum_{\alpha_1,\dots, \alpha_r=1}^t \left|\int_{\cap_{l \in [r]}p_l^{-1}(S_l \cap P_{\alpha_l})}  U^{\alpha,\beta}-V^{\alpha,\beta} \right|\\ &= \sum_{\alpha_1,\dots, \alpha_r=1}^t \left|\int_{\cap_{l \in [r]}p_l^{-1}(S_l \cap P_{\alpha_l})}  (U^{\alpha}-V^{\alpha}) [\I_{U^{\alpha}=0} \frac{1}{k} + \I_{U^{\alpha}>0} \frac{U^{\alpha,\beta}}{U^{\alpha}}] \right| \\  &\leq \|U^{\alpha}-V^{\alpha}\|_{\square,r,\P}.
\end{align*}
Taking the maximum over all symmetric measurable $r$-tuples $S_1, \dots, S_r$ and summing up over all choices of $\alpha$ and $\beta$ delivers the upper bound we were after.
\end{proof}


\section{Proof of the main result}\label{ch6:sec.main}

The central tool in the main proof is the following lemma which can also be of independent interest. Informally it states that every coloring of a sampled $r$-graph can be transferred onto the graphon from which the graph was sampled from, such that another sampling procedure with a much smaller sample size cannot distinguish the two colored objects.  
\begin{lemma}\label{ch6:mainhyperlemma}
 For every $r \geq 1$, proximity parameter $\delta > 0$, palette sizes $t,k\geq 1$, sampling depth $q_0 \geq 1$ 
 there exists an integer $q_\tw=q_\tw(r,\delta,q_0, t,k) \geq 1$ such that for every $q \geq q_\tw$ the following holds. Let $\UU=(U^\alpha)_{\alpha \in [t]}$ be a $t$-colored $r$-graphon and let $V^\alpha$ denote $W_{\G(q,U^\alpha)}$ for each $\alpha \in [t]$, also let $\VV=(V^\alpha)_{\alpha \in [t]}$, so  $\WW_{\G(q,\UU)}=\VV$. Then with probability at least $1-\delta$ there  exist for every $k$-coloring $\hat \VV=(V^{\alpha,\beta})_{\alpha \in [t], \beta \in [k]}$ of $\VV$ a $k$-coloring $\hat \UU=(U^{\alpha,\beta})_{\alpha \in [t], \beta \in [k]}$  of $\UU=(U^\alpha)_{\alpha \in [t]}$ 
 such that we have that 

\begin{equation*}
d_\tw (\mu(q_0, \hat \WW), \mu(q_0, \hat \UU) \leq \delta.
\end{equation*}  
\end{lemma}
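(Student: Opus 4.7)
The plan is to adapt the approach of \cite{KM2} from graphs to hypergraphs, combining the regularity lemma for colored graphons (\Cref{ch6:sregkhyper}), the sampling stability of the cut norm (\Cref{ch6:hyperregpres2}), the coloring-transfer lemma (\Cref{ch6:hypercoloring}), and the counting lemma (\Cref{ch6:hypercountcor}) to convert cut-norm closeness into total-variation closeness of the $q_0$-sample distributions.

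To set parameters, I would first choose $\eta=\eta(\delta,q_0,t,k)>0$ small enough that \Cref{ch6:hypercountcor} turns a cut-distance of order $O(k\eta)$ into a total-variation distance at most $\delta$ at sampling depth $q_0$. Apply \Cref{ch6:sregkhyper} to $\UU$ at error $\eta$ to obtain a step function $\UU^*$ with a symmetric partition $\P$ of size $m$, and set $M'=t_\reg(r,tk,\eta,m)$ as an a priori upper bound on the regularity partition size produced by \Cref{ch6:sregkhyper} applied to any $tk$-colored graphon at error $\eta$. Then apply \Cref{ch6:hyperregpres2} to the $t$-tuple $(U^\alpha-U^{*\alpha})_{\alpha\in[t]}$ with partition-size bound $mM'$ and error $\eta$; this determines $q_\tw$ so that for every $q\geq q_\tw$ the following event holds with probability at least $1-\delta$: the sampled differences $V^\alpha-V^{*\alpha}$, with $V^{*\alpha}=W_{\G(q,U^{*\alpha})}$, satisfy $\sup_{\QQ,t_\QQ\leq mM'}\sum_\alpha\|V^\alpha-V^{*\alpha}\|_{\square,r,\QQ}\leq 2\eta$, since the $\UU$-side analogue is bounded by $\eta$ from regularity.

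Conditioning on this stability event, fix an arbitrary $k$-coloring $\hat\VV$ of $\VV$ and apply \Cref{ch6:sregkhyper} to $\hat\VV$ at error $\eta$, after arranging that the resulting regularity partition $\QQ^*$ refines the partition $\P^{\sp}$ on the sample side induced by $\P$ through the sampling. This yields $\hat\VV^{**}$, a step function with partition $\QQ^*$ of size at most $mM'$ and $d_{\square,r,\QQ}(\hat\VV,\hat\VV^{**})\leq\eta$ for refinements $\QQ$. Transplant the step values of $\hat\VV^{**}$ onto the corresponding partition $\tilde\QQ$ on the $\UU$ side through the class-by-class identification of $\P^{\sp}$ with $\P$, producing a step function $\hat\UU^*$; the stability event then gives $d_{\square,r,\tilde\QQ}(\UU,[\hat\UU^*,k])\leq O(\eta)$. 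Apply \Cref{ch6:hypercoloring} with step function $[\hat\UU^*,k]$, close graphon $\UU$, and coloring $\hat\UU^*$ to extract a $k$-coloring $\hat\UU$ of $\UU$ with $d_{\square,r,\tilde\QQ}(\hat\UU,\hat\UU^*)\leq k\cdot O(\eta)$. Chaining the cut-norm bounds $\hat\VV\to\hat\VV^{**}\leftrightarrow\hat\UU^*\to\hat\UU$ via the correspondence between $\P^{\sp}$ and $\P$, and invoking \Cref{ch6:hypercountcor}, yields the required bound $d_\tw(\mu(q_0,\hat\VV),\mu(q_0,\hat\UU))\leq\delta$.

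The main obstacle is the simultaneity of the stability step: the regularity partition $\QQ^*$ depends on the random coloring $\hat\VV$, so the stability event must accommodate every admissible $\QQ^*$ of size at most $mM'$ in advance; the uniform supremum over partitions in \Cref{ch6:hyperregpres2} is exactly what delivers this. A secondary technical concern is the rigorous partition transfer between the sample and $\UU$ sides: the partition $\P^{\sp}$ induced by sampling $\P$ has a natural class-by-class identification with $\P$ itself, allowing a refinement of $\QQ^*$ on the sample side to be transplanted to a refinement $\tilde\QQ$ of $\P$ on the $\UU$ side, while keeping the cut-norm integrals close because the edge-set integrals that define the cut-$\tilde\QQ$-norm are preserved under sampling — precisely the content of \Cref{ch6:hyperregpres2}.
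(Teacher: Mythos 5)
Your proposal follows the $r=2$ strategy of \cite{KM2} and assembles the right ingredients (\Cref{ch6:sregkhyper}, \Cref{ch6:hyperregpres2}, \Cref{ch6:hypercoloring}, \Cref{ch6:hypercountcor}), but it omits the idea the paper's proof actually hinges on: induction on $r$. The gap is your ``transplant'' step. The regularity partition $\QQ^*$ for $\hat\VV$ is a symmetric partition of $[0,1]^{\hhh([r-1])}$ refining the sampled partition $\P^{\sp}$, and you propose to move it to a refinement $\tilde\QQ$ of $\P$ on the $\UU$-side via the class-by-class identification of $\P^{\sp}$ with $\P$, keeping the step values. For $r=2$ the partition classes are subsets of $[0,1]$, so only their measures inside each class of $\P$ matter and this transfer is harmless; for $r\geq 3$ the classes have internal lower-order structure (their projections to the coordinates indexed by proper subsets of $[r-1]$ and their joint behaviour across the $r$ faces of an $r$-edge), and the depth-$q_0$ sample distribution of a step function depends on exactly this structure, not just on the step values and the relative measures of the refined classes. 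Nothing in \Cref{ch6:hyperregpres2} provides such a transfer: that lemma compares suprema of cut-$\QQ$-norms before and after sampling, it does not say that a \emph{particular} partition produced on the sample side can be realized on the original side with matching $(r-1)$-dimensional sample statistics. Consequently your chain breaks at the link $\hat\VV^{**}\leftrightarrow\hat\UU^*$: these are step functions over partitions living on different sides, and \Cref{ch6:hypercountcor} cannot be applied to bound $d_\tw(\mu(q_0,\hat\VV^{**}),\mu(q_0,\hat\UU^*))$, because no cut-distance bound between them is available. This is precisely the ``shortcoming of the $r$-cut norm'' the paper flags as the reason the argument of \cite{KM2} cannot be applied directly.

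The paper resolves this by proving the lemma by induction on $r$. After regularizing $\UU$ (steps $\P$), sampling to get $\WW_2$ (steps $\P'$), regularizing $\hat\VV$ (steps $\RRR$) and colouring $\WW_2$ via \Cref{ch6:hypercoloring}, one must pull the refinement $\Ss=\P'\wedge\RRR$ back to a refinement $\P''$ of $\P$. To do this, the partitions are encoded as colored $(r-1)$-graphons: $\ww$ encodes $\P$, $\uu=\G(q,\ww)$ encodes $\P'$, and $\hat\uu$ encodes $\Ss$ as a $t_\RRR$-coloring of $\uu$. The induction hypothesis at order $r-1$ (with the larger palettes $t_1,t_2$) supplies a coloring $\hat\ww$ of $\ww$ with $d_\tw(\mu(q_0,\hat\uu),\mu(q_0,\hat\ww))$ small; this couples the colored $(r-1)$-uniform ``template'' graphs arising in the two-stage generation of $\G(q_0,\hat\WW_1)$ and $\G(q_0,\hat\WW_2)$, and conditioned on the templates agreeing the top-level color choices can be coupled exactly. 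That is the mechanism which replaces your unjustified transplant, and it is also why the quantity $q_\tw(r,\cdot)$ is defined recursively through $q_\tw(r-1,\cdot)$. Without some argument of this kind (or another way to control the lower-order structure of the sampled partition), your construction does not yield the claimed total-variation bound for $r\geq 3$.
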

\begin{proof}
We proceed by induction with respect to $r$. The statement is not difficult to verify for $r=1$. In this case the $1$-graphons $U^{\alpha}$ and $V^\alpha$ can be regarded as indicator functions of measurable subsets $B^\alpha$ and $A^\alpha$ of $[0,1]$ (so for each $\alpha \in[k]$ we have $U^\alpha=\I_{B^\alpha}$ and $V^\alpha=\I_{a^\alpha}$) that form two partitions associated to $\UU$ and $\VV$ respectively. Note that $(A^\alpha)_{\alpha \in [k]}$ is obtained from $(B^\alpha)_{\alpha \in [k]}$ by the sampling process. A $k$-coloring corresponds to a refinement of these partitions with each original class being divided into $k$ measurable parts, that is $A^\alpha=\cup_{\beta \in [k]}^* A^{\alpha,\beta}$ and $V^{\alpha,\beta}=\I_{A^{\alpha,\beta}}$. Moreover, $|t(\FFF,\hat \UU)-t(\FFF,\hat \VV)|= |\prod_{l=1}^{q_0} \lambda(B^{\FFF(l)}) - \prod_{l=1}^{q_0}\lambda(A^{\FFF(l)})|$ for any of $k$-coloring $\hat \UU$ of $\UU$ and for any $[t] \times [k]$-colored $\FFF$ on $q_0$ vertices.  
We may define a suitable coloring by partitioning each of the sets $B^\alpha$ into parts $(B^{\alpha,\beta})_{\beta \in [k]}$ so that the classes satisfy $\lambda(B^{\alpha,\beta})=\lambda(B^{\alpha}) \frac{\lambda(A^{\alpha,\beta})}{\lambda(A^{\alpha})}$ when $\lambda(A^{\alpha}) > 0$, and $\lambda(B^{\alpha,\beta})=\lambda(B^{\alpha})\frac{1}{k}$ otherwise for each $\beta \in [k]$. Then by setting  $U^{\alpha,\beta}=\I_{B^{\alpha,\beta}}$ and  $\hat \UU=(U^{\alpha,\beta})_{\alpha \in [t],\beta \in [k]}$ we have that 
\begin{equation*} 
d_\tw (\mu(q_0, \hat \WW), \mu(q_0, \hat \UU)=  \frac{1}{2}\sum_{\FFF: |V(\FFF)|=q_0} |t(\FFF,\hat \UU)-t(\FFF,\hat \VV)| \leq \frac{ q_0^{k+1}}{2} \max_{\alpha \in [k]} |\lambda(A^{\alpha})-\lambda(B^{\alpha})|,
\end{equation*}
where the sum runs over all $[t] \times [k]$-colored $1$-graphs $\FFF$ on $q_0$ vertices.

The probability that for a fixed $\alpha \in [t]$  the deviation $|\lambda(A^\alpha)-\lambda(B^\alpha)|$ exceeds $\frac{2\delta }{q_0^{k+1}}$ is at most $2\exp(-\frac{4\delta^2  q}{3 q_0^{2k+2}})$, the union bound gives the upper bound $\exp(\ln 2+ t-\frac{4\delta^2  q}{3 q_0^{2k+2}})$ for the probability that $$  d_\tw (\mu(q_0, \hat \WW), \mu(q_0, \hat \UU)
\leq \delta$$
fails  for our particular choice for the coloring $\hat \UU$ of $\UU$. We note that the failure probability can be made arbitrary small with the right choice of $q$, so in particular smaller than $\delta$, therefore $q_\tw(1, \delta, q_0,t,k)= \frac{(t+\ln 2-\ln \delta) 3q_0^{2k+2}}{4\delta^2}$ that satisfies the conditions of the lemma. 

Now assume that we have already verified the statement of the lemma for $r-1$ and any other choice of the other parameters of $q_\tw$. Let us proceed to the proof of the case for $r$-graphons, therefore let   $\delta > 0$, $t,k,q_0 \geq 1$ be arbitrary and fixed, $q$ to be determined below and $\UU$, $\VV$, and $\hat \VV$ as in the condition of the lemma. We start by explicitly constructing a $k$-coloring $\hat \UU$ for $\UU$, in the second part of the proof we verify that the construction is suitable.

In a nutshell, we proceed as follows. We approximate $\hat \VV$ by the step function $\hat \ZZ$, and set $\ZZ=[\hat \ZZ,k]$, and also approximate $\UU$ by $\WW_1$. Let $\WW_2$ be the sampled version of $\WW_1$ generated by the same process as $\VV$. This way $\WW_2$ and $\ZZ$ are close, hence we can color $\WW_2$ using the coloring $\hat \ZZ$ of $\ZZ$ to obtain $\hat \WW_2$. The coloring $\hat \WW_2$ is then transferred onto $\WW_1$ using the induction hypothesis applied to the marginals of the step sets of $\WW_1$ and $\WW_2$ to get $\hat \WW_1$ with $[\hat \WW_1,k]=\WW_1$. Finally we color $\UU$ exploiting the proximity of $\UU$ and $\WW_1$ and the colored $\hat \WW_1$.

Our construction may fail to meet the criteria of the lemma, this can be caused at two points in the above outline. For one, it may happen, that $\WW_2$ does not approximate $\VV$ well enough, and secondly, when we transfer $\hat \WW_2$ onto $\WW_1$ using the induction hypothesis with $r-1$, as the current lemma leaves space for probabilistic error. These two events are independent from the particular choice of $\hat \VV$ and their probability can be made sufficiently small, we aim for to show this. We  proceed now to the technical details.

Let $\Delta= \frac{ \delta  r!}{4k (kt)^{q_0^r} q_0^r}$. Set $t_2=t_\reg(r,tk,\Delta, 1)$ and $t_1=t_\reg(r,t,\Delta/2, t_2)$, and define $q_\tw(r,\delta,q_0, t,k)=\max\{q_\tw(r-1,\delta/4,q_0,t_1,t_2), q_\cut(r,t,\Delta/2, t_1 t_2)\}$. Let $q \geq q_\tw(r,\delta,q_0, t,k)$ be arbitrary.  

First we apply \Cref{ch6:sregkhyper} with proximity parameter $\Delta/2$ to the $t$-colored $r$-graphon $\UU$, the lemma ensures the existence of a symmetric partition $\P=(P_1, \dots, P_{t_1})$ of $[0,1]^{\hhh([r-1])}$ with $t_\P \leq t_1$ 
 and a $t$-colored symmetric step function $\WW_1=(W_1^1, \dots, W_1^t)$ with steps in $\P$ that satisfies $\sup_{\QQ, t_\QQ \leq t_\P t_2}d_{\square,r,\QQ}(\WW_1,\UU) \leq \Delta/2, $ where the supremum runs over all symmetric partitions $\QQ$ of $[0,1]^{\hhh([r-1])}$ with at most $t_\P t_2$ classes. Applying structure preserving transformations to $[0,1]^{\hhh([r-1])}$ the classes of $\P$ can be considered as piled up, meaning  that for each $y \in [0,1]^{\hhh([r-1],r-2)}$ the fibers $\{y\} \times [0,1]$ are partitioned by the intersections with the classes of $\P$ into intervals $[0, a_1), [a_1, a_2),  \dots, [a_{t_1-1},a_{t_1}]$ with $\{y\} \times [a_{j-1}, a_j) = (\{y\} \times [0,1]) \cap P_j$. We introduce the  $r$-dimensional real arrays $A_1, \dots, A_t$ in order to describe the explicit form of the $W_1^\alpha$'s. So,
\begin{equation*}
W_1^\alpha(x_{\hhh([r],r-1)}) =  \sum_{i_1, \dots, i_r=1}^{t_\P} A_\alpha(i_1,\dots, i_r) \prod_{l=1}^r \I_{P_{i_l}} (x_{\hhh([r]\setminus \{l\})}) .
\end{equation*}

Let $\WW_2=(W^1_2, \dots, W^t_2)$ denote $\G(q,\WW_1)$, so $W_2^\alpha$ stands for $\G(q,W_1^\alpha)$ for each $\alpha \in [t]$, then \Cref{ch6:hyperregpres2} implies that for $q\geq q_\cut (r, t, \Delta/2, t_1t_2)$ it holds that 
\begin{equation*}
\sup_{\QQ, t_\QQ \leq t_\P t_2}d_{\square,r,\QQ}(\WW_2,\VV) \leq \Delta, 
\end{equation*}  
with probability at least $1-\Delta/2$, since $t_\P \leq r_1$. Also,
\begin{equation*}
W_2^\alpha(x_{\hhh([r],r-1)}) =  \sum_{i_1, \dots, i_r=1}^{t_\P} A_\alpha(i_1,\dots, i_r) \prod_{l=1}^r \I_{P'_{i_l}} (x_{\hhh([r]\setminus \{l\})}),
\end{equation*}
for each $\alpha \in [t]$ and $$P'_j=\cup_{(p_1, \dots, p_{r-1}) \in I_j} [\frac{p_1-1}{q},\frac{p_1}{q}] \times \dots \times  [\frac{p_r-1}{q},\frac{p_r}{q}] \times [0,1] \times \dots \times [0,1]$$  with $I_j=\{(p_1, \dots, p_{r-1}) : X_{r[\{p_1, \dots, p_{r-1}\}]} \in P_j \}$ for every $j \in [t_\P]$. Note that $\P'=(P'_j)_{j \in [t_\P]}$ is a symmetric partition. 

We apply now \Cref{ch6:sregkhyper} with proximity parameter $\Delta$ in order to approximate the $[t] \times [k]$-colored $r$-graph $\hat \VV=(V^{\alpha, \beta})_{\alpha \in [t], \beta \in [k]}$, the outcome is a $[t] \times [k]$-colored step function $\hat \ZZ=(Z^{\alpha, \beta})_{\alpha \in [t], \beta \in [k]}$ with symmetric steps in $\RRR=(R_1, \dots, R_{t_2})$ of $[0,1]^{\hhh([r-1]))}$ with $t_\RRR \leq t_2$ that satisfies
\begin{equation*}
 \sup_{\QQ, t_\QQ \leq t_\RRR} d_{\square,r,\QQ}(\hat \VV, \hat \ZZ) \leq \Delta.
\end{equation*} 
We introduce the $t$-colored step function $\ZZ=[\hat \ZZ,k]$  that is  the $k$-discoloring of $\hat \ZZ$ that has steps in $\RRR$ and note that \begin{equation*}
\sup_{\QQ,t_\QQ \leq t_\RRR} d_{\square,r,\QQ}(\VV, \ZZ) \leq \Delta, 
\end{equation*}
and therefore 
\begin{equation} \label{ch6:eq10}
\sup_{\QQ,t_\QQ \leq t_\RRR} d_{\square,r,\QQ}(\ZZ,  \WW_2) \leq 2\Delta. 
\end{equation}
Define the $r$-arrays $B_1, \dots, B_t$ such that for each $\alpha \in [t]$ it holds that \begin{equation*}
Z^\alpha(x_{\hhh([r],r-1)}) =  \sum_{i_1, \dots, i_r=1}^{t_\RRR} B_\alpha(i_1,\dots, i_r) \prod_{l=1}^r \I_{R_{i_l}} (x_{\hhh([r]\setminus \{l\})}),
\end{equation*}
further define also the $r$-arrays $(B_\alpha^\beta)_{\alpha \in [t], \beta \in [k]}$ such that 
\begin{equation*}
Z^{\alpha, \beta}(x_{\hhh([r],r-1)}) =  \sum_{i_1, \dots, i_r=1}^{t_\RRR} B^\beta_\alpha(i_1,\dots, i_r) \prod_{l=1}^r \I_{R_{i_l}} (x_{\hhh([r]\setminus \{l\})})
\end{equation*}
for each $\alpha \in [t], \beta \in [k]$. Clearly, $B_\alpha(i_1,\dots, i_r) =\sum_{\beta=1}^{k}B^\beta_\alpha(i_1,\dots, i_r)$.

Our aim next is to find a $k$-coloring of $\WW_2$ so that the new $tk$-colored $r$-graphon obtained is close to $\hat \ZZ$. In order to produce the coloring we apply \Cref{ch6:hypercoloring} relying on (\ref{ch6:eq10}), hence we obtain $\hat \WW_2$ with $[\hat \WW_2, k]=\WW_2$. The proximity between the two $tk$-colored $r$-graphs can be quantified by 
\begin{equation*}
\sup_{\QQ,t_\QQ \leq t_\RRR} d_{\square,r,\QQ}(\hat \ZZ, \hat \WW_2) \leq 2k\Delta.
\end{equation*}            
The graphon $\hat \WW_2$ is a symmetric step function with steps that form the coarsest partition that both refines $\P'$ and $\RRR$, we denote this symmetric partition of $[0,1]^{\hhh([r-1])}$ by $\Ss$, its number of classes satisfies $t_{\Ss} = t_{\P'} t_{\RRR} \leq t_1t_2$. 

Let us define the $t_\P$-colored $(r-1)$-graphon $\ww=(w^1, \dots, w^{t_\P})$ that is obtained from the classes of the partition $\P$ by integrating along the coordinate corresponding to the set $[r-1]$, that is $w^i(x_{\hhh([r-1],r-2)}) = \int_{0}^1 \I_{P_i} (x_{\hhh([r-1])}) \du x_{[r-1]}$. In the same way we define the $t_\P$-colored $(r-1)$-graphon $\uu=(u^1, \dots, u^{t_\P})$ corresponding to the partition $\P'$, as well as the $[t_\P] \times [t_\RRR]$-colored $\hat \uu=(u^{i,j})_{i \in [t_\P], j \in [t_\RRR]}$, where it holds that $\uu =[\hat \uu, t_\RRR]$ and $\hat \uu$ is the $t_\RRR$-coloring of $\uu$ corresponding to the partition $\Ss$. Note that $\ww$,$\uu$, and $\hat \uu$ satisfy the usual symmetries, since their origin partitions were symmetric. As the partition $\P'$ was constructed via the same sampling procedure as $\VV$ and $\WW_2$, therefore it holds that $\uu=\G(q,\ww)$ and $u^i=\G(q, w^i)$ for each $i \in [t_\P]$. 

We can assert that due to the induction hypothesis there exists a $t_\RRR$-coloring $\hat \ww=(w^{i,j})_{i \in [t_\P], j \in [t_\RRR]}$ of $\ww$ that satisfies 
\begin{equation*}
d_\tw (\mu(q_0, \ww), \mu(q_0, \uu) \leq \delta/4
\end{equation*}
with probability at least $1-\delta/4$ for $q \geq q_\tw(r-1,\delta/4, q_0,t_1,t_2)$. 

We construct a $k$-coloring for $\WW_1$ next. Recall the proof of \Cref{ch6:hypercoloring}, therefore we have that 

\begin{equation} \label{ch6:eq11}
W_2^{\alpha,\beta} (x_{\hhh([r],r-1)}) =  \sum_{i_1, \dots, i_r=1}^{t_\P} \sum_{j_1, \dots, j_r=1}^{t_\RRR} A_\alpha(i_1,\dots, i_r) \left[\frac{B^\beta_\alpha(j_1,\dots, j_r)}{B_\alpha(j_1,\dots, j_r)}\I_{B_\alpha>0} + \frac{1}{k}\I_{B_\alpha=0} \right]  \prod_{m=1}^r \I_{P'_{i_{m}} \cap R_{j_{m}}} (x_{\hhh([r]\setminus \{m\})}), 
\end{equation}
and set $A_\alpha^\beta((i_1,j_1), \dots, (i_r,j_r))=A_\alpha(i_1,\dots, i_r) \left[\frac{B^\beta_\alpha(j_1,\dots, j_r)}{B_\alpha(j_1,\dots, j_r)}\I_{B_\alpha>0} + \frac{1}{k}\I_{B_\alpha=0} \right]$ for all $\alpha \in [t]$, $\beta \in [k]$ and $((i_1,j_1), \dots, (i_r,j_r)) \in ([t_\P] \times [t_\RRR])^r$.

We utilize the $t_\RRR$-coloring $\hat \ww$ of the $(r-1)$-graphons $\ww$ to construct a refined partition of $\P$ that resembles $\Ss$ in order to enable the construction of a $k$-coloring of $\WW_1$ along the same lines as in (\ref{ch6:eq11}).
Let \begin{align*}& P_{i,j}=\{ x \in [0,1]^{\hhh([r-1])} : \\ &  \qquad \sum_{l=1}^{i-1} w^{l}(x_{\hhh([r-1],r-2)}) +\sum_{l=1}^{j-1} w^{i, l}(x_{\hhh([r-1],r-2)}) \leq x_{[r-1]} < \sum_{l=1}^{i-1} w^{l}(x_{\hhh([r-1],r-2)}) +\sum_{l=1}^{j} w^{i, l}(x_{\hhh([r-1],r-2)})  \}\end{align*} for each $i \in [t_\P]$ and $j \in [t_\RRR]$. Let $\P''=(P_{i,j})_{i \in [t_\P],j \in [t_\RRR]}$.

Clearly, $(P_{i,j})_{j \in [t_\RRR]}$ is a $t_\RRR$-partition of the set $P_i$, and $w^{i,j}(x_{\hhh([r-1],r-2)})=\int_{P_{i,j}} \du x_{\hhh([r-1])}$. We are able now to describe the $k$-coloring of the $\WW_1$, define
  
\begin{equation} \label{ch6:eq12}
W_1^{\alpha,\beta} (x_{\hhh([r],r-1)}) =  \sum_{i_1, \dots, i_r=1}^{t_\P} \sum_{j_1, \dots, j_r=1}^{t_\RRR} A_\alpha^\beta((i_1,j_1), \dots, (i_r,j_r)) \prod_{m=1}^r \I_{P_{i_{m},j_m}} (x_{\hhh([r]\setminus \{m\})}).
\end{equation}
Note that $\hat \WW_1$ is a step functions with a step partition that refines  $\P$ into $\P''$, but the regularity property of $\WW_1$ allows for 
\begin{equation*}
d_{\square,r,\P''}(\UU, \WW_1) \leq \Delta/2.
\end{equation*}
Finally we employ \Cref{ch6:hypercoloring} that produces a $k$-coloring $\hat \UU$ of $\UU$, so that $\hat \UU$ is $[t] \times [k]$-colored, and
\begin{equation*}
d_{\square,r,\P''}(\hat\UU,\hat \WW_1) \leq \frac{k\Delta}{2}.
\end{equation*} 
It remains to show that this coloring satisfies the requirements of the current lemma for a large enough $q$.

In the first step of the coloring construction we employed the $r$-graphon version of the intermediate regularity lemma, \Cref{ch6:sregkhyper}, therefore we can assert that for each $tk$-colored $\FFF$ we have by means of \Cref{ch6:hypercount} that 
\begin{align*}
\sum_{\FFF \in \GGG_{q_0}^{r,kt}} |t(\FFF,\hat \VV)-t(\FFF,\hat \ZZ)| \leq \frac{(kt)^{q_0^r} q_0^r}{r!} d_{\square,r}(\hat \VV, \hat \ZZ) \leq \frac{\delta}{4k},
\end{align*}
so we can conclude that $d_\tw (\mu(q_0,\hat \VV),\mu(q_0,\hat \ZZ)) \leq \frac{\delta}{8k}$. 

In the next step, as a consequence of \Cref{ch6:hypercoloring} and \Cref{ch6:hypercountcor}, we have for $\hat \WW_2$ that $d_\tw (\mu(q_0,\hat \ZZ),\mu(q_0,\hat \WW_2 )) \leq \delta/4$.  

We will next elaborate on the correctness of the inductive step of the construction.
Let us consider the $tk$-colored random $r$-graph $\G(q_0, \hat \WW_1)$, it is generated by the independent uniformly distributed $[0,1]$-valued random variables $\{Y_S: S \in \hhh([q_0], r) \}$. The color of each edge $e=\{e_1, \dots, e_2\} \in {[q_0] \choose r}$ is decided by determining first the unique tuple (up to coordinate permutations) $((i_1,j_1), \dots, (i_r,j_r)) \in ([t_\P]\times [t_\RRR])^{r}$ such that $(Y_{S})_{S \in \hhh(e \setminus \{e_l\})} \in P_{i_l,j_l}$, and then check for which pair $\alpha \in [t]$, $\beta \in [k]$ it holds that 
\begin{align*}
\sum_{l=1}^{\alpha-1} A_l((i_1,j_1), \dots, (i_r,j_r))&+ \sum_{l=1}^{\beta-1} A^l_\alpha((i_1,j_1), \dots, (i_r,j_r)) < Y_e \\ &\leq \sum_{l=1}^{\alpha-1} A_l((i_1,j_1), \dots, (i_r,j_r))+\sum_{l=1}^{\beta} A^l_\alpha((i_1,j_1), \dots, (i_r,j_r)), \end{align*} 
then add the color $(\alpha,\beta)$ to $e$ with corresponding index. It is convenient to view this process as first randomly $t_{\P''}$-coloring an  $(r-1)$-uniform template hypergraph $\GG_1$, whose edges are the simplices of the original edges, here we add a color $(i,j)$ to an $(r-1)$-edge $e'$ whenever $(Y_{S})_{S \in \hhh(e')} \in P_{i,j}$, and conditioned on $\GG_1$ we subsequently make independent choices for each edge to determine their color based on the arrays $A_\alpha^\beta$ by means of the random variables $\{Y_S: S \in {[q_0] \choose r} \}$ at the top level. 

Let us turn to the $tk$-colored  $\G(q_0,\hat \WW_2)$, the above description of the random process generating this object remains conceptually valid also for this random graph, the $r$-arrays $A_\alpha^\beta$ are identical to the case above, only the partition $\P''$ has to be altered to $\Ss$. Similarly as above, we introduce the random $(r-1)$-uniform $t_{\P''}$-colored hypergraph  $\GG_2$ that is generated as above adapted to $\G(q_0, \hat \WW_2)$. That means that the $(r-1)$-edges are colored by indices of the classes that form the partition $\Ss$ through the process that generates $\G(q_0, \hat\WW_2)$, see above. The key observation here is that conditioned on $\GG_1=\GG_2$, one can couple  $\G(q_0, \hat \WW_1)$ and $\G(q_0, \hat \WW_2)$ so that the two random graphs coincide with conditional probability $1$. Recall that a coupling is only another name for a joint probability space for the two random objects with the marginal distributions following $\mu(q_0, \WW_1)$ and $\mu(q_0,\WW_2)$ respectively. As the conditional distributions for the choices of colors for the $r$-edges are identical provided that $\GG_1=\GG_2$ the coupling is trivial. In order to construct a good unconditional coupling we require another coupling, now of $\GG_1$ and $\GG_2$, so that $\PPP(\GG_1\neq\GG_2)$ is negligible small for our purposes, and whose existence is exactly what the induction hypothesis ensures, when $q$ is large enough.   

As $q \geq q_\tw(r-1,\delta/4, q_0,t_1,t_2)$, the induction hypothesis enables us to use that there exist for any $\hat \uu$ a $\hat \ww$ so that $d_\tw(\mu(q_0,\hat \uu),\mu(q_0,\hat \ww )) \leq \delta/4$ holds with probability at least $1-\delta/4$ for each $\hat \uu$ simultaneously, which in turn implies that there is a coupling of the $t_1t_2$-colored random $(r-1)$-graphs $\GG_1$ and $\GG_2$ so that $\PPP(\GG_1 \neq \GG_2) \leq \delta/2$. 

It follows that there exists a coupling of $\G(q_0,\hat \WW_1)$ and $\G(q_0,\hat \WW_2)$  such that $\PPP(\G(q_0,\hat \WW_1) \neq \G(q_0,\hat \WW_2)) \leq \delta/2$ due to the discussion above, which in turn implies $$d_\tw (\mu(q_0,\hat \WW_1),\mu(q_0,\hat \WW_2 )) \leq \delta/4.$$ Since $\hat \WW_1$ has at most $t_\P t_2$ steps, another application of \Cref{ch6:hypercoloring} provides the bound $$d_\tw(\mu(q_0,\hat \WW),\mu(q_0,\hat \UU )) \leq \delta/16,$$
as $d_{\square,r}(\hat\UU,\hat \WW_1) \leq \frac{k\Delta}{2}.$

Evoking the triangle inequality and summing up the upper bounds on the respective deviations we conclude that 
\begin{align*}
d_\tw(\mu(q_0,&\hat \VV),\mu(q_0,\hat \UU)) \leq d_\tw(\mu(q_0,\hat \VV),\mu(q_0,\hat \ZZ))+ d_\tw(\mu(q_0,\hat \ZZ),\mu(q_0,\hat \WW_2))\\&+d_\tw(\mu(q_0,\hat \WW_2),\mu(q_0,\hat \WW_1))+d_\tw(\mu(q_0,\hat \WW_1),\mu(q_0,\hat \UU))
\leq \left(\frac{1}{8k}+\frac{1}{4}+\frac{1}{4}+\frac{1}{16}\right)\delta <\delta,
\end{align*} the overall error probability is at most $\delta/2+\Delta/2$, which is at most $\delta$.

\end{proof}

With \Cref{ch6:mainhyperlemma} at hand we can overcome the difficulties caused by properties of the $r$-cut norm for $r \geq 3$ in contrast to the case $r=2$, we turn to prove the main result of the paper.

\begin{proof}[Proof of \Cref{ch6:main}]
 We regard simple hypergraphs as $2$-colored $r$-graphs, in the following the term simple should be understood this way at each appearance. Let the $2k$-colored witness parameter of the nondeterministically testable $r$-graph parameter $f$ be denoted by $g$, whose sample complexity is at most $q_g(\varepsilon)$ for each proximity parameter  $\varepsilon > 0$. Set $d(r,\varepsilon, q_0,k,t)= \frac{[q_g(\varepsilon)^r\ln(tk)-\ln(\varepsilon)][2(tk)^{q_0^r}q_0^2]}{\varepsilon^2}$.Let $\varepsilon>0$ be fixed and define $q_f(\varepsilon)=\max\{q_\tw(r,\varepsilon/4, q_g(\varepsilon/4),k,3); \frac{4}{\varepsilon}q^2_g(\varepsilon/2);d(r,\varepsilon/4,q_g(\epsilon/4),k,2) \}$. We will show that for every $q \geq q_f(\varepsilon)$  the condition 
 \begin{align*}
 \PP(|f(G)-f(\G(q_f(\varepsilon),G)| >\varepsilon) < \varepsilon.
 \end{align*}
 is satisfied for each $G$. Let $q \geq q_f(\varepsilon)$ arbitrary but fixed and $G$ be a fixed simple graph on $n$ vertices.

First we show that $f(\G(q,G)) \geq f(G) - \varepsilon/4$ with probability at least $1-\varepsilon/4$. For this let us select a $k$-coloring $\GG$ of $G$ such that $f(G)=g(\GG)$, then the random $k$-colored graph $\FFF=\G(q,\GG)$ is a $k$-coloring of $\G(q,G)$, therefore $f(\G(q,G)) \geq g(\FFF)$, but since $q \geq q_q(\varepsilon/4)$ we know from the testability of $g$ that $g(\FFF) \geq g(\GG) - \varepsilon/4$ with probability at least $1-\varepsilon/4$, which verifies our claim.

The more difficult part is to show that $f(\G(q,G)) \leq f(G) + \varepsilon$ with failure probability at most $\varepsilon/2$. Let us  denote the random $r$-graph $\G(q,G)$ by $F$. We claim that with probability at least $1-\varepsilon/2$ there exists for any $k$-coloring $\FFF$ of $F$ there exists a $k$-coloring $\GG$ of G such that $|g(\FFF)-g(\GG)| \leq \varepsilon$, this suffices to verify the statement of the theorem.

Our proof exploits that the difference of the $g$ values between two colored $r$-graphs $\FFF$ and $\GG$ can be upper bounded by
\begin{align*}
|g(\FFF)-g(\GG)|  & \leq |g(\FFF) - g(\G(q_g(\varepsilon/4),\FFF))|+|g(\GG)-g(\G(q_g(\varepsilon/4),\GG))| \leq \varepsilon/2,
\end{align*}
whenever there exists a coupling of the two random $2k$-colored $r$-graphs $\G(q_g(\varepsilon/4),\GG)$ and $\G(q_g(\varepsilon/4),\FFF)$ appearing in the above formula such that they  are equal with probability larger than $\varepsilon/2$. Set $q_0=q_g(\varepsilon/4)$. We will show that with high probability fore every $\FFF$ there exists a $\GG$ that satisfies the previous conditions.

Recall that coupling is a probability space together with the random $r$-graphs $\GG_1$ and $\GG_2$ defined on it such that $\GG_1$ has the same marginal distribution as $\G(q_0,\GG)$ and $\GG_2$ has the same as $\G(q_0,\FFF)$, their joint distribution is constructed in a way that serves our current purposes by maximizing the probability that they coincide. When the target spaces are finite as in our case then a coupling that satisfies this condition can be easily constructed whenever 
$d_\tw (\mu(q_0,\GG), \mu(q_0,\FFF)) \leq 1 - \varepsilon/2$, see (\ref{ch6:eq101}). 

By \Cref{ch6:mainhyperlemma} for $3$-colored $r$-graphs (there are $3$ types of entries in the graphon representation of simple $r$-graphs, edges, non-edges, and diagonal elements) it follows that with probability at least  $1-\varepsilon/4$ for each $\FFF$  there exists a $3k$-colored $\UU$ with $[\UU,k]=W_G$ such that $d_\tw (\mu(q_0,\UU), \mu(q_0,\WW_\FFF)) \leq \varepsilon/4$. Let  us condition on this event and let $\FFF$ be fixed. From (\ref{ch6:eq10}) we know that $d_\tw (\mu(q_0, \GG), \mu(q_0, \WW_\GG)) \leq q_0^2/n\leq \varepsilon/4$ and $d_\tw (\mu(q_0, \FFF), \mu(q_0, \WW_\FFF)) \leq q_0^2/q \leq \varepsilon/4$. Further, it follows from our condition above that there exists a $3k$-colored $\UU$ that induces a fractional coloring of $G$, and $d_\tw (\mu(q_0,\UU), \mu(q_0,\WW_\FFF)) \leq \varepsilon/4$. It remains to produce a $3k$-coloring of $W_G$ from any fixed $3k$-colored $\UU$ ($k$ of the colors of $\UU$ correspond exclusively to diagonal cubes, so can be neglected). We do this by randomization, let $(X_{\{i\}})_{i \in [n]}$ be independent uniform random variables distributed on $[\frac{i-1}{n}, \frac{i}{n}]$, and let $(X_S)_{S \in \hhh([n],r)\setminus \hhh([n],1)}$ be independent uniform random variables on $[0,1]$. We can define $\WW_\GG$ to take the color $\UU(X_{\hhh(e)})$ on the set $ [\frac{e_1-1}{n}, \frac{e_1}{n}] \times \dots \times [\frac{e_r-1}{n}, \frac{e_r}{n}] \times [0,1] \times \dots \times [0,1]$ for $e=\{e_1, \dots, e_r\}$. For any fixed $\HHH \in \GGG_{q_0}^{r,2k}$ basic martingale methods deliver 
\begin{align*}
\PPP(|t(\HHH, \WW_\GG)-t(\HHH,\UU)| \geq \delta) \leq 2 \exp(-\frac{\delta^2 n}{2q_0^2})\end{align*}
 for any $\delta>0$, therefore when setting $\delta=\frac{\varepsilon}{4 (2k)^{q_0^r} }$ we get that
 \begin{align*}
 d_\tw (\mu(q_0,\UU), \mu(q_0,\WW_\GG)) = \frac{1}{2}\sum_{\HHH \in \GGG_{q_0}^{r,2k}}|t(\HHH, \WW_\GG)-t(\HHH,\UU)|\leq \varepsilon/4 
 \end{align*} 
 with probability at least $1-\varepsilon/4$, since $n \geq q \geq d(r,\varepsilon/4,q_0,k,2)= \frac{[q_0^r\ln(2k)-\ln(\varepsilon/4)][32(2k)^{q_0^r}q_0^2]}{\varepsilon^2}$. Summing up terms gives
 \begin{align*}
d_\tw(\mu(q_0,&\FFF),\mu(q_0,\GG)) \leq d_\tw(\mu(q_0,\FFF),\mu(q_0,\WW_\FFF))+ d_\tw(\mu(q_0, \WW_\FFF),\mu(q_0,\UU))\\&+d_\tw(\mu(q_0,\UU),\mu(q_0,\WW_\GG))+d_\tw(\mu(q_0,\WW_\GG),\mu(q_0,\GG))
\leq \varepsilon,
\end{align*}
with failure probability at most $\varepsilon/2$, this concludes our proof.

\end{proof}

\section{Nondeterministically testable hypergraph properties}\label{ch6:sec.prop}

The concept of nondeterministic testing was originally introduced for testing properties by \citet{LV}, and remarkable progress has been made in that context, see \cite{GS} and \cite{LV}, the estimation of parameters, which is our main issue in this paper, is in close relationship to that concept. For related developments in combinatorial property testing using regularity methods we refer to \cite{AFNS}.

 We present the definition of testability of properties in the usual and in the nondeterministic sense and construct a tester from the tester of the witness property with the aid of \Cref{ch6:mainhyperlemma} that achieves the same sample complexity as in the parameter testing case. This result connects our contribution to previous efforts more directly and answers the question posed in \cite{LV} asking if the equivalence of the two testability notions persists for uniform hypergraphs of higher order similar to the case of graphs.

\begin{definition}\label{ch6:def.proptest}
An $r$-graph property $\P$ is testable, if there exists another $r$-graph property $\hat \P$ called the sample property, such that
\begin{enumerate}[(a)]
\item $\PPP(\G(q,G)) \in \hat \P) \geq \frac{2}{3}$ for every $G \in \P$ and $q\geq 1$, and
\item for every $\varepsilon > 0$ there is an integer  $q_\P(\varepsilon)\geq 1$ such that for every $q \geq q_\P(\varepsilon)$ and every $G$ that is $\varepsilon$-far from $\P$ we have  that $\PPP(\G(q,G)) \in \hat \P) \leq \frac{1}{3}$.
\end{enumerate}
Testability for colored $r$-graphs is defined analogously.
\end{definition}

We remark that $\varepsilon$-far here means that one has to modify at least $\varepsilon|V(G)|^2$ edges in order to obtain an element of $\P$. Note that $\frac{2}{3}$ and $\frac{1}{3}$ in the definition can be replaced by arbitrary constants $1 > a >b >0$, this change may alter the corresponding certificate $\hat \P$ and the function $q_\P$, but not the characteristic of $\P$ being testable or not. Let $\P_n$ denote the elements of $\P$ of size $n$.

Next we formulate the definition of nondeterministic testability.

\begin{definition}
An $r$-graph property $\P$ is nondeterministically testable, if there exists an integer $k\geq 1$ and a $2k$-colored $r$-graph property $\QQ$ called the witness property that is testable in the sense of \Cref{ch6:def.proptest} satisfying $[\QQ,k]=\{[\GG,k] | \GG \in \QQ \}=\P$ (see \Cref{ch6:def.ndpartest} above for the discoloring operation).
\end{definition}

We formulate next the main theorem in this section.
\begin{theorem}
Every nondeterministically testable $r$-graph property is testable.
\end{theorem}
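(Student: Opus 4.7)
My plan is to take the sample property $\hat\P := [\hat\QQ^*, k]$, where $\hat\QQ^*$ is an amplified version of the witness sample property $\hat\QQ$, and to mimic the architecture of the proof of \Cref{ch6:main} with \Cref{ch6:mainhyperlemma} as the principal transfer tool. The remark after \Cref{ch6:def.proptest} permits first replacing $\hat\QQ$ by a majority-voted boost with gap $(1 - \eta, \eta)$ for a chosen $\eta < 1/6$, and then self-reflecting to $\hat\QQ^* := \{F : \mu(q_0, F)(\hat\QQ) \geq 1/2\}$ with $q_0 := q_\QQ(\varepsilon)$; a reverse Markov estimate confirms that $\hat\QQ^*$ is still a sample property for $\QQ$ (with gap constants $1 - 2\eta \geq 2/3$ and $2\eta \leq 1/3$), and it carries the key feature that $F \in \hat\QQ^*$ is literally the probabilistic statement $\mu(q_0, F)(\hat\QQ) \geq 1/2$.

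Completeness is then direct: for $G \in \P$ pick a witness $\GG \in \QQ$ with $[\GG, k] = G$; for every $q \geq 1$ the sample $\G(q, \GG)$ is a $k$-coloring of $\G(q, G)$, and the amplified testability of $\QQ$ places $\G(q, \GG)$ in $\hat\QQ^*$ with probability at least $1 - 2\eta$, which is at least $2/3$, so $\G(q, G) \in \hat\P$ with probability at least $2/3$. For soundness, fix $G$ that is $\varepsilon$-far from $\P$ and observe that every $k$-coloring $\GG$ of $G$ is then $\varepsilon$-far from $\QQ$: otherwise fewer than $\varepsilon n^r$ color modifications would place $\GG$ into $\QQ$, and the resulting $k$-discoloring would be an element of $\P$ within Hamming distance $\varepsilon n^r$ of $G$, contradicting the hypothesis.

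Arguing by contradiction, assume that $\PP(\G(q, G) \in \hat\P) > 1/3$ for arbitrarily large $q$. Applying \Cref{ch6:mainhyperlemma} to $\UU = \WW_G$ (viewed as a $3$-colored $r$-graphon accommodating the diagonal color $\iota$) with sampling depth $q_0$, palette size $k$, and small $\delta$, for $q \geq q_\tw(r, \delta, q_0, 3, k)$ the lemma ensures with probability at least $1 - \delta$ that every $k$-coloring $\hat\VV$ of $\WW_{\G(q, G)}$ admits a companion $k$-coloring $\hat\UU$ of $\WW_G$ with $d_\tw(\mu(q_0, \hat\VV), \mu(q_0, \hat\UU)) \leq \delta$. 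For small $\delta$ the acceptance event and the lemma's conclusion occur jointly, so I pick a $k$-coloring $\FFF$ of $\G(q, G)$ with $\FFF \in \hat\QQ^*$, lift it to the accompanying $\hat\UU$, and round $\hat\UU$ to an integral $k$-coloring $\GG$ of $G$ exactly as at the close of the proof of \Cref{ch6:main}. A martingale concentration bound combined with \eqref{ch6:eq100} then yields $d_\tw(\mu(q_0, \GG), \mu(q_0, \FFF)) \leq O(\delta)$. The conclusion follows because $\FFF \in \hat\QQ^*$ forces $\mu(q_0, \FFF)(\hat\QQ) \geq 1/2$ whereas $\GG$ being $\varepsilon$-far from $\QQ$ forces $\mu(q_0, \GG)(\hat\QQ) \leq \eta$; the total-variation inequality therefore implies $1/2 - \eta \leq O(\delta)$, impossible for small enough $\delta$ and $\eta$. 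The main obstacle is precisely this conversion of the discrete membership $\FFF \in \hat\QQ$ into a probabilistic condition comparable to $\mu(q_0, \GG)(\hat\QQ)$ via total variation, which the two-stage amplification to $\hat\QQ^*$ resolves; with that in hand, the rounding and transfer steps mirror those in the parameter case.
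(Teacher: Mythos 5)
Your construction follows the paper's architecture quite closely: a ``self-reflected'' sample property whose membership is the probabilistic statement that a depth-$q_0$ subsample lies in $\hat\QQ$ with probability above a fixed threshold, completeness via subsample composition (your reverse-Markov step is a mild streamlining of the paper's concentration estimate \eqref{ch6:eq62}), farness inheritance from $\P$ to every $k$-coloring, and soundness via \Cref{ch6:mainhyperlemma} together with the randomized rounding and the chain of total-variation bounds from the proof of \Cref{ch6:main}.

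There is, however, one genuine gap: your sample property $\hat\P=[\hat\QQ^*,k]$ is built from $q_0:=q_\QQ(\varepsilon)$ and hence depends on the proximity parameter $\varepsilon$. \Cref{ch6:def.proptest} quantifies differently: a \emph{single} property $\hat\P$ must satisfy condition (a) for every $G\in\P$ and every $q\geq 1$, and condition (b) for every $\varepsilon>0$ simultaneously. As written, you only establish the weaker statement ``for every $\varepsilon$ there is an $\varepsilon$-dependent certificate,'' and in addition condition (a) becomes problematic for $q<q_0$, where the defining event $\mu(q_0,F)(\hat\QQ)\geq 1/2$ asks to sample more vertices than $F$ possesses. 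The paper removes the $\varepsilon$-dependence by letting the inner sampling depth vary with the size of the sampled graph: it sets $\varepsilon_n$ to be the infimum of those $\delta>0$ with $n\geq\max\{q_\tw(r,1/10,q_\QQ(\delta),3,k),\,100q_\QQ^2(\delta),\,d(r,1/10,q_\QQ(\delta),k,2)\}$ and defines $\hat\P_n$ by the threshold $t(\hat\QQ_{q_\QQ(\varepsilon_n)},\HHH)\geq 3/5$; soundness then closes because $\varepsilon_l\leq\varepsilon$, hence $q_\QQ(\varepsilon_l)\geq q_\QQ(\varepsilon)$, once $l\geq q_\P(\varepsilon)$. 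You need this (or an equivalent uniformization over sample sizes) to meet the paper's definition; with it, the rest of your argument goes through. A minor phrasing repair: in the soundness step the contradiction hypothesis should be ``there exists some $q\geq q_\P(\varepsilon)$ with $\PPP(\G(q,G)\in\hat\P)>1/3$,'' not ``for arbitrarily large $q$''; your per-$q$ derivation in fact supports the former once $q_\P(\varepsilon)$ is taken to be the maximum of the thresholds you invoke.
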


\begin{proof}
Let $\P$ be a nondeterministically testable property with witness property $\QQ$ of $2k$-colored $r$-graphs. We employ the combinatorial language with counting subgraph densities when referring to $\QQ$ and its testability, and the probabilistic language of picking random subgraphs in a uniform way when handling $\P$ in order to facilitate readability.  

Let $\hat\QQ$ be the corresponding sample property that certifies the testability of $\QQ$ and $q_\QQ$ be the sample complexity corresponding to the thresholds $1/5$ and $4/5$, that is
\begin{enumerate}[(i)]
\item if $\GG \in \QQ$, then for every and $q\geq 1$ we have $t(\hat \QQ_q, \GG) \geq 4/5$, and
\item for every $\varepsilon > 0$ if $\GG$ is $\varepsilon$-far from $\QQ$, then for every $q \geq q_\QQ(\varepsilon)$ we have that $t(\hat \QQ_q, \GG) \leq 1/5$.
\end{enumerate}
Our task is to construct a property $\hat \P$ together with a function $q_\P$ such that they fulfill the conditions of \Cref{ch6:def.proptest}. We are free to specify the error thresholds by the remark after \Cref{ch6:def.proptest}, we set them to $2/5$ and $3/5$. 

Let $n$ be a positive integer and let $\varepsilon_n >0$ be the infimum of all positive reals $\delta$ that satisfy $n \geq \max\{q_\tw(r,1/10,q_Q(\delta),3,k);100 q^2_\QQ(\delta);d(r,1/10,q_\QQ(\delta),k,2)\}$ from \Cref{ch6:mainhyperlemma}.
Define for each $n$ the set 
\begin{align*}
\hat \P_n  = \{H \in \GGG_n^r | \textrm{there exists a $k$-coloring $\HHH$ of $H$ such that } t(\hat Q_{q_Q(\varepsilon_n)},\HHH) \geq 3/5\},
\end{align*}
and let $\hat \P = \cup_{n=1}^\infty \hat \P_n$. We set $q_\P(\varepsilon) = \max\{q_\tw(r,1/10,q_Q(\varepsilon),2,k);100q^2_\QQ(\varepsilon);d(r,1/10,q_\QQ(\delta),k,2) \}$. We are left to check if the two conditions for testability of $\P$ hold with $\hat \P$ and $q_\P$ described as above. Assume for the rest of the proof that $n \geq q_\P(\varepsilon_n)$ for each $n$ for simplicity, the general case follows along the same lines with some technical difficulties.

First let $G \in \P$, we have to show that for every $q \geq 1$ integer we have that $\G(q, G)\in \hat \P_q$ with probability at least $3/5$. 

The condition $G \in \P$ implies that there exists a a $k$-coloring $\GG$ of $G$ such that $\GG \in \QQ$. From the testability of $\QQ$ it follows that $t(\hat \QQ_l, \G(l, \GG)) \geq 4/5$ for any $l \geq 1$. Let $q \geq 1$ be arbitrary, and let $F$ denote $\G(q,G)$, furthermore let $\FFF=\G(q,\GG)$ generated by the same random process as $F$, so $\FFF$ is  a $k$-coloring of $F$. We know by a standard sampling argument that
\begin{align}\label{ch6:eq62}
\PPP(| t(\hat \QQ_{q_\QQ(\varepsilon_q)},\GG)-t(\hat \QQ_{q_\QQ(\varepsilon_q)},\FFF)| \geq 1/5) \leq 2 \exp\left(- \frac{q}{50 q^2_\QQ(\varepsilon_q)} \right),
\end{align}
and the right hand side of (\ref{ch6:eq62}) is less than $2/5$ by the choice of $\varepsilon_q$, since by definition $q \geq 100q^2_\QQ(\varepsilon_q)$. It follows that $t(\hat \QQ_{q_\QQ(\varepsilon_q)},\FFF) \geq 3/5$ with probability at least $3/5$, so by the definition of $\hat P$ we have that $F \in \hat \P_q$ with probability at least $3/5$, which is what we wanted to show.

To verify the second condition we proceed by contradiction. Suppose that $G$ is $\varepsilon$-far from $\P$, but at the same time  there exists an $l \geq q_\P(\varepsilon)$ such  that $F \in \hat \P_l$ with probability larger than $2/5$, where $F=\G(l, G)$.

 In this case, the latter condition implies that with probability larger than $2/5$  there exists a $k$-coloring $\FFF$ of $F$  such that  $t(\hat \QQ_{q_\QQ(\varepsilon_{l})},\FFF) \geq 3/5$. By \Cref{ch6:mainhyperlemma} and the proof of \Cref{ch6:main} there exists a $k$-coloring $\GG$ of $G$ such that $d_\tw(\mu(q_\QQ(\varepsilon_{l}),\FFF),\mu(q_\QQ(\varepsilon_{l}),\GG)) \leq 22/100$ with probability at least $4/5$, in particular
\begin{align*}
|t(\hat \QQ_{q_\QQ(\varepsilon_{l})},\FFF) - t(\hat \QQ_{q_\QQ(\varepsilon_{l})},\GG) | \leq \frac{22}{100},
\end{align*}
which implies that with probability at least $1/5$ there exist a $\GG$ such that  $t(\hat \QQ_{q_\QQ(\varepsilon_{l})},\GG) > \frac{3}{10}$. 
We can drop the probabilistic assertion and can say that there exists a $k$-coloring $\GG$ of $G$ such that $t(\hat \QQ_{q_Q(\varepsilon_{l})},\GG) > \frac{3}{10}$, because $G$ and the density expression are deterministic.  

On the other hand, the fact that $G$ is $\varepsilon$-far from $\P$ implies that any $k$-coloring $\GG$ of $G$ is $\varepsilon$-far from $\QQ$, which means that $t(\hat \QQ_{q},\GG) \leq 1/5$ for any $k$-coloring $\GG$ of $G$ and $q \geq q_Q(\varepsilon)$. But we know that $q_Q(\varepsilon_{l}) \geq q_Q(\varepsilon)$, since $\varepsilon_l \leq \varepsilon$  which delivers the contradiction. The last inequality is the consequence of our definitions, $\varepsilon_l$ is the infimum of the $\delta>0$ that satisfy $l \geq  q_\P(\delta)$, and on the other hand, $l \geq  q_\P(\varepsilon)$.

\end{proof}

\section{Further research}\label{ch6:sec.further}

It would be very interesting to shed light on the explicit sample complexity  bounds for the witness parameter in \Cref{ch6:main}. The only ingredient of our proof which is non-effective is the part which deals with the ultralimit method in the proof of \Cref{ch5:samp},  to our knowledge an effective proof regarding this result is only known for $r=2$.

\subsection*{Acknowledgement}
We thank G\'abor Elek for a number of interesting discussions.

\bibliographystyle{notplainnat}
\bibliography{thesis_refs}

\end{document}